\documentclass[12pt]{article}

\usepackage{amsfonts}
\usepackage{amssymb}
\usepackage{amsmath}
\usepackage{amsthm}
\usepackage{bbm}
\usepackage{graphicx}
\usepackage{epstopdf}
\usepackage[margin=.75in]{geometry}
\usepackage[allcolors=black,bookmarks=false,breaklinks,colorlinks=true]{hyperref}
\usepackage{natbib}

\newcommand{\norm}[1]{\left\lVert#1\right\rVert}

\numberwithin{equation}{section}
\renewcommand{\baselinestretch}{1.2}
\allowdisplaybreaks

\newtheorem{asu}{{\sc Assumption}}
\newtheorem{corollary}{{\sc Corollary}}
\newtheorem{thm}{\sc Theorem}

\newtheorem{pro}{\sc Proposition}
\newtheorem{lem}{\sc Lemma}
\newtheorem*{assuB}{\sc Assumption B}
\newtheorem*{assuBtilde}{\sc Assumption $\tilde{\mbox{B}}$}
\newtheorem{definition}{{\sc Definition}}

\newcommand{\be}{\begin{eqnarray}}
\newcommand{\ee}{\end{eqnarray}}
\newcommand{\bes}{\begin{eqnarray*}}
\newcommand{\ees}{\end{eqnarray*}}

\makeatletter
\renewenvironment{proof}[1][\proofname]{%
  \par
  \pushQED{\qed}%
  \normalfont
  \topsep6\p@\@plus6\p@\relax
  \trivlist
  \item[\hskip\labelsep
        \textsc{#1}\@addpunct{.}]%
}{%
  \popQED\endtrivlist\@endpefalse
}
\makeatother

\begin{document}

\title{The realized copula of volatility\thanks{Christensen appreciates funding from the Independent Research Fund Denmark to support this work (DFF 1028–00030B). Zhi Liu’s research is supported by MYRG-GRG2024-00190-FST-UMDF, MYRG-GRG2025-00093-FST, and APAEM Seed Grant in Financial Econometrics from the University of Macau. Potiron acknowledges financial support from the Japanese Society for the Promotion of Science (23H00807).}}
\author{Kim Christensen\thanks{Department of Economics and Business Economics, Aarhus University, Denmark. E-mail: \url{kim@econ.au.dk}} \thanks{Research fellow at the Danish Finance Institute (DFI).} \and Wenjing Liu\thanks{Department of Mathematics, University of Macau, Macau SAR, China. E-mail: \url{yc27481@um.edu.mo}} \and Zhi Liu\thanks{Department of Mathematics, University of Macau, Macau SAR, China. E-mail: \url{liuzhi@um.edu.mo}} \and Yoann Potiron\thanks{Faculty of Business and Commerce, Keio University, Japan. E-mail: \url{potiron@fbc.keio.ac.jp}}}
\date{April, 2026}

\maketitle

\begin{abstract}
We study a new measure of codependency in the second moment of a continuous-time multivariate asset price process, which we name the realized copula of volatility. The statistic is based on local volatility estimates constructed from high-frequency asset returns and affords a nonparametric estimator of the empirical copula of the latent stochastic volatility. We show consistency of our estimator with in-fill asymptotic theory, either with a fixed or increasing time span. In the latter setting, we derive a functional central limit theorem for the empirical process associated with the measurement error of the time-invariant marginal copula of volatility. We also develop a goodness-of-fit test to evaluate hypotheses about the shape of the latter. In a simulation study, we demonstrate that our estimator is a good proxy of both the empirical and marginal copula of volatility, even with a moderate amount of high-frequency data recorded over a relatively short sample. The goodness-of-fit test is found to exhibit size control and excellent power. We implement our framework on high-frequency transaction data from futures contracts that track the U.S. equity and treasury bond market. A Gumbel copula is found to offer a near-perfect bind between the realized variance processes in these data.

\vspace*{0.5cm}

\bigskip \noindent \textbf{JEL Classification}: C14; C32; C58; C80.

\medskip \noindent \textbf{Keywords}: Copula; empirical process; functional central limit theorem; high-frequency data; nonparametric estimation; stochastic volatility; tail dependence.
\end{abstract}

\vfill

\thispagestyle{empty}

\pagebreak

\section{Introduction} \setcounter{page}{1}

Modeling the comovement of stochastic volatility across several assets is paramount to risk measurement and management, portfolio allocation, and multi-asset and derivative pricing. It has long been recognized that financial markets are interconnected, causing asset return volatilities---and correlations---to surge in tandem during financial distress, triggering significant volatility spillover effects, posing a potential systemic risk \citep[see, e.g.,][and references therein]{diebold-yilmaz:09a, billio-getmansky-lo-pelizzon:12a}. An understanding of volatility codependency is also important for pricing of volatility risk (i.e., the variance risk premium) across asset classes \citep{carr-wu:09a, bollerslev-todorov:11a}. The importance of the second moment structure of asset returns is further fueled by the so-called leverage effect that captures a negative association between return and volatility \citep{black:76a, christie:82a}. Thus, from a risk aggregation view the relevant target is often the dependence structure in second moment of returns (namely volatilities and correlations), rather than in the raw returns.

In practice, the assessment of volatility comovements is complicated by the fact that volatility is unobserved, so it requires an empirical route based on an observable proxy. It therefore remains a relatively unexplored area. In recent years, however, the availability of high-frequency data has made it possible to estimate volatility nonparametrically with the realized variance \citep[e.g.,][]{andersen-bollerslev:98a, barndorff-nielsen-shephard:02a}. Most of this literature concentrates on estimation of integrated measures of variation, defined as an integral of a smooth function of the latent spot variance process \citep[see][]{jacod-rosenbaum:13a}. However, as \citet{li-todorov-tauchen:13a} point out, the mapping from the probability distribution of spot volatility to the distribution of integrated volatility is not one-to-one. Thus, the ``integrated'' approach may be partially informative about some distributional aspects of volatility and codependency, but it faces a limitation due to the inherent ``identification failure.'' This may weaken, or outright eliminate, power for testing against relevant alternatives. They suggest to work directly with the pathwise distribution of spot volatility via a so-called occupation measure \citep[e.g.,][]{geman-horowitz:80a}, which recovers the empirical distribution function of the point-in-time volatility process \citep[see also][]{li-todorov-tauchen:16a, li-todorov-tauchen:16b}.\footnote{In addition, \citet{christensen-thyrsgaard-veliyev:19a} and \citet{zhang-li-bollerslev:22a} made the univariate framework robust to microstructure noise.} Regardless, the volatility occupation literature is entirely univariate and has concentrated on the marginal distribution, rather than looking at the interrelationship between various volatility processes.

The starting point of our analysis is the remarkable result from probability theory known as Sklar's theorem \citep[][]{sklar:59a}. It states that for any continuous random vector $(X_{1}, \dots, X_{d})$ with cumulative distribution function $H(x_{1}, \dots, x_{d}) = \mathbb{P}(X_{1} \leq x_{1}, \dots, X_{d} \leq x_{d})$, there exists a unique function $C:[0,1]^{d} \rightarrow [0,1]$ called a \textit{copula}, such that $H(x_{1}, \dots, x_{d}) = C(H_{1}(x_{1}), \dots, H_{d}(x_{d}))$, where $H_{i}(x_{i}) = \mathbb{P}(X_{i} \leq x_{i})$ is the marginal distribution of $X_{i}$.\footnote{The concise mathematical definition is as follows: A $d$-dimensional copula $C:[0,1]^{d} \rightarrow [0,1]$ is the cumulative distribution function of a random vector $(U_{1}, \dots, U_{d})$, where the marginal distribution of $U_{i}$ is uniform, i.e. $C(u_{1}, \dots, u_{d}) = \mathbb{P}(U_{1} \leq u_{1}, \dots, U_{d} \leq u_{d})$ with $\mathbb{P}(U_{i} \leq u_{i}) = u_{i}$, for $i = 1, \dots, d$ and $0 \leq u_{i} \leq 1$.} Conversely, given $C(u_{1}, \dots, u_{d})$ and $H_{1}(x_{1}), \dots, H_{d}(x_{d})$, $H(x_{1}, \dots, x_{d}) = C(H_{1}(x_{1}), \dots, H_{d}(x_{d}))$ is a probability distribution. Copulas provide a marginal-free representation of dependence in a multivariate setting, and they enable the researcher to model the marginal features of the data separately.\footnote{The term copula originates from the Latin word for ``bond'' or ``tie''.} Indeed, many measures of concordance (and discordance), such as Kendall's tau and Spearman's rho, can be expressed entirely in terms of the copula (see Section \ref{section:empirical}). The latter are often more informative than the Pearson's coefficient of correlation, which captures \textit{linear} dependence. As emphasized by \citet{embrechts-mcneil-straumann:99a}, correlation is insufficient for risk management. This is explained by the fact that financial markets typically exhibit \textit{nonlinear} dependence and, further, may possess a high degree of tail concentration (or corner clustering).\footnote{The application of copula theory had a revival in financial economics after \citet{li:00a} proposed to model the default correlation in portfolios of bond loans with a Gaussian copula. This enabled tractable analysis of structured credit products (such as collateralized debt obligations). The model became a standard pricing tool on Wall Street. However, since the Gaussian copula has no tail dependence, it tends to understate the probability of concurrent defaults, as seen during the financial crisis \citep[see, for instance,][for a critique of the approach]{salmon:09a}. A comprehensive review of the application of copula theory to economic and financial time series is \citet{patton:12a}.} Therefore, copulas---or descriptive statistics derived from them---are a better representation of dependence. Another advantage of rank-based statistics is their invariance to monotonic transformations (that do not alter the ordering of the data). However, a model-free technique for analyzing the multivariate dependence structure of volatility is still lacking. In particular, tools for estimating and conducting inference on the copula of volatility are not available.

We close that gap and propose to study the dependence structure of several stochastic volatility processes. We make a four-fold contribution to the literature. First, we extend \citet{li-todorov-tauchen:13a} to the multivariate setting. We operate within a general, albeit standard, continuous-time arbitrage-free It\^{o} semimartingale framework for a (multi-dimensional) log-price process, where the random volatility can exhibit rather unrestricted dynamics. We replace the latent spot variance processes by localized, jump-robust, realized variance measures constructed from discretely observed high-frequency returns over short time intervals. The estimation errors are controllable by the sampling interval and bandwidth selection \citep[e.g.,][]{jacod-protter:12a}. We allow for very active jumps in the price process. The latter can be regarded as ``nuisance parameters'' in our setting, and we handle them with a truncation device \citep[e.g.,][]{mancini:09a}. Second, we construct a nonparametric estimator of volatility codependency, called the \textit{realized copula of volatility}, enabling a meaningful cross-asset comparison irrespective of the marginal distribution of the individual volatility processes. Third, we develop the necessary large-sample theory for our estimator aligned with its target at each horizon. In the in-fill asymptotic limit with fixed time span, we show the realized copula of volatility converges, uniformly, in probability to its empirical counterpart that captures the volatility dependency over the time interval observed so far. Importantly, we allow for nonstationary volatility. In a double asymptotic in-fill and long-span setting---with stationary and weakly dependent volatility---we establish a functional central limit theorem in the form of weak convergence of the measurement error of the realized volatility copula. We show how this can be exploited to construct pointwise or, with a bit more effort, uniform confidence bands. A goodness-of-fit test for assessing the shape of the volatility copula is also proposed. The latter procedure can, for instance, be used to gauge the adequacy of various parametric copulas; an approach we look further at in our application. This builds upon and extends earlier results on the empirical copula process \citep{deheuvels:79a, fermanian-radulovic-wegkamp:04a}. However, as explained above, our framework is distinctly more complicated than the discrete-time setup with independent and identically distributed data (see \citet{fermanian-scaillet:02a} for related work in the time series setting). Fourth, we deliver an implementable inference procedure by designing a \citet{newey-west:94a}-type heteroscedasticity and autocorrelation consistent estimator of the long-run asymptotic variance of our realized copula of volatility.

The small sample properties of our estimator are investigated through a Monte Carlo experiment. We demonstrate its efficacy over the entire support of the copula. In addition, we evaluate the goodness-of-fit test in order to illuminate its size and power. In our empirical work, we consider an application to U.S. equity and treasury futures markets. This leads to forceful evidence that a Gumbel copula---exhibiting upper-tail dependence---offers a near-perfect description of the realized copula of volatility. Thus, our nonparametric procedure complements the existing literature and facilitates a marginal-free analysis of volatility codependency, offering a more refined assessment of systemic risk transmission.

The paper progresses as follows. In Section \ref{section:setting}, we present the theoretical framework and define the empirical copula of the volatility process based on an occupation measure. In Section \ref{section:estimation}, we adopt a spot volatility estimator based on discrete high-frequency data, from which we can construct the realized copula of volatility. In Section \ref{section:theory}, we state our assumptions and develop large-sample theory for our estimator. In particular, we show its consistency for the empirical copula of volatility in the in-fill limit with fixed time span. We also derive a functional central limit theorem for the empirical process associated with the estimation error of the time-invariant copula of volatility in a double asymptotic in-fill and long-span setting. We further propose an estimator of the long-run asymptotic variance and show how to exploit the functional analysis to construct uniform confidence bands. At last, we develop a goodness-of-fit test. Section \ref{section:simulation} reports a detailed Monte Carlo study to evaluate the finite sample properties of our realized copula of volatility, along with a parametric version of our goodness-of-fit test. In Section \ref{section:empirical}, we present an empirical application. We conclude in Section \ref{section:conclusion}. The proofs of our statistical analysis are postponed to the \hyperref[app:proofs]{Appendix}.

\section{The setting} \label{section:setting}

We suppose that a filtered probability space (or stochastic basis) $\big( \Omega, \mathcal{F}, (\mathcal{F}_{t})_{t \geq 0}, \mathbb{P} \big)$ describes the evolution of a pair of log-price processes, $X$ and $Y$, that evolve in continuous-time over an interval $[0,T]$.\footnote{Our theoretical framework extends trivially to $d$-dimensional processes, for any fixed $d \geq 2$. We concentrate on the bivariate setting, because it conveys the main idea and avoids the cost of extra notation.} Here, $\mathcal{F}_{s} \subseteq \mathcal{F}_{t} \subseteq \mathcal{F}$ for $s \leq t \leq T$ is a filtration that represents past and current information available to market participants at any time.

As consistent with the ``no free lunch with vanishing risk'' principle from financial economics, we assume that the motion of $(X,Y)$ is described by an It\^{o} semimartingale \citep{delbaen-schachermayer:94a}.\footnote{The It\^{o} semimartingale assumption is commonly motivated in economics by the efficient markets hypothesis and rational expectations theory \citep[see, e.g.,][]{samuelson:65a}.} This means that it has the component-wise representation
\begin{equation} \label{equation:price-process}
\mathrm{d} Z_{t} = b_{t}^{Z} \mathrm{d}t + \sigma_{t}^{Z} \mathrm{d}W_{t}^{Z} + \int_{| \delta^{Z}(s,z)| \leq 1} \delta^{Z}(t,z)( \mu^{Z}- \nu^{Z})( \mathrm{d}z, \mathrm{d}t) + \int_{| \delta^{Z}(s,z)|> 1} \delta^{Z}(t,z){ \mu}^{Z}( \mathrm{d}z, \mathrm{d}t),
\end{equation}
where the drift $(b_{t}^{Z})_{t \geq 0}$ and the volatility $( \sigma_{t}^{Z})_{t \geq 0}$ are adapted c\`{a}dl\`{a}g processes, while $(W_{t}^{Z})_{t \geq 0}$ is a standard Brownian motion. In addition, $\mu^{Z}$ is a Poisson random measure on $( \mathbb{R}_{+}, \mathbb{R})$ with compensator $\nu^{Z}( \mathrm{d}t, \mathrm{d}z) = dt \otimes \lambda^{Z}( \mathrm{d}z)$, $\lambda^{Z}$ is a $\sigma$-finite measure on $\mathbb{R}$, and $\delta^{Z}: \Omega \times \mathbb{R}_{+} \times \mathbb{R} \rightarrow \mathbb{R}$ is predictable. Here, and in other places, we employ a generic process $Z$ (typically with $Z = X,Y$) to avoid repeating definitions.

In general, a semimartingale can be decomposed into a finite variation component and a local martingale. The ``It\^{o}'' classifier imposes an absolute continuity condition on these, so that we can express them as integrals of ``spot'' processes. This restriction is common, because the latter are more amenable to statistical analysis from high-frequency data. Apart from that, our framework is nonparametric, model-free, and suffices to capture the most dominant features of empirical asset price processes, such as time-varying expected returns, stochastic volatility, and leverage effect. Moreover, it encompasses the presence of large and small price jumps, where the latter can be infinitely active and of infinite variation. We also permit complex within- and cross-asset dependencies between the various model components. For example, the standard Brownian motions can be correlated (i.e. $\rho_{t} \mathrm{d}t = \mathrm{d} \langle W^{X}, W^{Y} \rangle_{t}$, where $\langle W^{X}, W^{Y} \rangle_{t}$ is the quadratic covariation process). Moreover, the jumps in volatility can be related to the occurrence of jumps in the log-price, and there can be common jumps in both \citep[for instance][]{todorov-tauchen:11a, bibinger-winkelmann:18a}. We are, however, going to impose additional structure and smoothness conditions for our econometric procedure.

Our aim is to study the empirical copula of volatility, which begins with a notion of the empirical distribution function of volatility, defined as
\begin{equation} \label{equation:edf}
H_{T}(x,y) = \frac{1}{T} \int_{0}^{T} \mathbbm{1}_{ \left \{V_{t}^{X} \leq x, V_{t}^{Y} \leq y \right \}} \mathrm{d}t,
\end{equation}
where $0 < x,y < \infty$, while $V_{t}^{X} = ( \sigma_{t}^{X})^{2}$ and $V_{t}^{Y} = ( \sigma_{t}^Y)^{2}$ are point-in-time variances of $X$ and $Y$, and $\mathbbm{1}_{ \left\{ \cdot \right\}}$ is the indicator function. $H_{T}(x,y)$ measures the relative amount of time in $[0,T]$ that the stochastic volatility processes spent at different levels over their support. Hence, \eqref{equation:edf} is the pathwise counterpart of the cumulative distribution function of volatility. Without the $T$ normalization, $H_{T}(x,y)$ is a multivariate extension of the volatility occupation measure studied in \citet{li-todorov-tauchen:13a}, which corresponds to the marginal empirical distribution function $F_{T}(x) = H_{T}(x, \infty)$ and $G_{T}(y) = H_{T}( \infty, y)$.

The empirical copula of volatility is motivated by Sklar's theorem. In particular, since $H_{T}(x,y)$ in \eqref{equation:edf} is the pathwise analogue of the distribution function of the latent variance process $(V_{t}^{X}, V_{t}^{Y})$ over $[0, T]$, it is natural to separate its marginal features from its dependence structure. We therefore define the empirical copula of volatility as
\begin{equation} \label{equation:empirical-copula}
C_{T}(u,v) = \frac{1}{T} \int_{0}^{T} \mathbbm{1}_{ \left \{F_{T}(V_{t}^{X}) \leq u, G_{T}(V_{t}^{Y}) \leq v \right \}} \mathrm{d}t,
\end{equation}
where $0 < u, v < 1$.

If $F_{T}(x)$ and $G_{T}(y)$ are invertible (e.g., continuous and strictly increasing), \eqref{equation:empirical-copula} can be written as $C_{T}(u,v) = H_{T}(F_{T}^{-1}(u),G_{T}^{-1}(v))$, which leads to the ``inversion formula'' $H_{T}(x,y) = C_{T}(F_{T}(x),G_{T}(y))$. Thus, the empirical distribution function of volatility can be expressed as the marginal empirical distribution of the individual volatility processes and a copula that captures their dependence structure up to $T$. $C_{T}(u,v)$ delivers a margin-free characterization of volatility codependency that is invariant to strictly monotone transformations of the marginal volatility processes. This is useful in financial applications, where the marginal behavior of volatility can differ markedly across assets, while their dependence structure can be decided separately. In addition, and in contrast to Pearson's measure of linear correlation, the copula can accommodate nonlinear dependence and tail concentration.

Without prior knowledge of the volatility processes, however, we define the empirical quantile function of $F_{T}(x)$ and $G_{T}(y)$ as the generalized inverse:
\begin{equation} \label{equation:quantile-function}
Q_{T}^{X}(u) \equiv F_{T}^{-1}(u) = \inf_{x \geq 0} \{x: F_{T}(x) \geq u \} \quad \text{and} \quad Q_{T}^{Y} (v) \equiv G_{T}^{-1}(v) = \inf_{y \geq 0} \{y: G_{T}(y) \geq v \}
\end{equation}
with the convention $\inf \emptyset = \infty$.\footnote{The generalized inverse of a distribution function is non-decreasing, left-continuous, and admits a limit from the right. If the distribution function is discontinuous, the quantile function has flat spots, and vice versa.} In the general setting, we still set $C_{T}(u,v) = H_{T}(Q_{T}^{X}(u),Q_{T}^{Y}(v))$, but it should now be given an infimum-based interpretation. Hence, $C_{T}(u,v)$ measures the proportion of time that the volatility processes spend below the smallest coordinate $x$ and $y$ for which $F_{T}(x)$ and $G_{T}(y)$ exceed $u$ and $v$.

To the best of our knowledge, both the multivariate extension of the empirical distribution function of volatility, and the associated copula, are novel to the literature.

\section{High-frequency estimation} \label{section:estimation}

In practice, $V^{X}$ and $V^{Y}$ are not directly observed, so neither is the empirical distribution function of volatility nor the copula transformation. Our empirical strategy is based on the premise that we can approximate the spot variance processes with a realized measure constructed from discretely observed high-frequency data of $X$ and $Y$ over small time intervals. In the asymptotic theory, we are going to both increase the number of blocks and shrink the timespan of each block, while padding it with an increasing number of observations, such that we attain a better and better assessment of the volatility's path. In doing so, we control for the price jump component. We also deal with the drift, but this is easier since it is asymptotically negligible in our setting.\footnote{\citet{christensen-oomen-reno:22a} propose a drift burst model, in which the drift term is locally of larger order than the volatility component.}

We assume that $(X,Y)$ are sampled at equidistant time points $(i \Delta_{n})_{i=0}^{n}$, where $\Delta_{n}$ is the time gap and $n = \lfloor T/ \Delta_{n} \rfloor$ denotes the number of observations over $[0, T]$. We define an increment of the process $Z$ between $(i-1) \Delta_{n}$ and $i \Delta_{n}$ as $\Delta_{i}^{n}Z = Z_{i \Delta_{n}} - Z_{(i-1) \Delta_{n}}$, for $i = 1, \dots, n$. Then, as in \citet{li-todorov-tauchen:13a}, we adopt the following estimator of $V_{t}^{Z}$ (at time $t$):
\begin{equation} \label{equation:realized-variance}
\begin{aligned}
\hat{V}_{t}^{Z} = \left\{
\begin{array}{ll}
\frac{1}{k_{n} \Delta_{n}} \sum_{j=1}^{k_{n}}\big( \Delta_{k_{n} \lfloor \frac{t}{k_{n} \Delta_{n}} \rfloor+j}^{n} Z \big)^{2} \mathbbm{1}_{ \{| \Delta_{k_{n} \lfloor \frac{t}{k_{n} \Delta_{n}} \rfloor+j}^{n} Z| \leq \alpha \Delta_{n}^{ \varpi} \}}, & 0 \leq t< \lfloor \frac{T}{k_{n} \Delta_{n}} \rfloor k_{n} \Delta_{n}, \\[0.25cm]
\frac{1}{k_{n}\Delta_{n}} \sum_{j=1}^{k_{n}} \left( \Delta_{n-j+1}^{n} Z \right)^{2} \mathbbm{1}_{ \left \{ \left| \Delta_{n-j+1}^{n} Z \right| \leq \alpha \Delta_{n}^{ \varpi} \right \}}, & \lfloor \frac{T}{k_{n} \Delta_{n}} \rfloor k_{n} \Delta_{n} \leq t \leq T.
\end{array} \right.
\end{aligned}
\end{equation}
$\hat{V}_{t}^{Z}$ is the spot realized variance \citep[e.g.,][]{jacod-protter:12a}. $k_{n}$ is the number of log-price increments included in the estimation on each subinterval and tends to infinity in the asymptotic analysis, whereas $\alpha > 0$ and $\varpi \in (0, 1/2)$ relate to the jump truncation device of \citet{mancini:09a}. The threshold $\alpha \Delta_{n}^{ \varpi}$ is designed to remove log-price increments perturbed by the price jump component and is asymptotically shrinking, slowly enough, to zero. The rate conditions of the tuning parameters are made explicit below.\footnote{Note that in much of the following, we suppress the explicit dependence of various statistics on $n$, when it can be avoided without causing confusion.}

We define the realized distribution function of volatility, an estimator of the empirical distribution function of volatility, as follows:
\begin{equation}
\widehat{H}_{n,T}(x,y) = \frac{1}{T} \int_{0}^{T} \mathbbm{1}_{ \left \{ \hat{V}_{t}^{X} \leq x, \hat{V}_{t}^{Y} \leq y \right \}} \mathrm{d}t.
\end{equation}
The univariate version can be retrieved as $\widehat{F}_{n,T}(x) = \widehat{H}_{n,T}(x, \infty) = \frac{1}{T} \int_{0}^{T} \mathbbm{1}_{ \left \{ \hat{V}_{t}^{X} \leq x \right\}} \mathrm{d}t$ and $\widehat{G}_{n,T}(y) = \widehat{H}_{n,T}(\infty,y) = \frac{1}{T} \int_{0}^{T} \mathbbm{1}_{ \left\{ \hat{V}_{t}^{Y} \leq y \right\}} \mathrm{d}t$. The latter are again equivalent to those proposed in \citet{li-todorov-tauchen:13a}.

Then, the realized copula of volatility is the statistic:
\begin{equation} \label{equation:realized-copula}
\quad \widehat{C}_{n,T}(u,v) = \frac{1}{T} \int_{0}^{T} \mathbbm{1}_{ \left\{ \widehat{F}_{n,T}( \hat{V}_{t}^{X}) \leq u, \widehat{G}_{n,T}( \hat{V}_{t}^{Y}) \leq v \right \}} \mathrm{d}t,
\end{equation}
and the realized quantile functions of volatility are extended as follows:
\begin{equation}
\widehat{Q}_{n,T}^{X}(u) = \inf_{x \geq 0} \{x: \widehat{F}_{n,T}(x) \geq u\} \qquad \text{and} \qquad \widehat{Q}_{n,T}^{Y}(v) = \inf_{y \geq 0} \{y: \widehat{G}_{n,T}(y) \geq v \},
\end{equation}
such that $\widehat{C}_{n,T}(u,v) = \widehat{H}_{n,T}( \widehat{Q}_{n,T}^{X}(u), \widehat{Q}_{n,T}^{Y}(v))$.

\section{Asymptotic theory} \label{section:theory}

To derive the asymptotic theory for the realized distribution function of volatility and realized copula of volatility, we prove a consistency result in distinct settings with i) $\Delta_{n} \rightarrow 0$ and $T$ fixed, and ii) $\Delta_{n} \rightarrow 0$ and $T \rightarrow \infty$. In the first, the realized statistic is a natural estimator of the empirical counterpart. In the second, the realized statistic converges to the invariant distribution, assuming it exists. Under further smoothness conditions and weak dependence of the volatility process, we establish a central limit theorem. The latter is done in the functional sense of weak convergence of the probability measures associated with the empirical process of the normalized measurement error of volatility.

To begin with, we introduce a number of additional conditions. The first assumption concerns the regularity of the price jump component.

\begin{asu} \label{assumption:jump-activity}
The process $Z$ follows \eqref{equation:price-process} with $b^{Z}$ locally bounded and $\sigma^{Z}$ c\`{a}dl\`{a}g. Moreover, for a constant $r \in [0,2]$, and a sequence of stopping times $( \tau_{m})_{m=1}^{ \infty}$, such that $\tau_{m} \rightarrow \infty$ as $m \rightarrow \infty$, there exists a sequence of functions $( \Gamma_{m})_{m=1}^{ \infty}$ such that for each $m: \min(| \delta^{Z}( \omega,t,z)|,1) \leq \Gamma_{m}(z)$ for $( \omega,t,z)$ with $t \leq \tau_{m}( \omega)$, and $\int_{ \mathbb{R}} \Gamma_{m}(z)^{r} \lambda^{Z}( \mathrm{d}z) < \infty$.
\end{asu}
The exponent $r$ in Assumption \ref{assumption:jump-activity} is an upper bound on the generalized Blumenthal-Getoor index, an adaptation from L\'{e}vy processes to the semimartingale setting, see \citet[][Lemma 3.2.1]{jacod-protter:12a}.\footnote{The Blumenthal-Getoor index of a L\'{e}vy process is defined as $\beta = \inf \left\{ r \geq 0 \;:\; \int_{|x| < 1} |x|^{r} \nu( \mathrm{d}x) < \infty \right\}$, where $\nu$ is the L\'{e}vy measure. $\beta$ can be interpreted as the smallest power at which the distribution of the small jumps (arbitrarily chosen to be of size $|x| < 1$) has finite $r$th moment.} It is therefore related to the jump activity of the log-price process with lower values of $r$ being more binding. The point $r = 1$ separates jump processes with sample paths of finite and infinite variation, i.e., it concerns their absolute summability, while $r = 0$ implies that the jumps are finitely active (on finite time intervals, almost surely). $r$ plays a crucial role in determining rate conditions for the tuning parameters in our estimation procedure, and also in establishing the rate of convergence of our estimator.

\begin{asu} \label{asucont}
$H_{T}(x,y)$ is continuous, almost surely.
\end{asu}

The continuity of the empirical distribution function imposed by Assumption \ref{asucont} is used to control the approximation error in the uniform metric.

\begin{thm} \label{theorem:consistency}
We suppose that Assumption \ref{assumption:jump-activity} with $r = 2$ and Assumption \ref{asucont} hold. As $\Delta_{n} \rightarrow 0$ and $k_{n} \rightarrow \infty$, such that $k_{n} \Delta_{n} \rightarrow 0$, and with $T$ fixed, it follows that
\begin{equation}
\sup_{(x,y) \in \mathbb{R}_{+}^{2}}| \widehat{H}_{n,T}(x,y) - H_{T}(x,y)| \overset{ \mathbb{P}}{ \longrightarrow} 0.
\end{equation}
Furthermore, let $\mathcal{Q} = \{(u,v) \in [0,1]^{2} : Q_{T}^{X}(u)$ is continuous at $u$ and $Q_{T}^{Y}(v)$ is continuous at $v$, almost surely$\}$. Then, for each $(u,v) \in \mathcal{Q}$, it further holds that
\begin{equation}
\widehat{C}_{n,T}(u,v) \overset{ \mathbb{P}}{ \longrightarrow} C_{T}(u,v).
\end{equation}
Moreover, if $Q_{T}^{X}(u)$ and $Q_{T}^{Y}(v)$ are continuous, almost surely, on $[0,1]^{2}$:
\begin{equation}
\sup_{(u,v) \in [0,1]^{2}}| \widehat{C}_{n,T}(u,v) - C_{T}(u,v)| \overset{ \mathbb{P}}{ \longrightarrow} 0.
\end{equation}
\end{thm}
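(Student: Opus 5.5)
The proof proceeds in three layers. We first get pointwise convergence of $\widehat{H}_{n,T}$, then upgrade it to uniform convergence with a Glivenko--Cantelli/P\'olya argument, and finally transfer the result to the copula through the quantile maps.

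\textbf{Pointwise convergence of $\widehat{H}_{n,T}$.} By a standard localization argument, using the stopping times $\tau_m$ of Assumption \ref{assumption:jump-activity}, we may assume that $b^Z$ is bounded, that $\sigma^Z$ is bounded, and that $\Gamma_m$ is a single integrable function. The key input is that the spot truncated realized variance is consistent in $L^1(\mathrm{d}t)$:
\[
\int_0^T |\hat{V}^Z_t - V^Z_t| \,\mathrm{d}t \overset{\mathbb{P}}{\longrightarrow} 0, \qquad Z = X,Y .
\]
This holds even with $r=2$, because for each block the truncated sum approximates $k_n^{-1}\Delta_n^{-1}\int_{\text{block}} V_s \,\mathrm{d}s$. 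Following \citet[][Ch.~9]{jacod-protter:12a} and \citet{li-todorov-tauchen:13a}, the truncation error is $o_p(1)$ in expectation, uniformly over blocks, and the martingale error is $O_p(k_n^{-1/2})$. The blockwise average then converges to $V_t$ in $L^1(\mathrm{d}t)$, since $V$ is c\`adl\`ag and $k_n\Delta_n\to0$.

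Convergence in $L^1(\mathrm{d}t)$ gives convergence in Lebesgue measure on $[0,T]$. For fixed $(x,y)$ and any $\varepsilon>0$,
\[
|\widehat{H}_{n,T}(x,y)-H_T(x,y)| \le T^{-1}\,\mathrm{Leb}\{t:|\hat V^X_t-V^X_t|>\varepsilon \text{ or } |\hat V^Y_t-V^Y_t|>\varepsilon\} + \big[H_T(x+\varepsilon,y+\varepsilon)-H_T(x-\varepsilon,y-\varepsilon)\big].
\]
The bracket is small by the continuity in Assumption \ref{asucont}. The same inequality holds uniformly in $(x,y)$, because $H_T$ is a continuous bivariate distribution function on a compactified domain and is therefore uniformly continuous. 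This gives the first claim. Alternatively, one can use the bivariate P\'olya argument: pointwise convergence on a finite grid, monotonicity in each coordinate, and uniform continuity. The marginals $\widehat{F}_{n,T}$ and $\widehat{G}_{n,T}$ converge uniformly as a special case.

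\textbf{Transfer to the copula.} Write $\widehat{C}_{n,T}(u,v)=\widehat{H}_{n,T}(\widehat{Q}^X_{n,T}(u),\widehat{Q}^Y_{n,T}(v))$. We check this identity carefully. It requires $\widehat{F}_{n,T}(\hat V_t)\le u \iff \hat V_t\le \widehat Q(u)$ up to Lebesgue-null ties. This holds since $\hat V$ is piecewise constant and the flat pieces contribute at most $k_n\Delta_n/T\to0$. We then decompose
\[
|\widehat{C}_{n,T}-C_T| \le \sup|\widehat H_{n,T}-H_T| + |H_T(\widehat Q^X_{n,T}(u),\widehat Q^Y_{n,T}(v))-H_T(Q^X_T(u),Q^Y_T(v))|.
\]
Uniform convergence of $\widehat F_{n,T}$ implies $\widehat Q^X_{n,T}(u)\to Q^X_T(u)$ at every continuity point $u$ of $Q^X_T$, by the standard quantile-inversion lemma (van der Vaart, Lemma 21.2), applied along subsequences converging almost surely. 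Continuity of $H_T$ then controls the second term, which gives the pointwise claim on $\mathcal{Q}$.

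For the uniform claim, continuity of $Q^X_T$ on the compact set $[0,1]$ upgrades the inversion to $\sup_u|\widehat Q^X_{n,T}(u)-Q^X_T(u)|\to0$. There is a caveat at the endpoints, where the quantiles may be $0$ or the maximum, but both are finite since $V$ is bounded on $[0,T]$. Uniform continuity of $H_T$ then finishes the argument.

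\textbf{Main obstacle.} The delicate step is the $L^1(\mathrm{d}t)$ consistency of the truncated spot variance under $r=2$, when jumps may be of infinite variation. I would bound $\mathbb{E}|(\Delta_i^nZ)^2\mathbbm{1}_{\{|\Delta_i^nZ|\le\alpha\Delta_n^\varpi\}}-(\Delta_i^nZ^c)^2|$ by $\Delta_n\, a_n$ with $a_n\to0$, using \citet[][Lemma 13.2.6]{jacod-protter:12a}. Only an $o(1)$ rate is needed, which is exactly why no restriction on $r$ or on the tuning parameters beyond $k_n\Delta_n\to0$ is required here.
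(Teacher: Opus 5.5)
Your argument is correct in substance, and for the uniform convergence of $\widehat{H}_{n,T}$ it takes a genuinely different route from the paper.

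\textbf{Part (1): comparison with the paper.} The paper first proves a general lemma for $\int_0^T g(\hat V_s^X,\hat V_s^Y)\,\mathrm{d}s$, where $g$ has a $\mathbb{P}$-null discontinuity set. It uses pointwise-in-$s$ consistency of the spot estimator (Lemma~1 of Li--Todorov--Tauchen) and bounded convergence. This gives pointwise consistency of $\widehat H_{n,T}(x,y)$, which the paper upgrades to locally uniform convergence by monotonicity. It then treats $(x,y)$ outside $[0,M]^2$ with a separate tail argument based on boundedness of the c\`adl\`ag paths.

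Your sandwich inequality gives the uniform statement in one step. The modulus of continuity of $H_T$, bounded by those of $F_T$ and $G_T$, replaces both the P\'olya step and the tail splitting. You also never need to locate the discontinuities of $\mathbbm{1}_{\{\cdot\le x,\cdot\cdot\le y\}}$. These are two boundary segments, not only the corner point the paper checks. The paper's route buys generality in $g$.

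Your ``main obstacle'' can be avoided. The sandwich only needs $\mathrm{Leb}\{t:|\hat V_t^Z-V_t^Z|>\varepsilon\}\to0$ in probability. Since $\mathbb{E}\,\mathrm{Leb}\{t:|\hat V_t^Z-V_t^Z|>\varepsilon\}=\int_0^T\mathbb{P}(|\hat V_t^Z-V_t^Z|>\varepsilon)\,\mathrm{d}t$, this follows by dominated convergence from the pointwise consistency the paper uses. Your $L^1(\mathrm{d}t)$ estimate via the truncation bound is valid, but stronger than needed.

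\textbf{Part (2): one step fails as written.} The transfer to the copula follows the paper's decomposition, but the upgrade to $\sup_{u\in[0,1]}|\widehat Q^X_{n,T}(u)-Q^X_T(u)|\to0$ is not justified. Van der Vaart's Lemma~21.2 covers only $u\in(0,1)$. Moreover, $\widehat Q^X_{n,T}(1)$ is the largest block value of $\hat V^X$, and it need not converge to $Q^X_T(1)$ under the bare conditions $k_n\to\infty$, $k_n\Delta_n\to0$. If $k_n$ grows slowly relative to $\log(1/\Delta_n)$, the maximal sampling error over the $\asymp T/(k_n\Delta_n)$ blocks does not vanish. By left-continuity of $\widehat Q^X_{n,T}$, this also spoils uniformity near $u=1$. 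Your endpoint caveat addresses finiteness, not this non-convergence, and the paper's one-line uniform step has the same weakness.

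\textbf{The fix.} Bypass quantile convergence entirely.
\begin{itemize}
\item Let $s_n=\sup_x|\widehat F_{n,T}(x)-F_T(x)|$. Since $F_T$ is continuous, every jump of $\widehat F_{n,T}$ is at most $2s_n$.
\item Because $\widehat F_{n,T}(\widehat Q^X_{n,T}(u)-)\le u\le \widehat F_{n,T}(\widehat Q^X_{n,T}(u))$, you get $|F_T(\widehat Q^X_{n,T}(u))-u|\le 3s_n$. Meanwhile $F_T(Q^X_T(u))=u$ for $u\in(0,1]$.
\item Combine this with $|H_T(a,b)-H_T(a',b')|\le|F_T(a)-F_T(a')|+|G_T(b)-G_T(b')|$.
\end{itemize}
This bounds your second term by $3\bigl(s_n+\sup_y|\widehat G_{n,T}(y)-G_T(y)|\bigr)$, uniformly in $(u,v)$, with no use of continuity of $Q^X_T$ or $Q^Y_T$.

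The same jump bound tightens your check of the identity $\widehat C_{n,T}=\widehat H_{n,T}(\widehat Q^X_{n,T},\widehat Q^Y_{n,T})$. Ties among block values can make $\{t:\hat V^X_t=\widehat Q^X_{n,T}(u)\}$ longer than one block. Its length, however, is $T$ times the jump of $\widehat F_{n,T}$ at $\widehat Q^X_{n,T}(u)$, hence at most $2Ts_n$.
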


Theorem 1 establishes uniform convergence in probability of $\widehat{H}_{n,T}(x,y)$ as an estimator of $H_{T}(x,y)$. As $\Delta_{n} \rightarrow 0$ and $k_{n} \rightarrow \infty$, such that $\Delta_{n} k_{n} \rightarrow 0$, we get an increasing number of error-free estimates of the spot variance, forming a dense subset on $[0,T]$, from which we can construct an arbitrarily accurate estimate of $H_{T}(x,y)$. The exact rate of $k_{n}$ does not influence this result, so long as $\Delta_{n} k_{n} \rightarrow 0$. We can also show pointwise convergence in probability of $\hat{C}_{n,T}(x,y)$ on the continuity set $\mathcal{Q}$. However, in the discontinuous case at a point $(u,v)$ where either $Q_{T}^{X}(u)$ or $Q_{T}^{Y}(v)$ are not continuous, even uniform convergence of $\widehat{H}_{n,T}(x,y)$ does not imply that $Q_{n,T}^{X}(u) \overset{ \mathbb{P}}{ \longrightarrow} Q_{T}^X(u)$ (or $Q_{n,T}^Y(u) \overset{ \mathbb{P}}{ \longrightarrow} Q_{T}^{Y}(v)$), so $\widehat{C}_{n,T}(u,v)$ is generally inconsistent for $\widehat{C}_{T}(u,v)$ at that point. We can strengthen the convergence of $\widehat{C}_{n,T}(x,y)$ to a uniform statement if the limiting empirical quantile function is continuous, almost surely.

The smoothness imposed on $H_{T}(x,y)$ does not require the volatility path itself to be continuous. Even if $V_{t}^{X}$ or $V_{t}^{Y}$ exhibit jumps, $H_{T}(x,y)$ can be continuous provided the sample path variation is rich enough to fill out the gaps. The intuition is that $H_{T}(x,y)$ aggregates the amount of time that the processes spend below a given threshold, and it can thus ``iron out'' local irregularities through temporal aggregation.

To establish the asymptotic theory for the combined in-fill and long-span setting, we need stricter control of the error embedded in the recovery of the volatility path. First, we need to restrict the c\`{a}dl\`{a}g dynamic of the volatility processes.

\begin{asu} \label{assumption:sigma}
The volatility process $\sigma^{Z}$ is of the form:
\begin{equation} \label{equation:sigma}
\sigma_{t}^{Z} = \sigma_{0}^{Z} + \int_{0}^{t} \tilde{b}_{s}^{Z} \mathrm{d}s + \int_{0}^{t} \tilde{ \sigma}_{s}^{Z} \mathrm{d}\tilde{W}_{s}^{Z} + \int_{0}^{t} \int_{ \mathbb{R}} \tilde{ \delta}^{Z}(s,z)( \tilde{ \mu}^{Z} - \tilde{ \nu}^{Z})(  \mathrm{d}s,  \mathrm{d}z).
\end{equation}
Here, $\tilde{b}^{Z}$ and $\tilde{ \sigma}^{Z}$ are adapted and locally bounded processes, $\tilde{W}^{Z}$ is a standard Brownian motion (that may depend on $W^{Z}$), while $\tilde{ \delta}^{Z}$ is a predictable function. Moreover, for a constant $\tilde{r} \in [0,2]$, and a sequence of stopping times $(\tilde{ \tau}_{m})_{m=1}^{ \infty}$ such that $\tilde{ \tau}_{m} \rightarrow \infty$ as $m \rightarrow \infty$, there exists a sequence of functions $( \tilde{ \Gamma}_{m})_{m=1}^{ \infty} : \min \{| \tilde{ \delta}^{Z}( \omega,t,z)|,1 \} \leq \tilde{ \Gamma}_{m}(z)$ for $( \omega,t,z)$ with $t \leq \tilde{ \tau}_{m}( \omega)$, and $\int_{ \mathbb{R}} \tilde{ \Gamma}_{m}(z)^{ \tilde{r}} \tilde{ \lambda}^{Z}( \mathrm{d}z) < \infty$.
\end{asu}

Assumption \ref{assumption:sigma} states that the volatility is an It\^{o} semimartingale. In contrast to the log-price process, this is not a consequence of the no-arbitrage principle. However, it is commonly assumed, because it restricts the local behavior of volatility that is necessary to apply standard estimates for semimartingales in the high-frequency paradigm \citep[see][]{jacod-protter:12a}. For example, if $\sigma^{Z}$ is a diffusion, the assumption follows by an application of It\^{o}'s Lemma under certain smoothness conditions. Either way, it is fulfilled for most stochastic volatility models applied in practice, even when volatility has sources of randomness not captured by $Z$. The ability to correlate the Brownian motions of $Z$ and $\sigma^{Z}$ is particularly relevant for financial applications that involve equity data, in order to capture the so-called leverage effect \citep[][]{black:76a, christie:82a}. Note that \eqref{equation:sigma} does rule out volatility models driven by a fractional Brownian motion, which feature prominently in some strands of research \citep[e.g.,][]{comte-renault:98a, gatheral-jaisson-rosenbaum:18a}. The main difficulty is that in the ``rough'' regime, the volatility behaves very erratically over short time intervals, and this makes it harder to control the discretization error \citep[see][]{chong-todorov:23a}. \citet{christensen-thyrsgaard-veliyev:19a} allow $\sigma^{Z}$ to satisfy a H\"{o}lder-type condition, in the expectation sense. It may be possible to extend our theoretical framework in this direction as well, but we leave it for a future endeavor.

Second, we restrict the memory of volatility processes. To accomplish this, we introduce the additional notation for a generic process $Z$:
\begin{equation} \label{equation:mixing}
\alpha_{t} = \sup_{A \in \mathcal{F}_{s}, B \in \mathcal{F}^{s+t}}| \mathbb{P}(A \cap B) - \mathbb{P}(A) \mathbb{P}(B)|. \\
\end{equation}
Here, $\mathcal{F}_{s} = \sigma(Z_{u}; u \leq s)$ and $\mathcal{F}^{t} = \sigma(Z_{u}; u \geq t)$ are the backward- and forward-looking $\sigma$-algebras generated by $Z$ until time $s$ and from time $t$, respectively. Moreover, $\alpha_{t}$ is the mixing coefficient function that measures the degree of stochastic (in)dependence between them. This is used in the definition of strong mixing introduced by \cite{rosenblatt:56a}.

\begin{definition} \label{definition:strong-mixing}
$Z$ is strong mixing (or $\alpha$-mixing) if $\alpha_{t} \rightarrow 0$ as $t \rightarrow \infty$.
\end{definition}

\begin{asu} \label{asu5}
The volatility process $(V_{t}^{X}, V_{t}^{Y})_{t \geq 0}$ is stationary and $\alpha$-mixing in the sense of Definition \ref{definition:strong-mixing} with $\alpha_{t} = O(t^{-(1+ \tau)})$ for some $\tau > 0$.
\end{asu}

Assumption \ref{asu5} is essential in deriving the asymptotic distribution theory. It corresponds to Assumption 5 in \cite{christensen-thyrsgaard-veliyev:19a}, but for the two-dimensional case, and is again a weak regularity condition that can be verified for most stochastic volatility models. The stationary condition makes it meaningful to speak of a time-invariant distribution function of volatility, $H(x,y) = \mathbb{P}(V_{0}^{X} \leq x, V_{0}^{Y} \leq y)$, from which $F(x) = \mathbb{P}(V_{0}^{X} \leq x)$ and $G(y) = \mathbb{P}(V_{0}^{Y} \leq y)$ can be extracted. Sklar's theorem then guarantees existence, and uniqueness under further conditions, of the copula $C(u,v) = \mathbb{P}(F(V_{0}^{X}) \leq u, G(V_{0}^{Y}) \leq v)$. The question is then to what extent $\widehat{H}_{n,T}(x,y)$, consistently estimating $H_{T}(x,y)$ as $\Delta_{n} \rightarrow 0$, also provides a good proxy of the stationary distribution $H(x,y)$, since there is only a single realization to work with. This, at a bare minimum, additionally requires $T \rightarrow \infty$, so sufficient conditions to apply the ergodic theorem on the volatility path are needed to prompt the law of large numbers for consistency. This is implied by the strong mixing condition, and the polynomial decay of $\alpha_{t}$ ensures that volatility persistence decays suitably fast to derive the limiting distribution of the statistic. In this regard, we should note that since our CLT is derived for a sequence of bounded random variables, all moments exist, so the usual moment condition ``trade-off'' in Davydov’s inequality is irrelevant, and hence the requirement for the mixing coefficient function is the weakest possible and the autocorrelation function merely needs to be integrable.

A prominent counterexample (ruled out by Assumption \ref{assumption:sigma}), where Assumption \ref{asu5} typically fails, is long-memory processes, such as the fractional Gaussian noise with Hurst exponent $H \in (0.5,1.0)$, which is stationary and weak mixing, but not strong mixing.

The next assumption imposes a uniform boundedness on the coefficient processes.

\begin{asu} \label{assumption:moment}
For all $p \geq 1$,
\begin{equation}
\sup_{t \geq 0} \left\{ \mathbb{E} \left[|b_{t}^{Z}|^{p} + | \sigma_{t}^{Z}|^{p} + | \tilde{b}_{t}^{Z}|^{p} + | \tilde{ \sigma}_{t}^{Z}|^{p} \right] + \int_{ \mathbb{R}}(1 \wedge \Gamma(z))^{p} \lambda^{Z}( \mathrm{d}z) \right \} \leq C,
\end{equation}
for a constant $C$.
\end{asu}
This follows Assumption 1 in \citet{christensen-thyrsgaard-veliyev:19a} and \citet{andersen-thyrsgaard-todorov:19a}. It imposes a light-tailedness condition on the distribution of the various coefficient processes driving the continuous part of $X$ and $Y$, and the large price jumps, but not for the jumps of volatility. It is a common assumption in the long-span setting. The existence of moments of any order is more than we require, but it streamlines the exposition. The technical reason for this condition is that the standard estimates for the increments of semimartingales based on the localization procedure of \citet[][Section 4.4.1]{jacod-protter:12a} presumes $T$ is fixed. The above assumption facilitates the derivation of such estimates over $[0, \infty)$.

\begin{asu} \label{asu4}
$H_{T}(x,y)$ is continuously differentiable on $\mathbb{R}_{+}^{2}$, almost surely. The partial derivatives (that exist, almost surely) are denoted by $\partial_{x} H_{T}(x,y)$ and $\partial_{y} H_{T}(x,y)$, respectively, and we assume that $\partial_{ \bullet}H_{T}(x,y) \neq 0$ such that $\mathbb{E} \left[ \partial_{ \bullet} H_{T}(x,y) \right] \leq C$ for almost every $(x,y) \in \mathbb{R}_{+}^{2}$, where $C$ is a bounding constant that does not depend on $(x,y)$.
\end{asu}
This combines Assumption 2 of \citet{christensen-thyrsgaard-veliyev:19a} with Assumption B of \citet{li-todorov-tauchen:13a}. The smoothness condition is used to derive a uniform upper bound for estimation error of the empirical copula function. The requirement $\partial_{ \bullet}H_{T}(x,y) \neq 0$ rules out ``flat spots'' in the marginal distribution to ensure that $H_{T}(x,y) \mapsto {C}_T(u,v)$ is smooth and non-singular, so that we can apply the functional delta method without getting a degenerate limit process with zero variance. We note that the boundedness in expectation is weaker than asking $\mathbb{E}[ \sup_{(x,y) \in \mathbb{R}_{+}^{2}} \partial_{ \bullet}H_{T}(x,y)] \leq C$, since the latter requires almost sure control of the pathwise behavior of volatility.

At last, we state an extra condition that is required to establish the central limit theorem. To accomplish this, we introduce the notation $k_{n} = \Delta_{n}^{- \gamma}$ and for any $\iota >0$
\begin{eqnarray} \label{equation:weird-rate-condition}
d_{n}&=&\Delta_{n}^{\gamma-1+(2-r)\varpi}+\Delta_{n}^{\gamma/2-\iota}+\Delta_{n}^{(1-\gamma)/2-\iota}+\Delta_{n}^{\frac{1-\gamma}{1+\tilde{r}}-\iota} \text{ and } \nonumber\\
d_{n}'&=&\Delta_{n}^{\gamma/r-(1-\varpi)-\iota}+\Delta_{n}^{\gamma/2-\iota}+\Delta_{n}^{(1-\gamma)/2-\iota}+\Delta_{n}^{\frac{1-\gamma}{1+\tilde{r}}-\iota}.
\end{eqnarray}
\begin{asu} \label{asu7}
We assume that $\frac{r-1}{r} < \varpi < \frac{1}{2}$ and $r(1- \varpi) < \gamma <1$ for $r>1$. In addition, $0 < \varpi < \frac{1}{2}$ and $1-(2-r) \varpi < \gamma <1$ for $r \leq 1$. Moreover, $\frac{1- \gamma}{1+ \tilde{r}} > 0$ for $\tilde{r}>0$. Finally, we assume that either of the following conditions holds for any $\iota > 0$:
\begin{equation}
1). \:\: r\leq 1 \text{ and } \sqrt{T}d_{n} \rightarrow 0 \quad \text{or} \quad 2). \:\: r > 1 \text{ and } \sqrt{T}d'_{n} \rightarrow 0.
\end{equation}
\end{asu}
This is identical to the conditions in Theorem 4 of \citet{li-todorov-tauchen:13a}, except for the appearance of $\sqrt{T}$ since we operate in the long-span setting. It implies that the volatility error, $\hat{V}_{t}^{Z} - V_{t}^{Z}$, is dominated by $T$, so the high-frequency discretization part is asymptotically negligible. The four terms in $d_n$ and $d_n'$ represent the various sources of the estimation error in $H_{n,T}(x,y)$. Specifically, the first term captures the order of the truncation procedure, induced by replacing truncated increments by those of the continuous part. It depends on whether the jump component has finite or infinite variation. The second and third terms capture the sampling variability and the discretization bias in approximating the spot volatilities. The last order is for the collection of summands over the sampling intervals that contain jumps (with a size larger than some level) in the volatility. The choices of $\varpi, r$, and $\gamma$ imply $d_{n} \rightarrow 0$ and $d_{n}' \rightarrow 0$. Thus, we can assume $T \rightarrow \infty$ such that Assumption \ref{asu7} holds. In particular, for $r \in(0, \frac{1}{2})$ and $\tilde{r}=0$, we can take $\varpi \in( \frac{3}{8-4r}, \frac{1}{2})$ and $\gamma=1/2$, from which it follows that $d_{n}=O( \Delta_n^{ \frac{1}{4}- \iota})$. In this case, $T = O( \Delta_{n}^{-1/4+ \epsilon})$ with $\epsilon > 0$ verifies the requirement.

The next theorem delivers the functional CLT of the estimation procedure for the stationary distribution function of volatility and the associated copula. It is grounded in a double asymptotic setting with $\Delta_{n} \rightarrow 0$ and $T \rightarrow \infty$ and extends Theorem 1 in \citet{li-todorov-tauchen:13a} and Theorem 3.1 in \citet{christensen-thyrsgaard-veliyev:19a}.

\begin{thm} \label{thm2}
Suppose that Assumptions \ref{assumption:jump-activity} - \ref{asu7} hold (with $r = 2$ in Assumption \ref{assumption:jump-activity} and $\tilde{r} = 2$ in Assumption \ref{assumption:sigma}). As $\Delta_{n} \rightarrow 0$ and $T \rightarrow \infty$, it follows that
\begin{equation}
\sqrt{T} \left( \widehat{H}_{n,T}(x,y) - H(x,y) \right) \Rightarrow \mathcal{G}.
\end{equation}
Here, ``$\Rightarrow$'' denotes weak convergence in the space $\mathbb{D}( \mathbb{R}_{+}^{2})$ of c\`{a}dl\`{a}g functions equipped with the uniform topology. Also, $\mathcal{G}$ is a Brownian bridge on $\mathbb{R}_{+}^{2}$ with covariance function
\begin{equation} \label{equation:covariance-matrix-G}
\mathrm{cov} \big( \mathcal{G}(x,y), \mathcal{G}(x',y') \big) \equiv \mathrm{avar}_{ \mathcal{G}}(x,y,x',y') = 2 \int_{0}^{ \infty} \Big( H_{t}(x,y,x',y') - H(x,y)H(x',y') \Big) \mathrm{d}t,
\end{equation}
where $H_{t}(x,y,x',y') = P(V_{0}^{X} \leq x, V_{0}^{Y} \leq y, V_{t}^{X} \leq x', V_{t}^{Y} \leq y')$. Moreover, it follows that
\begin{equation}
\sqrt{T} \left( \widehat{C}_{n,T}(u,v) - C(u,v) \right) \Rightarrow \mathcal{C},
\end{equation}
where $\mathcal{C}$ is a stochastic process on $[0,1]^{2}$, derived from $\mathcal{G}$, that is defined as:
\begin{equation} \label{definition:C}
\mathcal{C}(u,v) = \mathbf{g}(u,v)^{ \top} \mathbf{Z}(u,v).
\end{equation}
Here,
\begin{equation}
\begin{aligned}
\mathbf{Z}(u,v) &= \left[ \mathcal{G} \big(Q^{X}(u),Q^{Y}(v) \big), \mathcal{G} \big(Q^{X}(u), \infty \big), \mathcal{G} \big( \infty,Q^{Y}(v) \big) \right]^{ \top}, \\[0.25cm]
\mathbf{g}(u,v) &= \left[1, -\partial_{u}C(u,v), -\partial_{v}C(u,v) \right]^{ \top},
\end{aligned}
\end{equation}
while $Q^{X}(u) = \lim_{T \rightarrow \infty}Q_{T}^{X}(u)$ and $Q^{Y}(v) = \lim_{T \rightarrow \infty}Q_{T}^{Y}(v)$.
\end{thm}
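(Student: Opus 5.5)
The plan is to split the error as
\[
\sqrt{T}\big(\widehat{H}_{n,T}-H\big)=\sqrt{T}\big(\widehat{H}_{n,T}-H_{T}\big)+\sqrt{T}\big(H_{T}-H\big).
\]
The first term is a high-frequency discretization error; we show it is $o_{\mathbb{P}}(1)$ uniformly on $\mathbb{R}_{+}^{2}$. The second term is a long-span empirical process of the stationary, mixing volatility path; we show it converges weakly to $\mathcal{G}$. The copula statement then follows from a functional delta method applied to the map $H\mapsto H(Q^{X}(\cdot),Q^{Y}(\cdot))$, where the quantiles are those of the marginals $H(\cdot,\infty)$ and $H(\infty,\cdot)$.

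\emph{Discretization term.} For the first term I would follow the bivariate analogue of the argument in \citet{li-todorov-tauchen:13a} and \citet{christensen-thyrsgaard-veliyev:19a}. The key observation is
\[
|\mathbbm{1}_{\{\hat V^{X}_{t}\le x,\hat V^{Y}_{t}\le y\}}-\mathbbm{1}_{\{V^{X}_{t}\le x,V^{Y}_{t}\le y\}}|\le \mathbbm{1}_{\{|V^{X}_{t}-x|\le|\hat V^{X}_{t}-V^{X}_{t}|\}}+\mathbbm{1}_{\{|V^{Y}_{t}-y|\le|\hat V^{Y}_{t}-V^{Y}_{t}|\}},
\]
so the problem reduces to two univariate pieces. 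Each piece is controlled as follows:
\begin{itemize}
\item Split $\hat V^{Z}_{t}-V^{Z}_{t}$ into four parts: the truncation/jump error, the sampling error of order $k_n^{-1/2}$, the discretization bias of order $(k_n\Delta_n)^{1/2}$ on blocks without big volatility jumps, and the blocks containing large volatility jumps.
\item On the event that the error is at most $\epsilon_n$, bound the time-average of $\mathbbm{1}_{\{|V_t-x|\le\epsilon_n\}}$ by $F_T(x+\epsilon_n)-F_T(x-\epsilon_n)\le 2\epsilon_n\sup\partial_xH_T$. Assumption \ref{asu4} makes this $O_{\mathbb{P}}(\epsilon_n)$ in expectation uniformly in $x$.
\item Bound the complementary event by Markov's inequality, using the moment estimates valid on $[0,\infty)$ from Assumption \ref{assumption:moment}.
\item For uniformity in $(x,y)$, use a finite grid together with monotonicity of $H_T$ and $\widehat H_{n,T}$.
\end{itemize}
Collecting the rates gives exactly $d_n$ (respectively $d_n'$), so Assumption \ref{asu7} makes $\sqrt{T}$ times this term vanish.

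\emph{Long-span term.} Write $\sqrt{T}(H_T-H)=T^{-1/2}\int_0^T(\mathbbm{1}_{\{V_t\le\cdot\}}-H)\,dt$. Discretize it into unit-length block sums $\xi_j(x,y)=\int_{j-1}^{j}(\mathbbm{1}-H)\,dt$; these are bounded, stationary and $\alpha$-mixing with polynomial rate. Finite-dimensional convergence then follows from a CLT for strongly mixing bounded sequences, whose long-run variance gives \eqref{equation:covariance-matrix-G}. Tightness in $\mathbb{D}(\mathbb{R}_+^2)$ with the uniform topology is the main obstacle. The plan is to use a bracketing/chaining argument for indicator classes under mixing, as in Andrews–Pollard or Doukhan–Massart–Rio type results. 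This needs a moment bound of the form
\[
\mathbb{E}\Big|T^{-1/2}\int_0^T(\mathbbm{1}_{A}-\mathbb{P}(A))\,dt\Big|^{2}\lesssim \mathbb{P}(A)^{1-\delta}
\]
for rectangles $A$. Covariance inequalities for bounded variables give this with $\int\alpha_t\,dt<\infty$. I would reparametrize through the marginals to work on $[0,1]^2$, and verify a Bickel–Wichura moment condition on blocks. If the available mixing rate $1+\tau$ is too weak for fourth moments, I would fall back on a coupling/blocking argument.

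\emph{Copula step.} Write $\widehat C_{n,T}=\widehat H_{n,T}(\widehat Q^X,\widehat Q^Y)$. Apply the Hadamard differentiability of the copula map at $H$ tangentially to continuous functions, as in \citet{fermanian-radulovic-wegkamp:04a}. This requires the partial derivatives of $C$ to exist and be continuous, which is guaranteed by Assumption \ref{asu4} and the non-flatness condition. Hadamard differentiability of the quantile map then gives $\sqrt{T}(\widehat Q^X-Q^X)\approx-\mathcal{G}(Q^X,\infty)/F'(Q^X)$. Finally, the chain rule produces $\mathcal{C}=\mathcal{G}(Q^X,Q^Y)-\partial_uC\,\mathcal{G}(Q^X,\infty)-\partial_vC\,\mathcal{G}(\infty,Q^Y)$, and the functional delta method yields the stated weak convergence.
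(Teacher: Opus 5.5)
Your architecture is the paper's. You split the error into $\sqrt{T}(\widehat H_{n,T}-H_T)$ (the paper's Lemma~\ref{lema2}, after \citet{li-todorov-tauchen:13a}) and $\sqrt{T}(H_T-H)$. You use unit-length block sums with a CLT for bounded strongly mixing sequences. You finish with the Hadamard derivative of \citet{fermanian-radulovic-wegkamp:04a} and the functional delta method. Applying the delta method to $\widehat H_{n,T}$ directly, rather than to $H_T$ followed by the copula half of Lemma~\ref{lema2}, is a harmless variation.

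The step that does not go through as you have set it up is tightness. Assumption~\ref{asu5} only gives $\alpha_t=O(t^{-(1+\tau)})$ with $\tau>0$ possibly tiny, and none of the tools you name works there:
\begin{itemize}
\item The Doukhan--Massart--Rio bracketing invariance principle is a $\beta$-mixing result.
\item The Andrews--Pollard bracketing condition needs a polynomial $\alpha$-mixing rate far above $1$ for a two-parameter indicator class.
\item The fourth-moment inequalities for bounded $\alpha$-mixing sums that a Bickel--Wichura argument needs also require faster mixing, as you suspected.
\end{itemize}
Your variance bound on a single rectangle is fine; it holds with exponent $\tau/(1+\tau)$ via $\int_0^\infty\min(\alpha_t,\mathbb{P}(A))\,\mathrm{d}t$. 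But such bounds do not give asymptotic equicontinuity without a maximal inequality along the chain, and your unspecified coupling/blocking fallback is exactly where that has to come from. What you need is a chaining argument that works at the sharp rate, in the spirit of Rio's empirical-process theorem for strongly mixing sequences with $\alpha_n=O(n^{-a})$, $a>1$. It must be adapted to the block functionals $\int_{k-1}^{k}\mathbbm{1}_{\{V_t^X\le x,\,V_t^Y\le y\}}\mathrm{d}t$. The paper appeals here to an extension of the proof of Theorem~5.2 in \citet{dehay:05a}. Otherwise you must strengthen the mixing assumption.

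There are two smaller points. First, weak convergence in the uniform topology, and your direct delta-method step, need $\sqrt{T}(\widehat H_{n,T}-H_T)\to0$ uniformly on $\mathbb{R}_+^2$. Neither $\sup_x\partial_xH_T$ nor a fixed grid delivers this at the $\sqrt{T}$ scale; Assumption~\ref{asu4} controls only pointwise expectations of $\partial_x H_T$. The $H_T$-increments between grid points would have to be $o(T^{-1/2})$, and a union bound over such a grid with your pointwise bounds needs more than $\sqrt{T}d_n\to0$. Instead, note that the contribution of $\{|\hat V_t^Z-V_t^Z|>\epsilon_n\}$ does not depend on $(x,y)$, and that $F$ is $C$-Lipschitz by stationarity and Assumption~\ref{asu4}. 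Hence
\[
\sqrt{T}\sup_{x}\big(F_T(x+\epsilon_n)-F_T(x-\epsilon_n)\big)\le 2C\sqrt{T}\epsilon_n+\sup_{|F(x)-F(x')|\le 2C\epsilon_n}\sqrt{T}\,\big|(F_T-F)(x)-(F_T-F)(x')\big|,
\]
where the last term vanishes by the asymptotic equicontinuity from the long-span step, so do that step first.

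Second, the long-run covariance of your block sums is the symmetrized kernel $\int_0^\infty\{H_t(x,y,x',y')+H_t(x',y',x,y)-2H(x,y)H(x',y')\}\,\mathrm{d}t$. The displayed formula is in general not even symmetric under $(x,y)\leftrightarrow(x',y')$. It agrees with the symmetrized kernel on the diagonal or under time-reversibility, but not in general. The discrepancy already enters $\mathrm{avar}_{\mathcal{C}}(u,v,u,v)$ through $\mathrm{cov}(\mathcal{G}(Q^X(u),\infty),\mathcal{G}(\infty,Q^Y(v)))$. The paper's proof makes the same slip when it replaces $H_{s,t}$ by $H_{|t-s|}$.
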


The proof of Theorem \ref{thm2} draws on the functional CLT for stationary mixing sequences of bounded random variables (see Theorem 18.5.4 in \cite{ibragimov:75a} and Theorem 5.2 in \citet{dehay:05a}) building on \citet{rosenblatt:56a}. We reiterate that the rate of convergence and the asymptotic distribution of $\widehat{H}_{n,T}(x,y)$ and $\widehat{C}_{n,T}(x,y)$ are unaffected by the error from recovering volatility, $V_{t}^{Z} - \widehat{V}_{t}^{Z}$, which is asymptotically negligible under the rate conditions in \eqref{equation:weird-rate-condition}.

The covariance function of $\mathcal{C}(u,v)$ follows from Theorem \ref{thm2} and the functional delta rule, i.e.
\begin{equation} \label{equation:avarC}
\text{avar}_{ \mathcal{C}}(u,v,u',v') \equiv \mathrm{cov} \big( \mathcal{C}(u,v), \mathcal{C}(u',v') \big) = \mathbf{g}(u,v)^{ \top} \Gamma(u,v,u',v') \mathbf{g}(u',v'),
\end{equation}
where
\begin{equation} \label{equation:Gamma}
\Gamma(u,v,u',v') = \text{cov} \big( \mathbf{Z}(u,v), \mathbf{Z}(u',v') \big).
\end{equation}
As a consequence,
\begin{equation} \label{equation:cid}
\begin{aligned}
\sqrt{T} \left( \widehat{H}_{n,T}(x,y) - H(x,y) \right) &\overset{d}{ \longrightarrow} N \big(0, \text{avar}_{ \mathcal{G}}(x,y,x,y) \big), \\
\sqrt{T} \left( \widehat{C}_{n,T}(u,v) - C(u,v) \right) &\overset{d}{ \longrightarrow} N \big(0, \text{avar}_{ \mathcal{C}}(u,v,u,v) \big),
\end{aligned}
\end{equation}
where $\overset{d}{ \longrightarrow}$ is regular convergence in law. This result can be used to construct pointwise confidence intervals, or conduct hypothesis tests, once an estimator of the asymptotic covariance function has been designed.

To construct such an estimator, we set:
\begin{equation} \label{escov_G}
\widehat{ \text{avar}}_{ \mathcal{G}}(x,y,x',y') = 2 \int_{0}^{T^{ \xi}} \left\{ \widehat{H}_{t,n,T}(x,y,x',y') - \widehat{H}_{n,T}(x,y) \widehat{H}_{n,T}(x',y') \right\} \mathrm{d}t,
\end{equation}
with $\widehat{H}_{t,n,T}(x,y,x',y') = \frac{1}{T-T^{ \xi}} \int_{0}^{T-T^{ \xi}} \mathbbm{1}_{ \left\{ \hat{V}_{s}^{X} \leq x, \hat{V}_{s}^{Y} \leq y, \hat{V}_{s+t}^{X} \leq x', \hat{V}_{s+t}^{Y} \leq y' \right\}} \mathrm{d}s$, where $\xi$ is a small positive constant.

\begin{corollary} \label{cor1}
We suppose that Assumptions \ref{assumption:jump-activity} - \ref{asu7} hold, and that $\xi \in (0, 1/3)$. As $\Delta_{n} \rightarrow 0$ and $T \rightarrow \infty$, we deduce that
\begin{equation}
\widehat{ \mathrm{avar}}_{ \mathcal{G}}(x,y,x',y') \overset{ \mathbb{P}}{ \longrightarrow} \mathrm{avar}_{ \mathcal{G}}(x,y,x',y').
\end{equation}
\end{corollary}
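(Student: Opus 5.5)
We split the estimation error of $\widehat{\mathrm{avar}}_{\mathcal{G}}$ into three pieces and control each separately. With $\mathrm{avar}_{\mathcal{G}}^{(T^{\xi})}$ denoting the truncated target and $\widetilde{H}_{t,T}$, $\widetilde{H}_{T}$ denoting the analogues of $\widehat{H}_{t,n,T}$ and $\widehat{H}_{n,T}$ built from the true variances $(V^{X},V^{Y})$, write
\begin{equation*}
\widehat{\mathrm{avar}}_{\mathcal{G}} - \mathrm{avar}_{\mathcal{G}} = \big(\widehat{\mathrm{avar}}_{\mathcal{G}} - \widetilde{\mathrm{avar}}_{\mathcal{G}}\big) + \big(\widetilde{\mathrm{avar}}_{\mathcal{G}} - \mathrm{avar}_{\mathcal{G}}^{(T^{\xi})}\big) + \big(\mathrm{avar}_{\mathcal{G}}^{(T^{\xi})} - \mathrm{avar}_{\mathcal{G}}\big).
\end{equation*}
Here $\widetilde{\mathrm{avar}}_{\mathcal{G}}$ is $\widehat{\mathrm{avar}}_{\mathcal{G}}$ with $\widehat{H}_{t,n,T}$ and $\widehat{H}_{n,T}$ replaced by $\widetilde{H}_{t,T}$ and $\widetilde{H}_{T}$. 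The truncated target is
\begin{equation*}
\mathrm{avar}_{\mathcal{G}}^{(T^{\xi})} = 2\int_{0}^{T^{\xi}}\big(H_{t}(x,y,x',y') - H(x,y)H(x',y')\big)\,\mathrm{d}t .
\end{equation*}

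The third (bias) term is handled first. Because indicators are bounded, Davydov's inequality gives $|H_{t}-HH'| \le 4\alpha_{t}$. By Assumption \ref{asu5}, the tail $\int_{T^{\xi}}^{\infty}\alpha_{t}\,\mathrm{d}t = O(T^{-\xi\tau})$, which tends to zero.

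Next comes the discretization term. For every $t$ and $s$, the difference between the indicator built from $\hat V$ and the one built from $V$ is bounded by the sum over the four coordinates of $\mathbbm{1}\{|\hat V^{Z}_{r}-V^{Z}_{r}|>\epsilon\} + \mathbbm{1}\{|V^{Z}_{r}-z|\le\epsilon\}$, where $r \in \{s,s+t\}$ and $z$ is the matching threshold. Averaging over $s\in[0,T-T^{\xi}]$ and integrating over $t\in[0,T^{\xi}]$ bounds the error by
\begin{equation*}
T^{\xi}\Big(\frac{T}{T-T^{\xi}}\Big)\cdot O\Big(\frac{1}{T}\int_{0}^{T}\mathbbm{1}\{|\hat V_{r}-V_{r}|>\epsilon\}\,\mathrm{d}r + \epsilon\Big).
\end{equation*}
The $\epsilon$ part uses the bounded expected density from Assumption \ref{asu4}. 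The product term $\widehat{H}_{n,T}\widehat{H}_{n,T}'$ is treated in the same way, and there $\sup|\widehat{H}_{n,T}-H| = O_{\mathbb{P}}(T^{-1/2})$ by Theorem \ref{thm2}. The key input is the same estimate used in the proof of Theorem \ref{thm2}: the time-averaged measure of the set where the spot-variance error exceeds $\epsilon$, together with the smoothing via the density, contributes $O_{\mathbb{P}}(d_{n})$ (or $d_n'$), and Assumption \ref{asu7} gives $\sqrt{T}d_{n}\to0$. The whole term is then $O_{\mathbb{P}}(T^{\xi}d_{n}) = o_{\mathbb{P}}(T^{\xi-1/2})$, which vanishes since $\xi<1/3<1/2$. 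This is where it matters that the factor $T^{\xi}$ does not blow up the error.

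The main obstacle is the variance term, which is a law of large numbers uniform over lags $t\le T^{\xi}$. We write it as
\begin{equation*}
2\int_{0}^{T^{\xi}}\big(\widetilde{H}_{t,T}-H_{t}\big)\,\mathrm{d}t - 2\int_{0}^{T^{\xi}}\big(\widetilde H_T\widetilde H_T' - HH'\big)\,\mathrm{d}t .
\end{equation*}
The second integral is $O_{\mathbb{P}}(T^{\xi}T^{-1/2})$ by Theorem \ref{thm2} (or a direct variance bound), and this tends to zero. For the first, fix $t$ and put $\eta_{s}=\mathbbm{1}\{V_s\le\cdot,V_{s+t}\le\cdot\}$. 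This process is bounded and stationary, and it is $\alpha$-mixing with coefficient $\alpha_{(u-t)\vee0}$ at lag $u$. Hence
\begin{equation*}
\mathbb{E}\big(\widetilde{H}_{t,T}-H_{t}\big)^{2} \le \frac{C}{T-T^{\xi}}\Big(t + \int_{0}^{\infty}\alpha_{u}\,\mathrm{d}u\Big) = O\big(T^{\xi-1}\big).
\end{equation*}
By Minkowski's inequality for integrals, the $L^{2}$ norm of the first integral is at most $T^{\xi}\cdot O(T^{(\xi-1)/2}) = O(T^{(3\xi-1)/2})$. This tends to zero exactly when $\xi<1/3$, which explains the restriction in the corollary. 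Combining the three pieces gives the claim.
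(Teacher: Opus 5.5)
Your proof is correct, and its skeleton matches the paper's. The paper splits the error into three pieces.
\begin{itemize}
\item The discretization error in the lagged term $\widehat H_{t,n,T}-H_{t,T}$. It is handled uniformly in $t\le T^{\xi}$ via $|ab-cd|\le|a-c|+|b-d|$ and the shift $r=s+t$, then the Lemma~\ref{lema2} rate $o_{\mathbb{P}}(T^{-1/2})$.
\item The discretization error in the product $\widehat H_{n,T}\widehat H_{n,T}'$, again via Lemma~\ref{lema2}.
\item The consistency of the infeasible estimator built from the true $V$.
\end{itemize}
Your discretization piece is essentially the first two items.

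The genuine difference is the last item. The paper simply invokes the proof of Proposition~4.2 in \citet{dehay:05a}. You prove it directly, by separating two parts:
\begin{itemize}
\item a truncation bias, bounded by $\int_{T^{\xi}}^{\infty}\alpha_t\,\mathrm{d}t=O(T^{-\xi\tau})$;
\item a stochastic term, controlled by an $L^{2}$ covariance bound for the lag-$t$ indicator process (mixing coefficient $\alpha_{(u-t)\vee 0}$) combined with Minkowski's integral inequality.
\end{itemize}
Your route is self-contained and shows exactly where $\xi<1/3$ is needed, namely the $O(T^{(3\xi-1)/2})$ bound. In the paper, the explicitly displayed bounds are all $o_{\mathbb{P}}(T^{\xi-1/2})$ and only require $\xi<1/2$; the $1/3$ is tied to the cited result. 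The paper's route is shorter.

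One point of presentation: in your discretization step the threshold $\epsilon$ cannot be fixed, because of the factor $T^{\xi}$. You should take $\epsilon=\epsilon_n$ of order $d_n$, up to a factor $\Delta_n^{-\iota}$. Your appeal to the $O_{\mathbb{P}}(d_n)$ estimate implicitly does this, and with that choice $T^{\xi}d_n=T^{\xi-1/2}\sqrt{T}d_n\to 0$.
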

In the above, $T^{ \xi}$ plays the role of the ``lag length'' in the estimation of the long-run variance of a stationary time series, which should not grow too fast for a consistent estimation. In fact, the optimal order of the bandwidth is often $T^{1/3}$ \citep[see, e.g.,][]{newey-west:94a}. The choice of $\xi$ relates to this observation.

By Slutsky's theorem,
\begin{equation}
\frac{ \sqrt{T} \left( \widehat{H}_{n,T}(x,y) - H(x,y) \right)}{ \sqrt{ \widehat{ \mathrm{avar}}_{ \mathcal{G}}(x,y,x,y)}} \overset{d}{ \longrightarrow} N(0,1).
\end{equation}
Conducting inference about $C(u,v)$ is a more complicated endeavor, since it also involves partial derivatives of $C(u,v)$ on top of the covariance function $\mathrm{avar}_{ \mathcal{G}}(x,y,x',y')$. To construct estimators of the former, we propose a nonparametric kernel-based smoothing approach:\footnote{This idea was introduced in \cite{rosenblatt:56b}.}
\begin{equation} \label{estparder}
\begin{aligned}
\widehat{ \partial_{u}C}(u,v) &= \frac{1}{hT} \int_{0}^{T} \int_{0}^{v}K \left( \frac{\widehat{F}_{n,T}( \hat{V}_{t}^{X})-u}{h}, \frac{G_{n,T}( \hat{V}_{t}^{Y})-w}{h} \right) \mathrm{d}w \mathrm{d}t, \\
\widehat{ \partial_{v}C}(u,v) &= \frac{1}{hT} \int_{0}^{T} \int_{0}^{u}K \left( \frac{\widehat{F}_{n,T}( \hat{V}_{t}^{X})-w}{h}, \frac{G_{n,T}( \hat{V}_{t}^{Y})-v}{h} \right) \mathrm{d}w \mathrm{d}t.
\end{aligned}
\end{equation}
Here, $K$ is defined on a compact subset of $\mathbb{R}^{2}$, Lipschitz continuous, and such that
\begin{equation}
\int_{ \mathbb{R}^{2}}K(x,y)(x^{2}+y^{2}) \mathrm{d}x \mathrm{d}y \leq C \qquad \text{and} \qquad \int_{ \mathbb{R}^{2}}K(x,y) \mathrm{d}x \mathrm{d}y = 1.
\end{equation}
This is a common regularity condition for kernel density estimation. Note that $K$ is not required to be symmetric.

To show consistency of $\widehat{ \partial_{u}C}(u,v)$ and $\widehat{ \partial_{v}C}(u,v)$, we take a slight detour by forming a preliminary nonparametric estimator of the copula density, i.e. $c(u,v) = \partial_{u} \partial_{v} C(u,v)$,
\begin{equation} \label{equation:copula-density-kernel}
\hat{c}_{n,T}(u,v) = \frac{1}{Th^{2}} \int_{0}^{T} K \left( \frac{\widehat{F}_{n,T}( \hat{V}_{t}^{X})-u}{h}, \frac{\widehat{G}_{n,T}( \hat{V}_{t}^{Y})-v}{h} \right) \mathrm{d}t.
\end{equation}
The following condition is sufficient to get consistency of $\hat{c}_{n,T}(u,v)$.

\begin{asu} \label{asu6}
$H_{T}(x,y)$ is three times continuously differentiable on $\mathbb{R}_{+}^{2}$.
\end{asu}
In other words, we implicitly assume that the empirical copula density $c_{T}(u,v)$ exists and is itself continuously differentiable on $[0,1]^{2}$.

\begin{corollary} \label{cor2}
We suppose that Assumptions \ref{assumption:jump-activity} - \ref{asu6} hold. As $\Delta_{n} \rightarrow 0$, $T \rightarrow \infty$, $h \rightarrow 0$, such that
\begin{equation}
\quad \frac{d_{n}}{h^{3}} \vee \frac{1}{ \sqrt{T}h^{3}} \rightarrow 0,
\end{equation}
where $d_{n}$ is defined in \eqref{equation:weird-rate-condition}, for any $(u, v) \in [0,1]^{2}$ it holds that
\begin{eqnarray} \label{Sigma_{n}T}
\hat{c}_{n,T}(u, v) \overset{ \mathbb{P}}{ \longrightarrow} c(u,v).
\end{eqnarray}
\end{corollary}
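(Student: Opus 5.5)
We decompose $\hat c_{n,T}(u,v)-c(u,v)$ into three pieces and show each vanishes. Let $U_t = F(V_t^X)$ and $W_t = G(V_t^Y)$ be the ``oracle'' pseudo-observations. Define the infeasible kernel estimator
\[
\tilde c_{T}(u,v) = \frac{1}{Th^{2}}\int_0^T K\Big(\frac{U_t-u}{h},\frac{W_t-v}{h}\Big)\mathrm{d}t .
\]
We then write
\[
\hat c_{n,T}-c = (\hat c_{n,T}-\tilde c_T) + (\tilde c_T-\mathbb{E}\tilde c_T) + (\mathbb{E}\tilde c_T - c).
\]
The last two terms are a standard kernel density problem for a stationary $\alpha$-mixing process. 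By stationarity, $\mathbb{E}\tilde c_T(u,v) = h^{-2}\int K((a-u)/h,(b-v)/h)c(a,b)\,\mathrm{d}a\,\mathrm{d}b = \int K(s,w)c(u+hs,v+hw)\,\mathrm{d}s\,\mathrm{d}w$. Assumption \ref{asu6} makes $c$ continuously differentiable, so a Taylor expansion together with the second-moment bound on $K$ gives a bias of order $O(h)$. It is $O(h^{2})$ if first moments of $K$ vanish, but $O(h)$ already suffices.

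For the variance term we write $\mathrm{Var}(\tilde c_T)=T^{-2}h^{-4}\int_0^T\!\int_0^T \mathrm{cov}(K_s,K_t)\,\mathrm{d}s\,\mathrm{d}t$. The integrand is bounded by $\|K\|_\infty^2$, and Davydov's inequality for bounded variables gives $|\mathrm{cov}(K_s,K_t)|\le 4\|K\|_\infty^2\alpha_{|t-s|}$. Assumption \ref{asu5} makes $\alpha_t$ integrable, so $\mathrm{Var}(\tilde c_T)=O(1/(Th^4))$, which tends to zero since $1/(\sqrt T h^3)\to0$ implies $1/(Th^4)\to0$ when $h\le 1$. Hence $\tilde c_T \overset{\mathbb P}{\to} c$.

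The main obstacle is the first term, where we must replace both the estimated volatility and the estimated marginals. By Lipschitz continuity of $K$,
\[
|\hat c_{n,T}-\tilde c_T|\le \frac{L}{h^3}\,\frac1T\int_0^T\Big(|\widehat F_{n,T}(\hat V^X_t)-F(V^X_t)|+|\widehat G_{n,T}(\hat V^Y_t)-G(V^Y_t)|\Big)\mathrm{d}t .
\]
We split $\widehat F_{n,T}(\hat V_t^X)-F(V_t^X)$ into two parts. The first is $\sup_x|\widehat F_{n,T}(x)-F(x)|$, which is $O_{\mathbb P}(d_n+T^{-1/2})$ by Theorem \ref{thm2}: its proof shows the high-frequency error in $\widehat H_{n,T}-H_T$ is of order $d_n$ uniformly, and the long-span fluctuation $H_T-H$ is $O_{\mathbb P}(T^{-1/2})$. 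The second is $F(\hat V_t^X)-F(V_t^X)$. Its time average is handled with the bounded density from Assumption \ref{asu4}. We bound it by $C\,\frac1T\int_0^T\min(|\hat V_t^X-V^X_t|,1)\,\mathrm{d}t$, whose expectation is $O(d_n)$. This bound reuses the spot-volatility error estimates (truncation, sampling, discretisation, and volatility-jump pieces) that underlie $d_n$ and $d_n'$ in the proof of Theorem \ref{thm2}, made uniform in $t$ via Assumption \ref{assumption:moment}.

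Multiplying by $h^{-3}$ gives $O_{\mathbb P}(d_n/h^3 + 1/(\sqrt T h^3))\to0$, which is exactly the rate condition in the corollary. The delicate point is that the map $t\mapsto F(\hat V_t)$ is not controlled pointwise near volatility jumps. We therefore bound it only on average in $L^1(\mathrm{d}t)$, using that the blocks containing large volatility jumps have vanishing total length; this is what the $\Delta_n^{(1-\gamma)/(1+\tilde r)}$ term in $d_n$ accounts for. Combining the three pieces yields $\hat c_{n,T}(u,v)\overset{\mathbb P}{\to}c(u,v)$.
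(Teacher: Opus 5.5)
Your proof is correct, and it takes a genuinely different route from the paper's.

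\textbf{How the routes differ.} The paper moves from $\widehat F_{n,T}(\hat V_t^X)$ to $F_T(V_t^X)$ using the sup-in-$t$ bound \eqref{equation:OrderOfFV}. It then moves from $F_T$ to $F$ using $\sup_x|F_T(x)-F(x)|=O_P(T^{-1/2})$. Finally, it controls the oracle estimator through a Castellana--Leadbetter-type condition, which gives a variance of order $1/T+h$.

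You instead pivot on $F(\hat V_t^X)$. The only uniform-in-$x$ object left is $\sup_x|\widehat F_{n,T}(x)-F(x)|$, which the functional CLT in Theorem \ref{thm2} already makes $O_P(T^{-1/2})$. The volatility error then enters only as an $L^1(\mathrm{d}t)$ average through the deterministic map $F$. Its Lipschitz constant follows from $\mathbb{E}[\partial_x H_T(x,y)]\le C$ in Assumption \ref{asu4} together with stationarity.

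\textbf{What your route buys.} First, you need no pathwise bound on the density $f_T$. The step \eqref{equation:OrderOfFV} implicitly relies on such a bound, although the discussion after Assumption \ref{asu4} deliberately avoids assuming it. Second, you never need $\sup_t|\hat V_t^X-V_t^X|\to 0$, which fails when volatility jumps, as Assumption \ref{assumption:sigma} permits. The vanishing share of blocks containing large volatility jumps is exactly what an averaged bound can absorb. Third, your Davydov bound for the bounded kernel uses only the integrability of $\alpha_t$ from Assumption \ref{asu5}. The paper's parametric variance instead needs the joint densities $c_t$ of $(F(V_0^X),G(V_0^Y),F(V_t^X),G(V_t^Y))$ to exist and be integrable, which strong mixing alone does not supply.

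\textbf{What it costs.} Your variance is cruder, $O(1/(Th^4))$ rather than $1/T+h$, but the stated condition $\sqrt{T}h^3\to\infty$ already covers it. The one step you only sketch is $\mathbb{E}\bigl[T^{-1}\int_0^T\min(|\hat V_t^X-V_t^X|,1)\,\mathrm{d}t\bigr]=O(d_n)$ uniformly in $T$. This is on the same footing as the paper's own transfer of the fixed-$T$ spot-variance estimates to the long-span setting.

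\textbf{A caveat shared by both arguments.} The Taylor step giving an $O(h)$ bias needs $(u+hs,v+hw)\in[0,1]^2$ for all $(s,w)$ in the support of $K$, so it only covers interior points. On the edges of the unit square, where $c$ vanishes outside $[0,1]^2$, a symmetric kernel leaves a non-vanishing bias. The claim for every $(u,v)\in[0,1]^2$ therefore requires a boundary-adapted kernel.
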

The partial derivatives of $C(u,v)$ can be written as $\partial_{u}C(u,v) = \int_{0}^{v} c(u,w) \mathrm{d}w$ and $\partial_{v} C(u,v) = \int_{0}^{u} c(w,v) \mathrm{d}w$ by Assumption \ref{asu6}. Hence, $\widehat{ \partial_{u} C}(u,v)$ and $\widehat{ \partial_{v} C}(u,v)$ are Riemann approximations of these integrals based on the kernel density estimator $\hat{c}_{n,T}(u,v)$. Therefore, the consistency of the partial derivatives in \eqref{estparder} follows directly, namely
\begin{equation} \label{equation:partial_C}
\widehat{ \partial_{u} C}(u,v) \overset{ \mathbb{P}}{ \longrightarrow} \partial_{u}C(u,v) \quad \text{and} \quad \widehat{ \partial_{v}C}(u,v) \overset{ \mathbb{P}}{ \longrightarrow} \partial_{v} C(u,v).
\end{equation}
In view of \eqref{equation:avarC} and \eqref{equation:partial_C}, we define the following plug-in estimator for the covariance functional of the realized copula of volatility, $\mathrm{avar}_{ \mathcal{C}}(u,v,u',v')$:
\begin{equation} \label{escov_C}
\widehat{ \mathrm{avar}}_{ \mathcal{C}}(u,v,u',v') = \hat{ \mathbf{g}}(u,v)^{ \top} \hat{ \Gamma}(u,v,u',v') \hat{ \mathbf{g}}(u',v'),
\end{equation}
where
\begin{equation}
\begin{aligned}
\hat{ \mathbf{g}}(u,v) = \left[1, -\widehat{ \partial_{u}C}(u,v), -\widehat{ \partial_{v}C}(u,v) \right]^{ \top}, \\[0.25cm]
\hat{\Gamma}(u,v,u',v') = \text{cov} \big( \hat{ \mathbf{Z}}(u,v), \hat{ \mathbf{Z}}(u',v') \big),
\end{aligned}
\end{equation}
and
\begin{equation}
\hat{ \mathbf{Z}}(u,v) = \left[ \mathcal{G} \big(Q_{n,T}^{X}(u),Q_{n,T}^{Y}(v) \big), \mathcal{G} \big(Q_{n,T}^{X}(u), \infty \big), \mathcal{G} \big( \infty, Q_{n,T}^{Y}(v) \big) \right]^{ \top}
\end{equation}
is a three-dimensional, mean zero Gaussian vector whose covariance is estimated via the long-run covariance functional in \eqref{escov_G}. Then, by dominance of convergence in probability,
\begin{equation}
\frac{ \sqrt{T} \left( \widehat{C}_{n,T}(u,v) - C(u,v) \right)}{ \sqrt{ \widehat{ \mathrm{avar}}_{ \mathcal{C}}(u,v,u,v)}} \overset{d}{ \longrightarrow} N(0,1).
\end{equation}

\subsection{Uniform confidence bands}

The pointwise inference derived above ensures asymptotically correct coverage and testing size for each fixed value of $(u,v)$, or $(x,y)$, but it conceals that the functional CLT established in Theorem \ref{thm2} offers stronger control of the measurement errors. In this subsection, we exploit this observation to construct uniform confidence bands for the volatility copula.

\begin{corollary} \label{corband}
Under the assumptions of Theorem \ref{thm2}:
\begin{equation} \label{equation:sup}
\sup_{(u,v) \in[0,1]^{2}} \sqrt{T} \left( \widehat{C}_{n,T}(u,v) - C(u,v) \right) \overset{d}{ \longrightarrow }\sup_{(u,v) \in[0,1]^{2}}{\mathcal{C}(u,v)},
\end{equation}
where $\mathcal{C}$ is the Gaussian process defined in Theorem \ref{thm2}.
\end{corollary}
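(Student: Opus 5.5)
The plan is to get Corollary \ref{corband} from the functional convergence in Theorem \ref{thm2} through the continuous mapping theorem. The map $\phi : \ell^{\infty}([0,1]^{2}) \rightarrow \mathbb{R}$, $\phi(f) = \sup_{(u,v) \in [0,1]^{2}} f(u,v)$, is Lipschitz with constant one for the uniform norm, since $|\sup f - \sup g| \leq \sup |f-g|$. It is therefore continuous everywhere, and in particular on the support of the limit $\mathcal{C}$. Theorem \ref{thm2} gives $\mathbb{C}_{n,T} := \sqrt{T}(\widehat{C}_{n,T} - C) \Rightarrow \mathcal{C}$ in $\mathbb{D}([0,1]^{2})$ with the uniform topology. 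The continuous mapping theorem, in the version for outer-measurable elements in the Hoffmann-J{\o}rgensen sense, then yields $\phi(\mathbb{C}_{n,T}) \overset{d}{\longrightarrow} \phi(\mathcal{C})$, which is \eqref{equation:sup}.

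I would check three points in order.

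\textbf{Step 1: measurability of the supremum.} $\widehat{C}_{n,T}$ is a piecewise constant step function: $\widehat{F}_{n,T}(\hat V^X_t)$ and $\widehat{G}_{n,T}(\hat V^Y_t)$ take finitely many values because $\hat V_t$ is constant on blocks. $C$ is continuous under Assumption \ref{asu4}. Hence the supremum can be restricted to a countable dense set together with finitely many right-limit points, so $\phi(\mathbb{C}_{n,T})$ is a genuine random variable.

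\textbf{Step 2: the limit process.} $\mathcal{C}(u,v) = \mathbf{g}(u,v)^{\top}\mathbf{Z}(u,v)$ should have a version with continuous paths, so that $\phi(\mathcal{C})$ is a.s. finite and the weak limit is tight and separable. This follows from three facts. First, $\mathcal{G}$ is a continuous Gaussian process, with tightness from the functional CLT. Second, $Q^{X}$ and $Q^{Y}$ are continuous, because Assumption \ref{asu4} rules out flat spots. Third, $\partial_u C$ and $\partial_v C$ are continuous on the relevant domain.

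\textbf{Step 3: the distribution function of the limit.} To turn the result into bands, I would note that the distribution function of $\sup \mathcal{C}$ is continuous at every point except possibly the left endpoint of its support. This follows from Tsirelson's theorem on suprema of Gaussian processes. Quantiles $c_{1-\alpha}$ are then well defined, and $\mathbb{P}(\sup \mathbb{C}_{n,T} \leq c_{1-\alpha}) \to 1-\alpha$. The same argument applied to $\sup|\cdot|$ gives two-sided bands.

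The main obstacle is boundary behaviour near the edges of $[0,1]^{2}$. There the partial derivatives $\partial_u C$ and $\partial_v C$ may be discontinuous, and the delta-method limit in Theorem \ref{thm2} is really a statement about uniform convergence up to the boundary. I would take Theorem \ref{thm2} as stated, with uniform convergence on all of $[0,1]^{2}$, so that the corollary reduces to the continuous mapping argument above. If that proves too strong, the fallback is a Segers-type argument: show that both $\mathbb{C}_{n,T}$ and $\mathcal{C}$ vanish uniformly on $\{u \wedge v \leq \epsilon\} \cup \{u \vee v \geq 1-\epsilon\}$ as $\epsilon \to 0$, using $|\widehat{C}_{n,T} - C| \leq$ marginal errors there. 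Then apply the continuous mapping theorem on $[\epsilon, 1-\epsilon]^{2}$ and let $\epsilon \downarrow 0$.
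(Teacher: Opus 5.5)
Your proposal is correct and follows the same route as the paper, which also applies the continuous mapping theorem to the functional convergence of Theorem~\ref{thm2} in $(\mathbb{D}([0,1]^{2}), \norm{\cdot}_{\infty})$, using that the supremum functional is $1$-Lipschitz for the uniform norm. Your choice $\phi(f)=\sup f$ matches the displayed statement more closely than the paper's $\Psi(f)=\sup|f|$, but both maps are $1$-Lipschitz, so either works; your Steps 1--3 (measurability, path continuity of $\mathcal{C}$, Tsirelson-type continuity of the law of the supremum) are extra checks the paper does not carry out and are not needed for the corollary itself.
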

Corollary \ref{corband} employs the continuous mapping theorem for functional convergence in law, since the supremum functional is continuous in the uniform metric. However, the limiting distribution of the supremum point is non-standard, so we propose a simulation procedure to construct valid confidence bands.

We define the $(1- \alpha)$-quantile of the right-hand side in \eqref{equation:sup} as follows:
\begin{equation} \label{equation:quantile}
q_{1- \alpha}^{CB} = \inf \left\{c>0: \mathbb{P} \left( \sup_{(u,v) \in[0,1]^{2}}| \mathcal{C}(u,v)| \leq c \right) \geq 1- \alpha \right\}.
\end{equation}
Thus, it follows that as $T \rightarrow \infty$,
\begin{equation}
\mathbb{P} \left( \sup_{(u,v) \in[0,1]^{2}} \bigl| \widehat{C}_{n,T}(u,v) - C(u,v) \bigr| \leq \frac{q_{1-\alpha}^{CB}}{ \sqrt{T}} \right) \rightarrow 1- \alpha,
\end{equation}
so an asymptotic $(1- \alpha)$ uniform confidence band for $C(u,v)$ is given by
\begin{equation}
\left\{ \left[ \widehat{C}_{n,T}(u,v) \pm \frac{q_{1-\alpha}^{CB}}{ \sqrt{T}} \right], \quad \mbox{for all }
(u,v) \in [0,1]^{2} \right\}.
\end{equation}
To compute the constant $q_{1- \alpha}^{CB}$, we follow a Monte Carlo approach, which builds on a consistent estimator of the covariance function of $\mathcal{C}(u,v)$. To implement it, we discretize $[0,1]^{2}$ into an $m \times m$ grid $\left \{(u_{i}, v_{j}) \right\}_{i,j=1}^{m}$ and form the corresponding $m^{2} \times m^{2}$ covariance matrix, $\widehat{ \mathrm{avar}}_{ \mathcal{C}}$. Based on the Cholesky decomposition of $\widehat{ \mathrm{avar}}_{ \mathcal{C}} = LL^{ \top}$, we repeat the following procedure for $b = 1, \ldots, B$:
\begin{enumerate}
\item Draw $z^{(b)} \overset{\text{i.i.d.}}{ \sim} N(0, I_{m^{2}})$.
\item Set $\mathcal{Z}^{(b)} = Lz^{(b)}$, such that $\mathcal{Z}^{(b)} \sim N(0, \widehat{ \mathrm{avar}}_{ \mathcal{C}})$.
\item Compute $S_{(b)} = \max_{1 \leq k \leq m^{2}} \bigl| \mathcal{Z}^{(b)}_{k} \bigr|$.
\item Retrieve the $(1- \alpha)$-quantile of $\left \{S_{(1)}, \ldots, S_{(B)} \right\}$:
\begin{equation}
\widehat{q}_{1- \alpha}^{CB} = S_{(\lceil(1- \alpha)B \rceil)}.
\end{equation}
\item Construct the feasible uniform confidence band:
\begin{equation}
\left\{ \left[  \widehat{C}_{n,T}(u,v) \pm \frac{ \widehat{q}_{1- \alpha}^{CB}}{ \sqrt{T}} \right], \quad \mbox{for all }(u,v) \in [0,1]^{2} \right\}.
\end{equation}
\end{enumerate}
The validity of this resampling procedure is formally established by the next proposition, which is a functional version of Slutsky's theorem.

\begin{pro} \label{confidenceband}
We assume the conditions of Theorem \ref{thm2} and Corollary \ref{cor1} - \ref{cor2} hold, such that
\begin{equation}
\sqrt{T} \left( \widehat{C}_{n,T}(u,v) - C(u,v) \right) \Rightarrow \mathcal{C},
\end{equation}
where $\mathcal{C}$ is a centered Gaussian process with continuous covariance function $\mathrm{avar}_{ \mathcal{C}}(u,v,u,v)$ bounded away from zero, i.e. $\inf_{(u,v) \in [0,1]^{2}} \mathrm{avar}_{ \mathcal{C}}(u,v,u,v) > 0$. Also, let $\widehat{ \mathrm{avar}}_{ \mathcal{C}}$ be the plug-in estimator defined in \eqref{escov_C}, which consistently estimates the covariance function of $\mathcal{C}$ at the grid points. Conditionally on $\widehat{ \mathrm{avar}}_{ \mathcal{C}}$, let $( \widehat{ \mathcal{C}}_{n,T})$ be a centered Gaussian processes with covariance matrix $\widehat{ \mathrm{avar}}_{ \mathcal{C}}$. Write
\begin{equation}
\widehat{q}_{1- \alpha}^{CB} = \inf \left\{c>0: \mathbb{P} \left( \sup_{i,j} \big| \widehat{ \mathcal{C}}_{n,T}(u_{i},v_{j}) \big| \leq c \mid \widehat{ \mathrm{avar}}_{ \mathcal{C}} \right) \geq 1-\alpha \right\},
\end{equation}
Then, as the grid is made arbitrarily fine, i.e. $m \rightarrow \infty$, with $\Delta_{n} \rightarrow 0$ and $T \rightarrow \infty$, for any constant $c \geq 0$, we deduce that
\begin{equation}
\mathbb{P} \left( \sup_{(u, v) \in [0,1]^{2}} \big| \widehat{ \mathcal{C}}_{n,T}(u, v) \big| \leq c \mid \widehat{ \mathrm{avar}}_{ \mathcal{C}} \right) \overset{ \mathbb{P}}{ \longrightarrow}
\mathbb{P} \left( \sup_{(u,v) \in [0,1]^{2}} \left| \mathcal{C}(u,v) \right| \leq c \right),
\end{equation}
\end{pro}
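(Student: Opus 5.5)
We prove the statement by splitting the distance between the two conditional/unconditional laws into three pieces. The first is a grid discretization of the limit process $\mathcal{C}$. The second is a finite-dimensional Gaussian comparison on the grid. The third is a link between the sup over the grid and the sup over $[0,1]^{2}$ for the simulated process $\widehat{\mathcal{C}}_{n,T}$, which is only defined through its grid covariance. Throughout we interpret $\widehat{\mathcal{C}}_{n,T}$ off the grid as its piecewise-constant (or multilinear) interpolation from the grid points $\{(u_i,v_j)\}$, so that $\sup_{(u,v)}|\widehat{\mathcal{C}}_{n,T}| = \max_{i,j}|\widehat{\mathcal{C}}_{n,T}(u_i,v_j)|$.

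\emph{Step 1 (the limit process).} Since $\mathcal{C}$ is the weak limit in Theorem \ref{thm2} in the uniform topology, and it is a Gaussian process built from the Brownian bridge $\mathcal{G}$ and the continuous maps $Q^X,Q^Y,\partial_uC,\partial_vC$, it has a version with uniformly continuous sample paths on the compact $[0,1]^{2}$; tightness of $\mathcal{G}$ under Assumptions \ref{asu5}--\ref{asu4} gives this. Hence $M_m:=\max_{i,j}|\mathcal{C}(u_i,v_j)| \to M:=\sup_{[0,1]^2}|\mathcal{C}|$ almost surely as the mesh goes to zero. We then need the distribution function of $M$ to be continuous at every $c>0$. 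For this we invoke Tsirelson's theorem / the Gaussian anti-concentration (Lifshits) result for suprema of Gaussian processes; the assumption $\inf \mathrm{avar}_{\mathcal{C}}(u,v,u,v)>0$ guarantees that $M$ has no atom on $(0,\infty)$. At $c=0$ both sides are zero, since the variances are positive. So $\mathbb{P}(M_m\le c)\to\mathbb{P}(M\le c)$ for every $c$.

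\emph{Step 2 (grid comparison).} For fixed $m$, Corollary \ref{cor1}, Corollary \ref{cor2} and \eqref{equation:partial_C} give $\widehat{\mathrm{avar}}_{\mathcal{C}}\overset{\mathbb{P}}{\to}\Sigma_m$ entrywise. Here $\Sigma_m$ is the true $m^2\times m^2$ covariance of $(\mathcal{C}(u_i,v_j))$, and we use continuity of $Q^X,Q^Y$ to replace $Q_{n,T}$ by $Q$. The map $\Sigma\mapsto \mathbb{P}(\max_k|(\Sigma^{1/2}z)_k|\le c)$ is continuous at $\Sigma_m$, because the max of a nondegenerate Gaussian vector has a continuous distribution, again using the positive diagonal. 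The continuous mapping theorem then yields $\mathbb{P}(\max_{i,j}|\widehat{\mathcal{C}}_{n,T}(u_i,v_j)|\le c\mid\widehat{\mathrm{avar}}_{\mathcal{C}})\overset{\mathbb{P}}{\to}\mathbb{P}(M_m\le c)$ for each fixed $m$.

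\emph{Step 3 (joint limit, main obstacle).} Combining Steps 1 and 2 requires $m\to\infty$ jointly with $n,T$. We handle this by a diagonal argument. For each $m$ choose $N_m$ such that for $n\ge N_m$ (with $T=T_n$) the error in Step 2 is below $1/m$ with probability at least $1-1/m$. Then let $m=m_n\to\infty$ slowly enough. The true difficulty is that convergence of $\widehat{\mathrm{avar}}_{\mathcal{C}}$ must hold uniformly over the grid as it grows, not merely pointwise. If a rate-free diagonalization is not accepted, we would instead prove uniform consistency of $\widehat{\mathrm{avar}}_{\mathcal{G}}$ and of the kernel derivative estimators over $[0,1]^2$. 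This follows from Lipschitz continuity of $K$ and of the indicator averages in $(x,y)$, plus a chaining/bracketing argument using monotonicity in $(x,y,x',y')$. Given this, a Gaussian comparison inequality (Chernozhukov--Chetverikov--Kato) bounds the Kolmogorov distance between the two max-statistics by a power of $\max_{k,l}|\widehat{\mathrm{avar}}_{kl}-\Sigma_{m,kl}|$ times $\log(m^2)$. The resulting term vanishes provided $m$ grows slower than any exponential in the inverse estimation error. This gives the claimed convergence in probability for every $c\ge0$.
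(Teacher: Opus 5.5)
Your proposal is correct and follows essentially the same route as the paper. Both arguments use a fixed-grid Gaussian comparison obtained from consistency of $\widehat{\mathrm{avar}}_{\mathcal{C}}$ and the continuous mapping theorem, then refine the grid using almost sure continuity of the paths of $\mathcal{C}$, and finish with an iterated limit: the paper's ``first choose $m$, then $(n,T)$'' is exactly your diagonal argument, so the Chernozhukov--Chetverikov--Kato route is not needed. Your Tsirelson-type anti-concentration step is a genuine improvement, because it is what delivers the claim for every $c \geq 0$, whereas the paper argues only at continuity points and tacitly assumes the limiting distribution function is continuous. Note also that continuity of the law of $\sup|\mathcal{C}|$ on $(0,\infty)$ does not need the variance lower bound, contrary to your Step 1. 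This is fortunate, since $\mathcal{C}$ vanishes identically where $u=0$ or $v=0$, so that hypothesis cannot literally hold on the closed square, and your argument survives with just one positive variance.
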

It follows from the proposition that $\widehat{q}_{1-\alpha}^{CB} \overset{ \mathbb{P}}{ \longrightarrow} q_{1-\alpha}^{CB}$. Therefore, the uniform confidence band constructed from $\widehat{q}_{1- \alpha}^{CB}$ has the correct asymptotic coverage:
\begin{equation}
\mathbb{P} \left( \sup_{(u,v) \in[0,1]^{2}} \bigl| \widehat{C}_{n,T}(u,v) - C(u,v) \bigr| \leq \frac{ \widehat{q}_{1- \alpha}^{CB}}{ \sqrt{T}} \right) \rightarrow 1- \alpha.
\end{equation}

\subsection{Goodness-of-fit testing}

The previous analysis can also be extended to construct a goodness-of-fit test of the volatility copula function, which is another prominent application of our functional convergence results. In particular, we introduce a null hypothesis and an alternative hypothesis:
\begin{equation} \label{equation:hypothesis}
\mathcal{H}_{0}: C = C_{0} \quad \text{and} \quad \mathcal{H}_{1}: C \neq C_{0}.
\end{equation}
Here, equality is almost everywhere on $[0,1]^{2}$.

To construct a testing procedure, we let
\begin{equation}
G_{n,T}(u,v) \equiv \sqrt{T} \left( \widehat{C}_{n,T}(u,v) - C_{0}(u,v) \right).
\end{equation}
We now arrive at the following statement, which again follows from the continuous mapping theorem for functional convergence applied to Theorem \ref{thm2}.
\begin{corollary} \label{cortest}
Under the assumptions of Theorem \ref{thm2} and conditional on $\mathcal{H}_{0}$:
\begin{equation} \label{eq:limitsup}
\lVert G_{n,T} \rVert_{L_{2}}^{2} \overset{d}{ \longrightarrow} \lVert \mathcal{C} \rVert_{L_2}^{2}.
\end{equation}
\end{corollary}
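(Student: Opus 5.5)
Under $\mathcal{H}_{0}$ we have $C = C_{0}$ almost everywhere on $[0,1]^{2}$, so $G_{n,T}(u,v) = \sqrt{T}(\widehat{C}_{n,T}(u,v) - C(u,v))$ for almost every $(u,v)$. Since $G_{n,T}$ enters only through the integral $\int_{[0,1]^{2}} G_{n,T}(u,v)^{2}\,\mathrm{d}u\,\mathrm{d}v$, modifying it on a Lebesgue-null set does not change $\lVert G_{n,T}\rVert_{L_{2}}^{2}$. We may therefore replace $C_{0}$ by $C$ throughout. Theorem \ref{thm2} then gives $G_{n,T} \Rightarrow \mathcal{C}$ in the space of bounded (c\`{a}dl\`{a}g) functions on $[0,1]^{2}$ with the uniform topology. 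The plan is to apply the continuous mapping theorem to the functional $\Phi(f) = \int_{[0,1]^{2}} f(u,v)^{2}\,\mathrm{d}u\,\mathrm{d}v$.

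\textbf{Measurability.} $\Phi$ must be well defined on the elements involved. $\widehat{C}_{n,T}$ is a bounded, coordinatewise monotone step function (a cumulative distribution function in $(u,v)$), and $C_{0}$ is a copula, hence Lipschitz. So $G_{n,T}$ is bounded and Borel measurable, and $\Phi(G_{n,T})$ is a finite random variable. The limit $\mathcal{C} = \mathbf{g}^{\top}\mathbf{Z}$ is a measurable Gaussian process with bounded paths: $\mathcal{G}$ is a tight Brownian bridge composed with the monotone quantile maps, and $\mathbf{g}$ is bounded because partial derivatives of a copula lie in $[0,1]$. Hence $\Phi(\mathcal{C})$ is well defined almost surely.

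\textbf{Continuity.} For bounded $f,h$,
\begin{equation*}
|\Phi(f) - \Phi(h)| \leq \int_{[0,1]^{2}} |f-h|\,|f+h| \leq \lVert f-h\rVert_{\infty}\left(\lVert f\rVert_{\infty} + \lVert h\rVert_{\infty}\right),
\end{equation*}
because $[0,1]^{2}$ has unit Lebesgue measure. Thus $\Phi$ is continuous in the uniform metric at every bounded function.

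\textbf{Conclusion.} The continuous mapping theorem (in its Hoffmann-J{\o}rgensen form, which handles possible non-measurability in the nonseparable uniform space) then yields $\lVert G_{n,T}\rVert_{L_{2}}^{2} \overset{d}{\longrightarrow} \lVert \mathcal{C}\rVert_{L_{2}}^{2}$.

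\textbf{Main obstacle.} The only delicate point is that the mode of convergence in Theorem \ref{thm2} must be strong enough for the continuous mapping argument. If the convergence is only established under the uniform metric on the non-separable Skorokhod space, I would note that $\mathcal{C}$ has a separable (indeed continuous) support, since $\mathcal{G}$ and the quantile maps are continuous under Assumption \ref{asu4}. With separable support, weak convergence in the Hoffmann-J{\o}rgensen sense holds and the continuous mapping theorem applies directly. If that fails, the fallback is to restrict $\Phi$ to a Riemann-sum approximation on a fine grid: apply finite-dimensional convergence to the grid sum, then let the mesh tend to zero. Asymptotic equicontinuity of $G_{n,T}$, which follows from the tightness part of Theorem \ref{thm2}, controls the difference between the grid sum and the integral uniformly in $n$.
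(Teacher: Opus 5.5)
Your proposal is correct and follows essentially the same route as the paper: you invoke the functional convergence $G_{n,T} \Rightarrow \mathcal{C}$ from Theorem \ref{thm2} under $\mathcal{H}_{0}$ and apply the continuous mapping theorem to $\Phi(f)=\int_{[0,1]^{2}} f^{2}\,\mathrm{d}u\,\mathrm{d}v$, whose continuity in the uniform metric rests on the same elementary bound $|f^{2}-h^{2}| \leq |f-h|\,(|f|+|h|)$. Your extra remarks on the almost-everywhere form of $\mathcal{H}_{0}$, on measurability, and on the Hoffmann-J{\o}rgensen version of the mapping theorem are refinements the paper leaves implicit, and the Riemann-sum fallback is not needed.
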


The limiting distribution in Corollary \ref{cortest} (a sort of weighted infinite mixture of $\chi^{2}$ random variables) is again not standard, so we employ a Monte Carlo procedure, closely related to the one outlined above, to form a test of $\mathcal{H}_{0}$ versus $\mathcal{H}_{1}$. In particular, we let $q_{1- \alpha}^{GF}$ be a consistent estimator of the $(1- \alpha)$-quantile of the distribution of $\lVert \mathcal{C} \rVert_{L_{2}}^{2}$, and we define a rejection region:
\begin{equation}
\mathcal{R} = \left\{ \lVert G_{n,T} \rVert_{L_{2}}^{2} \geq q_{1- \alpha}^{GF} \right\}.
\end{equation}
Then, in view of Corollary \ref{cortest},
\begin{equation}
\mathbb{P}( \mathcal{R} \mid \mathcal{H}_0) \rightarrow \alpha \quad \text{and} \quad \mathbb{P}( \mathcal{R} \mid \mathcal{H}_{1}) \rightarrow 1.
\end{equation}
To obtain $q_{1-\alpha}^{GF}$, we approximate the distribution of the random variable $\lVert \mathcal{C} \rVert_{L_{2}}^{2}$ by a weighted Riemann sum on a, possibly unequally spaced, grid $\{(u_{i},v_{j}) \}_{i,j=1}^{m} \subset [0,1]^{2}$, i.e.
$\sum_{1 \leq i,j \leq m} w_{ij} \mathcal{C}(u_{i},v_{j})^{2}$, where $w_{ij}$ are weights induced by the grid and $\sum_{i,j} w_{ij} = 1$. Note that $\{ \mathcal{C}(u_{i}, v_{j}) \}_{i,j=1}^{m}$ is a sequence of $m^{2}$ normal random variates with explicit covariance matrix given by \eqref{equation:Gamma} following Theorem \ref{thm2}, which we denote by $\Gamma$. Therefore,
\begin{equation}\label{equation:chi-square}
\sum_{i,j} w_{ij} \mathcal{C}(u_{i}, v_{j})^{2} \overset{d}{=} \sum_{k=1}^{m^{2}} \pi_{k} \chi_{k}^{2},
\end{equation}
where $\chi_{k}^{2}$ are i.i.d. $\chi^{2}(1)$ random variables and $\pi_{k}$ are the eigenvalues of $W^{1/2} \Gamma W^{1/2}$ with $W = \mathrm{diag} \big( \{w_{ij} \}_{i,j=1}^{m} \big)$, for $k = 1, \dots, m^{2}$.

We draw repeated observations of the right-hand side of \eqref{equation:chi-square} based on the eigenvalues from the estimator $W^{1/2} \widehat{ \Gamma} W^{1/2}$ to simulate the distribution of $\lVert G_{n,T} \rVert_{L_{2}}^{2}$ under $\mathcal{H}_{0}$, from which an appropriate quantile can be extracted.\footnote{Since $\widehat{ \Gamma}$ is an estimator of a covariance matrix, it can possess negative eigenvalues. We follow \citet{andersen-su-todorov-zhang:24a} and retain only those terms in the sum associated with positive eigenvalues.}

To estimate the covariance matrix $\Gamma$, we rearrange the pairs $\{(u_{i}, v_{j})\}_{i,j=1}^{m}$ as $\{(u_{k}', v_{k}')\}_{k=1, \dots, m^{2}}$ and compute $\widehat{ \Gamma}_{k \ell} =  \hat{g}_{k}^{ \top} \hat{A}_{k \ell} \hat{g}_{ \ell}$, where
\begin{equation}
\begin{aligned}
\hat{g}_{k} &= \left[1, - \widehat{ \partial_{u}C}(u_{k}, v_{k}), - \widehat{ \partial_{v}C}(u_{k},v_{k}) \right]^{ \top}, \\
\hat{A}_{kl} &= \text{cov} \big( \hat{ \mathbf{Z}}(u_{k},v_{k}), \hat{ \mathbf{Z}}(u_{l},v_{l}) \big),
\end{aligned}
\end{equation}
and
\begin{equation}
\hat{ \mathbf{Z}}_{k} = \left[ \mathcal{G} \big(Q_{n,T}^{X}(u_{k}),Q_{n,T}^{Y}(v_{k}) \big), \mathcal{G} \big(Q_{n,T}^{X}(u_{k}), \infty \big), \mathcal{G} \big( \infty, Q_{n,T}^{Y}(v_{k}) \big) \right]^{ \top}.
\end{equation}
It is also possible to test a composite null hypothesis:
\begin{eqnarray*}
\mathcal{H}_0: C \in \mathcal{S} \qquad \text{versus} \qquad \mathcal{H}_{a}: C \notin \mathcal{S},
\end{eqnarray*}
where $\mathcal{S} = \left\{C_{ \theta} : \theta \in \Theta \subseteq \mathbb{R}^{p} \right\}$ is a statistical model for the copula function, and $\theta$ is a $p$-dimensional parameter vector.

In practice, the true value of the copula parameter, $\theta_{0}$, is unknown. A complication occurs if the parameter is estimated from the full sample, in which case the estimator, $\hat{ \theta}$, is typically $\sqrt{T}$-consistent matching the convergence rate of $\widehat{C}_{n,T}(u,v)$. Then, $G_{n,T}(u,v) = \sqrt{T}( \widehat{C}_{n,T}(u,v) - C_{ \hat{ \theta}}(u,v)) = \sqrt{T}(\widehat{C}_{n,T}(u,v) - C_{ \theta_{0}}(u,v)) - \sqrt{T}( \hat{ \theta} - \theta_{0})^{ \top} \partial_{ \theta} C_{ \theta_{0}}(u,v) + o_{p}(1)$, where $\partial_{ \theta} C_{ \theta_{0}}(u,v) =  \partial_{ \theta} C_{ \theta}|_{ \theta = \theta_{0}}$, so $G_{n,T}$ converges in law to a Gaussian process \emph{plus} a bias term that is a linear functional of $\sqrt{T}( \hat{ \theta} - \theta_{0}) \overset{d}{ \longrightarrow} \mathcal{W} \sim N(0, \Sigma_{ \theta})$. Hence, the $L^{2}$-statistic $\lVert G_{n,T} \rVert_{L_{2}}^{2} \overset{d}{ \longrightarrow} \lVert \mathcal{C} - \langle \partial_{ \theta} C_{ \theta_{0}}, \mathcal{W} \rangle \rVert_{L_2}^{2}$ is a quadratic form of a Gaussian process, which depends on unknown quantities. In other words, the asymptotic distribution is non-pivotal, as it depends on the parameter vector $\theta_{0}$, the gradient $\partial_{ \theta} C_{ \theta_{0}}(u,v)$ and the law of $(\mathcal{W}, \mathcal{C})$. However, we can easily resample the empirical process and the plug-in estimator of the covariance function using a parametric or multiplier bootstrap under $C_{\hat{ \theta}}$ to approximate the distribution of $\lVert G_{n,T} \rVert_{L_{2}}^{2}$ and construct critical values without analytically having to remove the dependence on $\theta_{0}$.

On the other hand, to avoid the effect of $\hat{\theta}$, we can estimate the copula parameter such that $\hat{\theta} - \theta_{0}$ achieves a convergence rate faster than $\widehat{C}_{n,T}(u,v) - C_{ \theta_{0}}(u,v)$.\footnote{This can be accomplished, for example, by reserving data in a shorter time interval, say $[0, T^{\eta}]$ with $0 < \eta < 1$, to construct $\widehat{C}_{n,T}$, but still employing data over the whole time interval $[0,T]$ to get $\hat{\theta}$.} In this case, the test statistic is asymptotically pivotal and we can replace $C_{0}$ by $C_{\hat{\theta}}$ in the previous testing procedure with a single null hypothesis and attain a valid inference that does not involve nuisance parameters and is shape-only. Hence, it suffices to prepare a single table of critical values.

\section{Simulation analysis} \label{section:simulation}

In this section, we conduct a Monte Carlo study to explore the finite sample properties of our realized estimator of the empirical copula of stochastic volatility. We restrict attention to the bivariate setting and assume that a pair of asset log-price processes follow the motion
\begin{equation} \label{equation:Xsim}
\mathrm{d}X_{i,t} = \mu_{i} \mathrm{d}t + \sqrt{V_{i,t}} \mathrm{d}W_{i,t},
\end{equation}
where
\begin{equation} \label{equation:Vsim}
\mathrm{d} \log V_{i,t} = \kappa \left( \eta_{i} - \log V_{i,t} \right) \mathrm{d}t + \beta_{i} \mathrm{d}B_{i,t},
\end{equation}
for $i = 1$ and $2$.

In the above, $W$ and $B$ are Brownian motions with leverage correlation $\text{corr}[\mathrm{d}W_{i,t} \mathrm{d}B_{i,t}] = \rho = -\sqrt{0.5}$. We assume a common speed of mean reversion $\kappa = 10$. The other parameters are $\mu_{1} = 0.10$, $\eta_{1} = \log(0.05)$, $\beta_{1} = 3$, $\mu_{2} = 0.05$, $\eta_{2} = \log(0.01)$, and $\beta_{2} = 3$. This design is intended to represent a risky asset with an expected return and volatility of 10\% and 20\% and a near risk-free asset with an expected return and volatility of 5\% and 10\%. It emulates our empirical work in Section \ref{section:empirical}, where we look at high-frequency data from stock index and medium-duration treasury bond futures contracts.

A routine calculation for the stationary Gaussian Ornstein-Uhlenbeck process shows that the unconditional distribution of $\log V_{i,t}$ has the form
\begin{equation}
\log V_{i,t} \sim N \left( \eta_{i}, \frac{ \beta_{i}^{2}}{2 \kappa} \right).
\end{equation}
We denote the distribution function as $F_{ \log V_{i}}(v)$, for $- \infty < v < \infty$.

The univariate log-variance processes are tied together with a Gumbel copula:
\begin{equation} \label{equation:Gumbel}
C(u,v) = \exp \left(- \left[(- \log u)^{ \theta} + (- \log v)^{ \theta} \right]^{1 / \theta} \right), \quad \text{for } 0 \leq u,v \leq 1,
\end{equation}
which is parameterized by $\theta \in [1, \infty)$. A larger value indicates a stronger relationship. The independence copula---$C(u,v) = uv$---corresponds to $\theta = 1$, while the comonotonicity copula---$C(u,v) = \min(u,v)$---induced by the upper Fr\'{e}chet-Hoeffding bound is the limit as $\theta \rightarrow \infty$. In our empirical application, we explore the properties of this particular copula, because our goodness-of-fit test suggests that it provides an excellent description of the dependence between the log-variance of the equity and treasury bond market. In line with our maximum likelihood estimate from that section, we set $\theta = 2$.

To construct a realization of this process, we begin by drawing from the Gumbel copula to generate a pair of dependent random variables $(U_{1,0}, U_{2,0})$, where $U_{i,0} \sim U(0,1)$. These serve to initialize the log-variance processes via $\log V_{i,0} = F_{\log V_i}^{-1}(U_{i,0})$. We then apply an Euler approximation with constant step size $\Delta$ to iteratively generate the sample path of \eqref{equation:Xsim}--\eqref{equation:Vsim} forward in time, using the Metropolis--Hastings algorithm to ensure that each proposal update is consistent with the target distribution.\footnote{The symbol $\Delta$ does double-duty in this context. In isolation it is the time increment $\Delta = t - s$, but when it is applied to a stochastic process $Y$, it represents the difference operator $\Delta Y_{i,t} = Y_{i,t} - Y_{i,s}$.}

In the ``continuous-time'' version of the model, we set $\Delta = 1/N$ with $N = 23{,}400$. We assume data is captured over a time interval $[0,T]$ with $T = 50, 100, \dots, 2000$, but the log-price process is only observed at a much coarser equidistant grid $t_{i} = i/n$ with $n = 39, 78, 390$, for $i = 0, 1, \dots, nT$. This design is intended to imitate a security that is traded in a 6.5 hour window each day and whose price is recorded at the ten-, five-, and one-minute frequency over at most an eight-year period. This resembles the dataset analyzed in our empirical application. To estimate the latent spot volatility, associated with each $n$ we employ a bandwidth of $h_{n} = 36, 48, 120$. With the above interpretation, this amounts to a six-, four-, and two-hour window. Hence, as $n$ increases we collect a larger amount of high-frequency data while reducing the time span of the estimation window, as stipulated by the rate conditions in the asymptotic theory. These choices follow \citet{li-todorov-tauchen:13a} and \citet{christensen-thyrsgaard-veliyev:19a}.\footnote{In unreported results that are available at request, we conducted a more extensive set of numerical experiments with other bandwidth choices. This showed that the average squared estimation error of the spot volatility estimator was basically flat over an extended region, so long as $h_{n}$ was not too small or too large.}

A sample path of each log-variance process, together with the log-realized variance, is illustrated in Panel A of Figure \ref{figure:illustration}. In Panel B, we report a scatter of 500 randomly selected pairs of the foregoing series, after transforming them through their empirical distribution functions. We contrast this with selected contours of the probability density function of the Gumbel copula. The latter are increasing as we move toward the 45$^{\circ}$-line and eventually disconnect at the lower-left and upper-right corners. The main takeaway is that while the log-variance observations are dispersed as expected, the estimation error in the log-realized variance shifts several observations into regions where the Gumbel copula has almost no concentration of mass.

\begin{figure}[t!]
\begin{center}
\caption{Illustration of bivariate SV model with Gumbel copula.}
\label{figure:illustration}
\begin{tabular}{cc}
\small{Panel A: Log-variance.} & \small{Panel B: Copula.} \\
\includegraphics[height=8cm,width=0.47\textwidth]{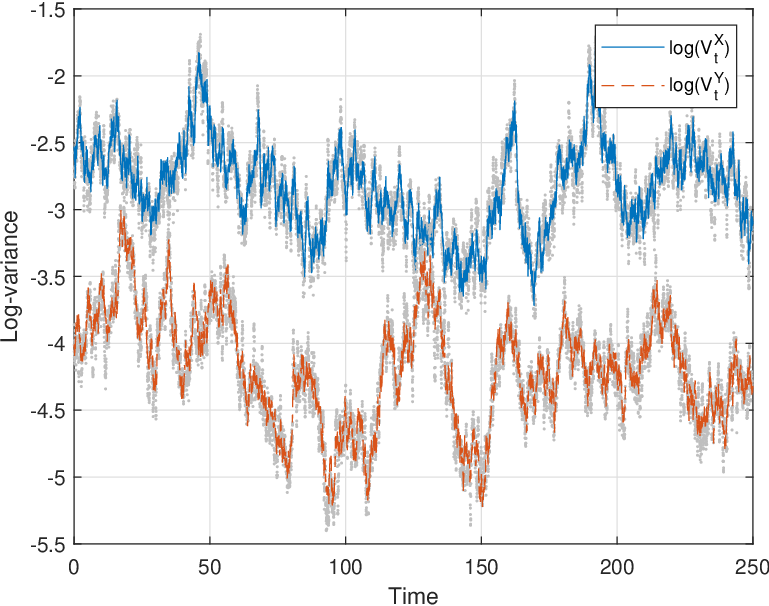} &
\includegraphics[height=8cm,width=0.47\textwidth]{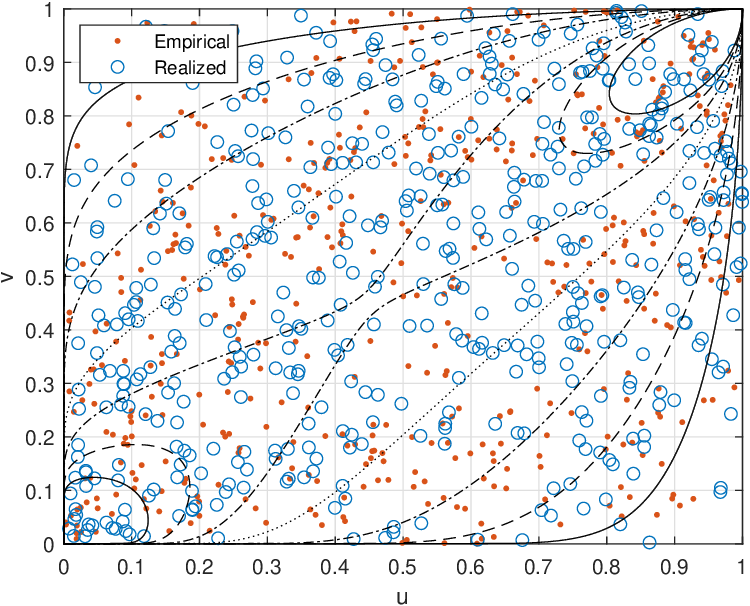}
\end{tabular}
\begin{footnotesize}
\parbox{\textwidth}{\emph{Note.} In Panel A, we show a realization of the log-variance processes of $X$ and $Y$ together with the associated log-realized variance series. The latter appear as a grey-shaded color in the background. In Panel B, we construct a scatter of 500 randomly selected pairs of the foregoing series, after transforming them through their empirical distribution function. We contrast this with selected contours of the probability density function of the Gumbel copula, which are increasing as we move toward the 45$^{\circ}$-line.}
\end{footnotesize}
\end{center}
\end{figure}

The outcome of the full-blown simulation analysis is summarized in Figure \ref{figure:copula}.  We perform $M = 1{,}000$ repetitions in total. In Panel A, we construct a decile plot of the bivariate distribution function of the Gumbel copula, which we contrast against the empirical copula of volatility---based on $T = 2{,}000$---along with the estimated one recovered by the realized version. We observe that even with a conservative amount of high-frequency data, the empirical copula is close to its stationary counterpart with only some minor deviations. In contrast, the contours of the realized version are generally farther away and located to the northeast of the empirical copula. The explanation is that the sampling variation in the realized variances render them less correlated and shifts mass in the direction of the independence copula.\footnote{The level curves of the independence copula are given by $u = c/v$ for $c \in (0,1)$.} As $n$ increases, however, the realized copula of volatility steadily trends toward the empirical counterpart, which is a validation of the asymptotic theory developed in Section \ref{section:theory}.

\begin{figure}[t!]
\begin{center}
\caption{Simulation result.}
\label{figure:copula}
\begin{tabular}{cc}
\small{Panel A: Deciles of copula.} & \small{Panel B: RMSE.} \\
\includegraphics[height=8cm,width=0.47\textwidth]{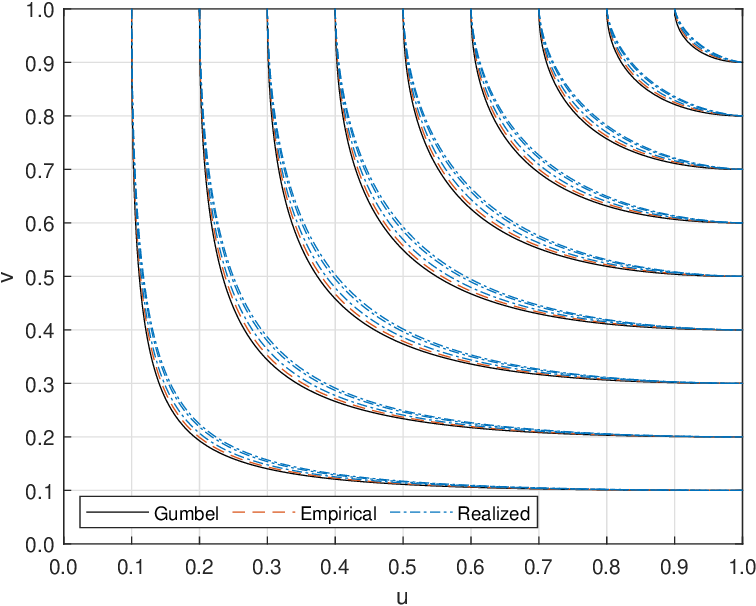} &
\includegraphics[height=8cm,width=0.47\textwidth]{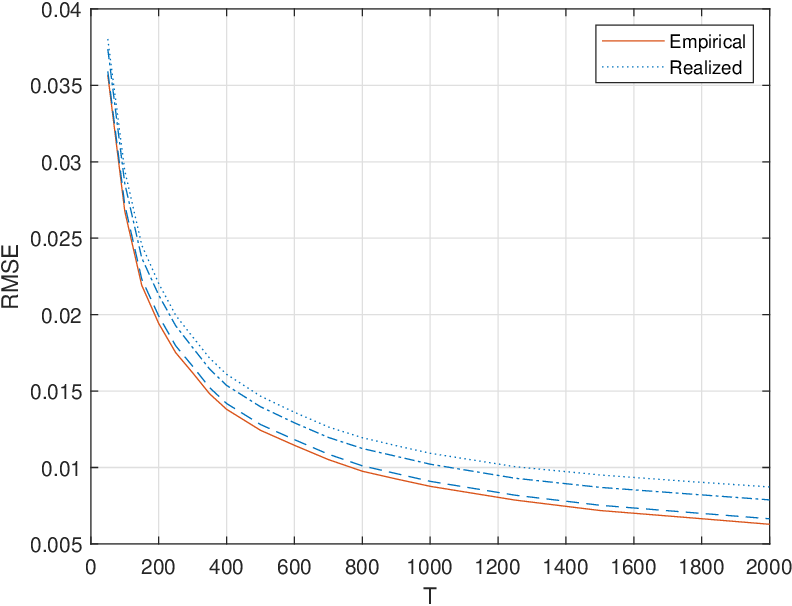}
\end{tabular}
\begin{footnotesize}
\parbox{\textwidth}{\emph{Note.} In Panel A, we construct a decile plot of the distribution function of the Gumbel copula, which we compare to the deciles of the empirical copula of volatility (based on $T = 2{,}000$) and the realized version (with $n = 30$ to $n = 390$ as we approach the empirical copula). In Panel B, we plot the RMSE of the empirical and realized copula of volatility in terms of their estimation accuracy for the limiting Gumbel copula.}
\end{footnotesize}
\end{center}
\end{figure}

In Panel B of Figure \ref{figure:copula}, we report the root mean squared error (RMSE):
\begin{equation}
\mathrm{RMSE}(x) = \frac{1}{M} \sum_{m=1}^{M} \sqrt{ \int_{0}^{1} \int_{0}^{1} [ C_{m}^{x}(u,v) - C(u,v) ]^{2} \mathrm{d}u \mathrm{d}v},
\end{equation}
where $C(u,v)$ is the Gumbel copula given by the expression in \eqref{equation:Gumbel} and $C_{m}^{x}(u,v)$ represents either the empirical ($x = \mathrm{Empirical}$) or realized ($x = \mathrm{Realized}$) estimate during the $m$th simulation. The figure reveals that, as expected, the RMSE of the empirical copula of volatility approaches zero, as $T \rightarrow \infty$, which is consistent with our theoretical exploration. It takes about a quadrupling of the long-span dimension to reduce the RMSE in half, which reflects its convergence rate. Turning to the realized copula of volatility, we see that the RMSE is generally higher, and while it too drops as $T$ increases, it tends to flatten out strictly above zero. This is because the estimation error does not vanish with a fixed $n$. Boosting $n$, however, lessens the discretization error, and by the time that $n = 390$ the realized version is within proximity of the empirical copula (its target in the in-fill limit as $n \rightarrow \infty$).

To gauge the functional CLT for the realized copula of volatility as an approximation to its finite sample behavior, we next explore the distribution of the asymptotic pivot:
\begin{equation}
Z_{ \mathcal{C}} = \frac{ \sqrt{T} \big( \widehat{C}_{n,T}(u,v) - C(u,v) \big)}{ \sqrt{ \mathrm{avar}_{ \mathcal{C}}(u,v,u,v)}} \overset{d}{ \longrightarrow} N(0,1),
\end{equation}
for $u = v = 0.10, 0.25, 0.50, 0.75, 0.90$.

\begin{figure}[t!]
\begin{center}
\label{figure:inference_of_realized_copula}
\caption{Finite sample distribution of studentized realized copula of volatility.}
\begin{tabular}{cc}
\text{Panel A: Kernel density estimate.} & \text{Panel B: Q--Q plot.} \\
\includegraphics[height=8cm,width=0.47\textwidth]{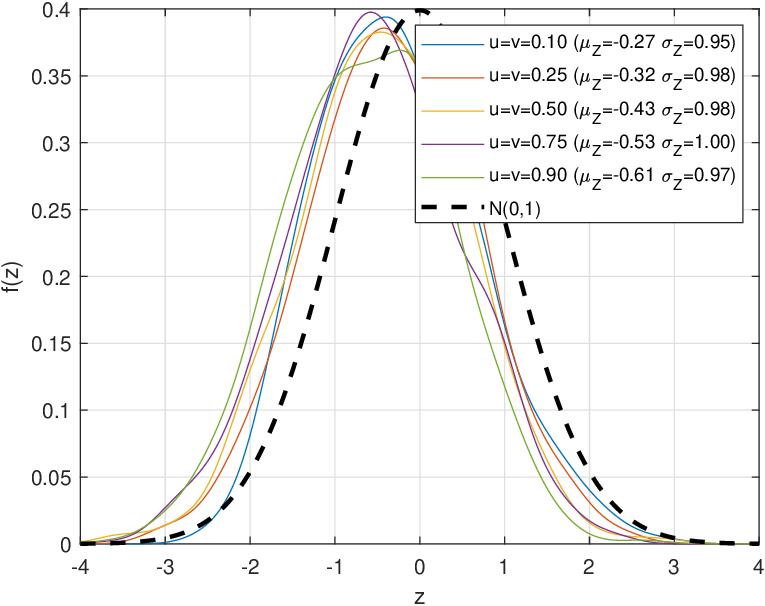} &
\includegraphics[height=8cm,width=0.47\textwidth]{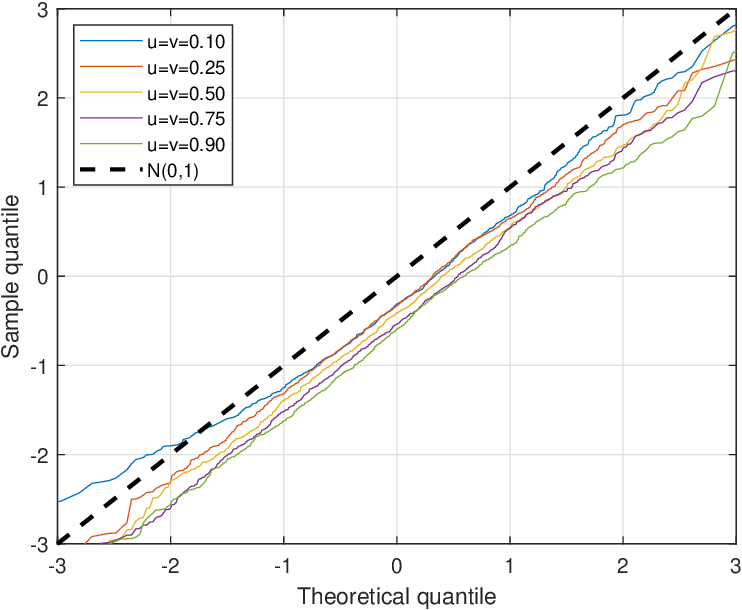}
\end{tabular}
\begin{footnotesize}
\parbox{\textwidth}{\emph{Note.} In Panel A, we report a kernel density estimate (based on a Gaussian kernel with bandwidth chosen by Silverman's rule of thumb) of the studentized statistic $Z_{ \mathcal{C}} = \sqrt{T} \left( \widehat{C}_{n,T}(u,v) - C(u,v) \right)/ \sqrt{ \mathrm{avar}_{ \mathcal{C}}(u,v,u,v)}$, for $n = 390$, $T = 2{,}000$, and $u = v = (0.10,0.25,0.50,0.75,0.90)$. A standard normal curve (dashed line) is superimposed as a visual benchmark. In each setting, we also report the sample average and standard deviation of $Z_{ \mathcal{C}}$. In Panel B, we show the corresponding normal Q--Q plot. The Monte Carlo experiment has $1{,}000$ replications.}
\end{footnotesize}
\end{center}
\end{figure}

Figure \ref{figure:inference_of_realized_copula} presents the outcome. We base the analysis on $n = 390$ and $T = 2{,}000$, which adheres to our empirical work in the next section, and the estimator of the asymptotic variance introduced in Section \ref{section:theory}. In Panel A, we show kernel density estimates, while Panel B reports the associated quantile-quantile (Q--Q) plot. We include the sample average and sample standard deviation of $Z_{ \mathcal{C}}$ for each setting. We observe that the kernel density estimates are well-aligned with the standard normal curve, and the Q--Q plots follow the $45^{o}$-line reasonably. The dispersion is almost as predicted, but the test statistic does exhibit a minuscule downward bias of about 0.2-0.5 standard deviation. Importantly, these numbers are more or less identical to those we compute for the empirical copula of volatility.\footnote{The kernel density estimates and Q--Q plots for the empirical copula of volatility are excluded from the presentation for brevity, but they can be shared at request.}

At last, we turn our attention toward the goodness-of-fit test. We calculate rejection rates of the test statistic in \eqref{eq:limitsup} from Corollary \ref{cortest} with $n = 390$ and $T = 250,500,1000,2000$. To curb the computational cost, we implement the bootstrapping algorithm for producing critical values with 5,000 draws on a $5 \times 5$ grid $(u,v) \in \{0.10, 0.25, 0.50, 0.75, 0.90 \}^{2}$, yielding 25 evaluation markers.\footnote{It is arguably preferable to use a more refined grid, or possibly even alter how the points are scattered over the domain. The intuition is that if the test statistic gets a better resolution of the realized copula of volatility, it has more scope to detect departures from the null, and this may improve rejection rates under the alternative. However, this also increases the dimensionality of the required covariance matrix, impacting memory usage and runtime speed. In line with this argument, we surmise that the power of the test reported here is conservative.}

The null hypothesis (size) is based on the Gumbel copula with $\theta = 2$, which has Kendall's tau given by $\tau = 1 - 1/ \theta$, so that our choice implies $\tau = 0.5$. The alternative hypothesis (power), is given by three distinct parametric models. The first benchmark is the independence copula, $C(u,v) = uv$. This corresponds to the absence of volatility codependency with $\tau = 0$ and represents a stark violation of $\mathcal{H}_{0}$. It provides a natural baseline for assessing the ability of the test to detect tail concentration. The second choice is the Clayton copula, $C(u,v) = \left(u^{- \alpha} + v^{- \alpha} - 1 \right)^{-1/ \alpha}$ with parameter $\alpha \geq 0$. This is another example of an Archimedean copula used to capture positive tail dependence, but in contrast to the Gumbel copula, the dependence is located in the other extreme of the distribution. The Clayton copula has $\tau = \alpha/( \alpha+2)$, and we set $\alpha = 2$---or $\tau = 0.5$---to be consistent with the Gumbel. This design intends to isolate departures from the null in the form of dependence in the opposite direction, while fixing the overall rank correlation summarized by Kendall's tau. At last, we complement the analysis with a power experiment based on an alternative member of the Gumbel class. Specifically, we examine a Gumbel copula with $\theta = 1.5$---or $\tau = 1/3$---that also has upper tail concentration, albeit slightly weaker compared to the null.

\begin{table}[ht!]
\setlength{\tabcolsep}{0.15cm}
\begin{center}
\caption{Rejection rates of the goodness-of-fit test.}
\label{table:gof}
\begin{small}
\begin{tabular}{lcccccccccccc}
\hline \hline
& \multicolumn{3}{c}{$T=250$} & \multicolumn{3}{c}{$T=500$} & \multicolumn{3}{c}{$T=1000$} & \multicolumn{3}{c}{$T=2000$} \\
\cline{2-13}
& 10\% & 5\% & 1\% & 10\% & 5\% & 1\% & 10\% & 5\% & 1\% & 10\% & 5\% & 1\% \\
\hline
\multicolumn{13}{l}{\textit{Panel A: Size}} \\
Gumbel(2) & 0.135 & 0.088 & 0.035 & 0.116 & 0.072 & 0.024 & 0.111 & 0.068 & 0.023 & 0.094 & 0.056 & 0.017 \\
\\
\multicolumn{13}{l}{\textit{Panel B: Power}} \\
Independence & 0.895 & 0.845 & 0.699 & 0.987 & 0.972 & 0.910 & 1.000 & 1.000 & 1.000 & 1.000 & 1.000 & 1.000 \\
Clayton & 0.230 & 0.167 & 0.085 & 0.330 & 0.194 & 0.087 & 0.550 & 0.312 & 0.092 & 0.936 & 0.814 & 0.371 \\
Gumbel(1.5) & 0.428 & 0.335 & 0.174 & 0.511 & 0.418 & 0.255 & 0.672 & 0.581 & 0.378 & 0.850 & 0.780 & 0.600 \\
\hline \hline
\end{tabular}
\end{small}
\smallskip
\begin{scriptsize}
\parbox{\textwidth}{\emph{Note}. We report Monte Carlo rejection rates based on 1,000 replica. The simulation design is described in the main text. Panel A with size is for the null hypothesis. Panel B reports power against the three alternatives listed in the first column. }
\end{scriptsize}
\end{center}
\end{table}

The results are reported in Table \ref{table:gof}. We inspect the nominal significance levels 10\%, 5\%, and 1\%. The size analysis is shown in Panel A. We observe that the test is to a small extent oversized for the lowest value of $T$, but it rapidly settles around the expected level. Moving to Panel B, we learn that the power versus the independence copula is almost unity across scenarios, even at the smallest level of significance. Moreover, the test has a good ability to distinguish between the Gumbel null and the Clayton alternative, especially with larger sample sizes. Lastly, while the rejection rates against the Gumbel(1.5) alternative are initially higher than those of the Clayton copula, but they increase less fast with $T$. The power function is possibly flatter in the Gumbel direction, where the functional form is identical and only the strength of the upper tail dependence is different.

Overall, our simulations suggest that the realized copula of volatility can reliably infer the latent volatility copula at a sampling frequency that is relevant for practical analysis. Moreover, the small sample properties of the standardized statistic are close to those implied by Theorem \ref{thm2}, which is assuring, because it means that confidence intervals and hypothesis tests based on the normal distribution are accurate. We also uncover that the goodness-of-fit test has appropriate size control and excellent power against many relevant alternatives. A word of caution, though. Our goodness-of-fit test merely provides evidence against the null hypothesis as a whole. Thus, a rejection of a simple null in favor of the alternative should not be interpreted as identifying the exact source of misspecification. Indeed, it can arise either because the assumed functional form of the copula is wrong, or because the family is correctly specified but the concrete parameter value is not.

\section{Empirical application} \label{section:empirical}

The forensic analysis presented here aims to determine the codependency between the volatility processes of a pair of leading financial indicators. We delve into high-frequency transaction data from futures contracts that track the aggregate U.S. equity index and treasury bond market, namely the E-mini S\&P 500 (ES) and 10-year treasury note (TY). The chosen instruments are listed on the Globex platform at the Chicago Mercantile Exchange (CME). They trade around the clock five days per week, but we restrict attention to the most active hours from 9:30am to 4:00pm EST that overlap with the NYSE trading session. The data at our disposal were purchased from Tickdata (\url{https://www.tickdata.com/}). The sample covers the period from March 18, 2010 to October 14, 2021. We remove short days associated with a reduced trading schedule, which leaves $T = 2{,}891$ days for our empirical investigation.\footnote{To construct a consecutive price series, we employ a built-in procedure in the extraction software delivered by Tickdata. It rolls the front contract over into the back contract about a week before expiration.}

\begin{table}[ht!]
\setlength{ \tabcolsep}{0.30cm}
\begin{center}
\caption{Descriptive statistics of CME futures contract data.}
\label{table:cme}
\begin{small}
\begin{tabular}{lrrrrrrrrr}
\hline \hline
& & \multicolumn{8}{c}{Log-realized variance} \\
\cline{3-10}
Ticker & $N$ & Min. & 1Q & Median & Mean & 3Q & Max. & Skewness & Kurtosis \\
\hline
ES & 338.78 & -7.72 & -5.63 & -5.00 & -4.91 & -4.29 & 1.23 & 0.59 & 3.78 \\
TY & 90.95 & -8.27 & -7.08 & -6.80 & -6.69 & -6.43 & -2.80 & 1.39 & 6.42 \\
\hline \hline
\end{tabular}
\end{small}
\smallskip
\begin{scriptsize}
\parbox{\textwidth}{\emph{Note}. Ticker is the short name of the futures contract, $N$ is the average daily number of transactions (in 1000s), whereas Min., 1Q, Median, Mean, 3Q, Max., Skewness, and Kurtosis are for the distribution of the log-realized variance over the whole sample.}
\end{scriptsize}
\end{center}
\end{table}

In Table \ref{table:cme}, we report a few descriptive statistics of the retained data. The ``$N$'' column reports the average daily number of transactions (in 1000s). These futures contracts are very liquid. However, sampling the tick-by-tick data at the maximal resolution can be detrimental for volatility estimation, because of microstructure noise, such as price discreteness and bid-ask spread \citep[e.g.,][]{hansen-lunde:06b}. To alleviate this concern, we downsample to a one-minute equidistant frequency using previous tick imputation, so there are $n = 390$ log-returns available per day to excavate the volatility process.\footnote{We follow the design of the simulation section in terms of choosing the various tuning parameters that are required to calculate the spot realized variance.}

\begin{figure}[t!]
\begin{center}
\caption{Histogram of log-realized variance.}
\label{figure:cme-histogram}
\begin{tabular}{cc}
\small{Panel A: ES.} & \small{Panel B: TY.} \\
\includegraphics[height=8cm,width=0.47\textwidth]{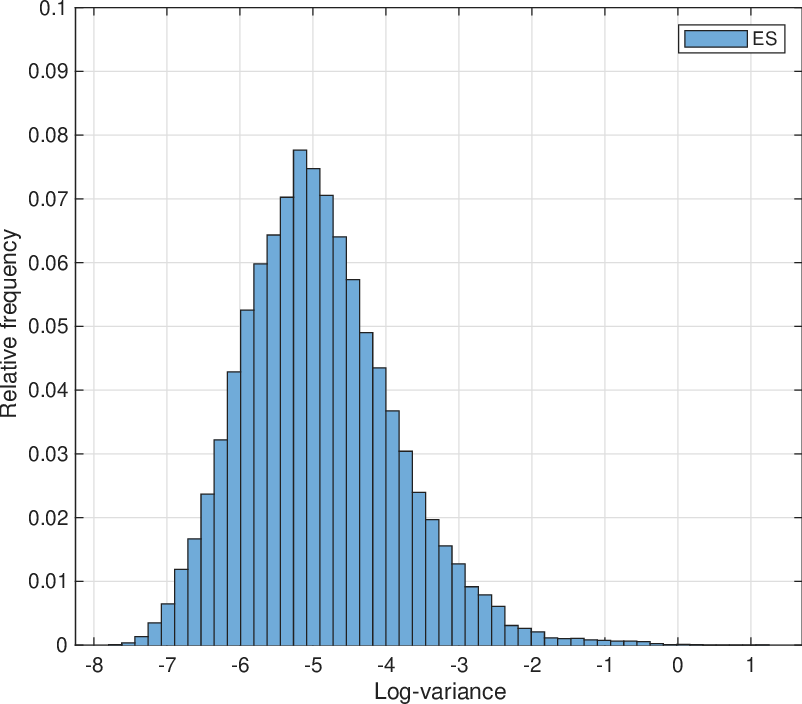} &
\includegraphics[height=8cm,width=0.47\textwidth]{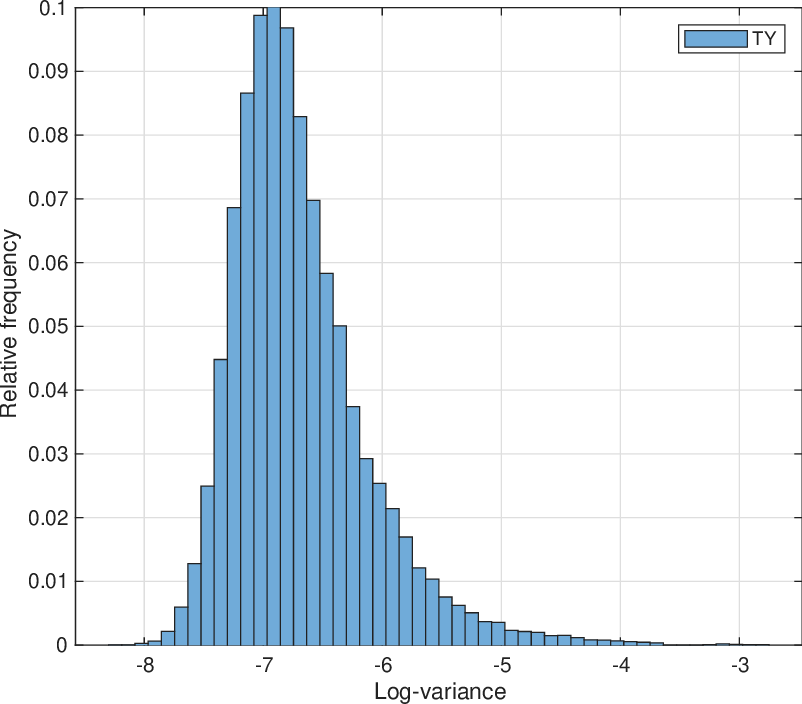}
\end{tabular}
\begin{footnotesize}
\parbox{\textwidth}{\emph{Note.} In Panel A, we show a histogram (plotted in terms of relative frequency) of the log-realized variance of ES, whereas Panel B is the associated plot for the log-realized variance of TY.}
\end{footnotesize}
\end{center}
\end{figure}

In what follows, we study the log-spot realized variance, which is less influenced by extreme outliers. We remark that by the invariance principle the copula of a random vector does not change under a strictly increasing transformation, so the realized copula of volatility is unaffected by studying the log-transform. In Figure \ref{figure:cme-histogram}, we show a relative frequency histogram for the log-realized variance of ES in Panel A, whereas Panel B is for TY. In agreement with the sample average and quantiles reported in Table \ref{table:cme}, we observe that volatility in the stock market is, on average, higher than in the bond market. Furthermore, and consistent with \citet{andersen-bollerslev-diebold-ebens:01a, andersen-bollerslev-diebold-labys:03a}, the shapes are close to the Gaussian bell curve, although both exhibit a mild positive skewness and excess kurtosis compared to the normal distribution. As evident from the right-hand side of Table \ref{table:cme}, this effect is more pronounced in TY than ES.

To provide an initial assessment on the degree of volatility codependency in these time series, Figure \ref{figure:cme-timeseries} reports the average daily log-spot realized variance for ES in Panel A and TY in Panel B. The smoothing attenuates the measurement error in the spot volatility estimator a bit to better capture any comovement. There is forceful evidence of a positive relationship, most notably during periods of elevated market distress. To support this claim, in Panel A of Figure \ref{figure:cme-variance} we draw a scatter plot of the log-realized variance of ES (on the $x$-axis) against the log-realized variance of TY (on the $y$-axis). The graph reinforces the previous impression, which is further backed by Pearson's and Kendall's sample correlation coefficients.\footnote{Kendall's tau has been transformed as $\hat{ \rho}_{ \text{K}}^{a} = \sin( \pi/2 \hat{ \rho}_{ \text{K}}^{u})$, where $a$ ($u$) denotes the adjusted (unadjusted = raw) estimator. The correction---motivated by Greiner's equality---makes it unbiased for the correlation coefficient of a random sample from a bivariate normal distribution and, hence, its scale is comparable to the Pearson's measure.}

\begin{figure}[ht!]
\begin{center}
\caption{Time series of log-realized variance.}
\label{figure:cme-timeseries}
\begin{tabular}{cc}
\small{Panel A: ES.} & \small{Panel B: TY.} \\
\includegraphics[height=8cm,width=0.47\textwidth]{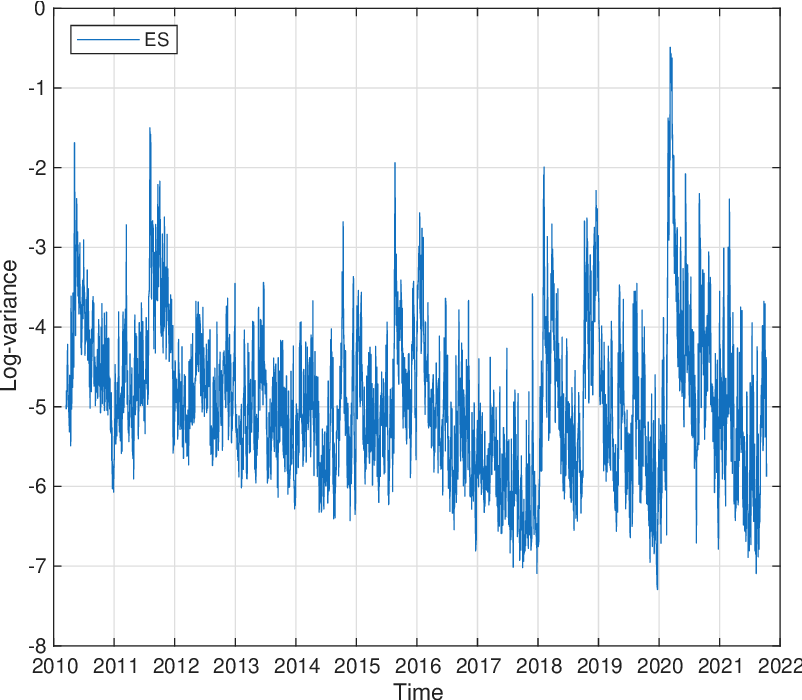} &
\includegraphics[height=8cm,width=0.47\textwidth]{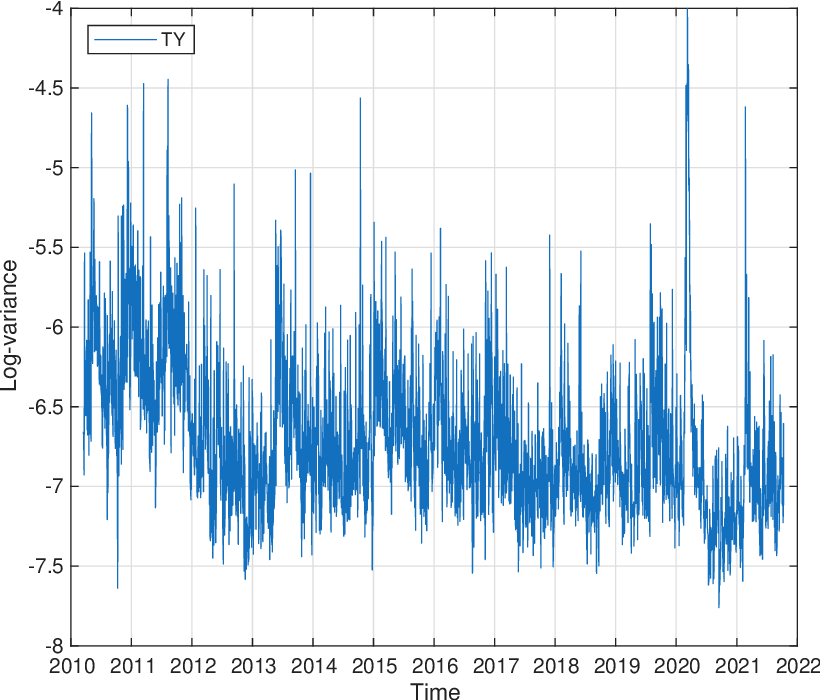}
\end{tabular}
\begin{footnotesize}
\parbox{\textwidth}{\emph{Note.} We show the time series of the average daily log-realized variance of ES and TY. The sample period is from March 18, 2010 to October 14, 2021, or $T = 2{,}891$ days in total.}
\end{footnotesize}
\end{center}
\end{figure}

The realized copula of volatility is illustrated in Panel B of Figure \ref{figure:cme-variance}. We plot 500 randomly drawn pairwise observations of $\left( \widehat{F}_{n,T} \big( \log( \hat{V}_{t}^{ \text{ES}}) \big), \widehat{G}_{n,T} \big( \log( \hat{V}_{t}^{ \text{TY}}) \big) \right)$ and add decile contours based on the whole sample. We contrast the latter with those implied by a parametric Gumbel copula, where the single parameter was estimated by maximum likelihood to be $\hat{ \theta}_{ \text{MLE}} = 1.4786$. Overall, there is a close alignment between them.

To provide an alternative graphical representation of the realized copula of volatility, we construct a couple of descriptive statistics that are widely used in the literature. The first is an exceedance measure called tail concentration. It studies the amount of probability mass in the lower-left and upper-right quadrant of a bivariate distribution function, here in the form of a copula \citep[see, e.g.,][]{joe:93a,sibuya:60a}. In particular, the lower (L) and upper (U) tail concentration functions are defined as
\begin{equation} \label{equation:tail-concentration}
{L}(z) = \mathbb{P}(U < z \mid V < z) = \frac{C(z,z)}{z} \quad \text{and} \quad
{U}(z) = \mathbb{P}(U > z \mid V > z) = \frac{1-2z+C(z,z)}{1-z},
\end{equation}
for any $z \in (0,1)$.\footnote{We note that the tail concentration function is symmetric, since its value does not change if we swap the position of $U$ and $V$, because the latter are uniformly distributed.}

\begin{figure}[ht!]
\begin{center}
\caption{Scatter plot of log-realized variance and realized copula of volatility.}
\label{figure:cme-variance}
\begin{tabular}{cc}
\small{Panel A: Scatter plot.} & \small{Panel B: Realized copula.} \\
\includegraphics[height=8cm,width=0.47\textwidth]{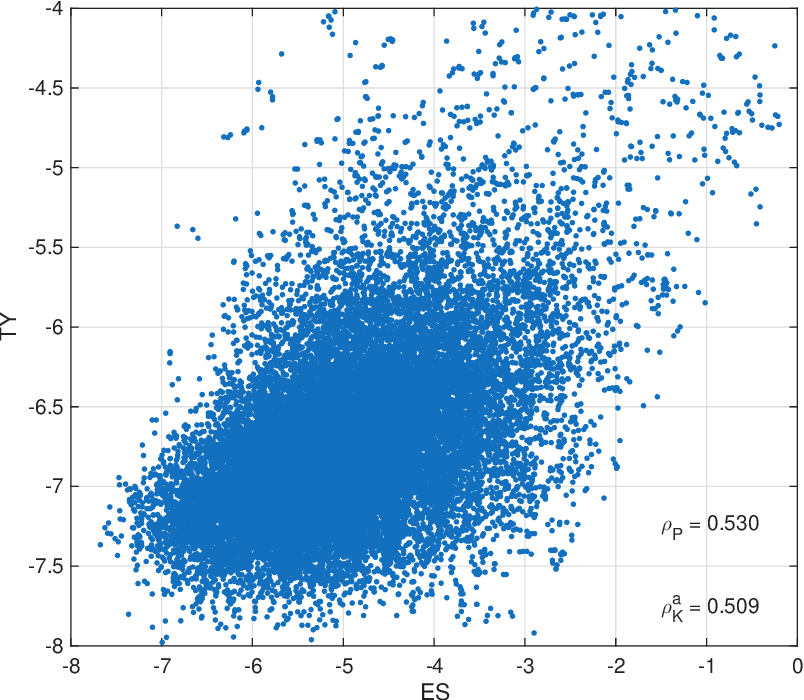} &
\includegraphics[height=8cm,width=0.47\textwidth]{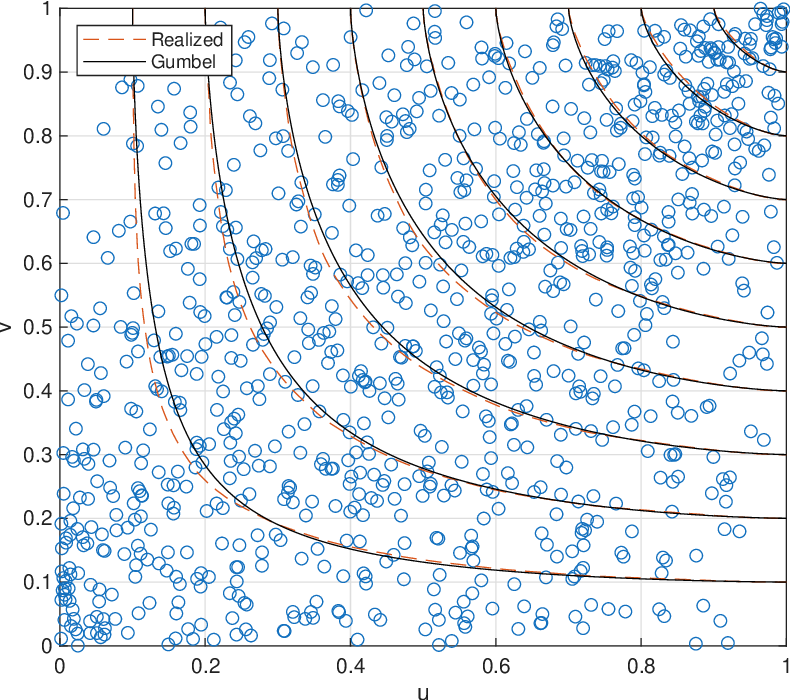}
\end{tabular}
\begin{footnotesize}
\parbox{\textwidth}{\emph{Note.} In Panel A, we show a scatter of log-realized variance of ES plotted against the log-realized variance of TY. In the lower right-hand corner, $\rho_{x}$ is the sample correlation coefficient, where $x = \text{P}$ is Pearson's linear correlation and $x = \text{K}$ is Kendall's tau rank correlation. In Panel B, we draw 500 pairwise observations at random from our nonparametric realized copula of volatility. We superimpose deciles computed over the whole sample, which we contrast with those inferred from the cumulative distribution function of a Gumbel copula (with $\hat{ \theta}_{ \text{MLE}} = 1.4786$).}
\end{footnotesize}
\end{center}
\end{figure}

At one end of the domain, $L$ is convergent toward $\lim_{z \rightarrow 1^{-}}L(z) = 1$, whereas $R$ has the limit $\lim_{z \rightarrow 0^{+}} R(z) = 1$. This holds for every copula, and so it is not very informative. In the other end, however, $L(z)$ and $R(z)$ can converge to any number in the interval $[0,1]$. Provided the limits exist, the lower and upper tail concentration parameters are defined as $\lambda_{L} = \lim_{z \rightarrow 0^{+}} L(z)$ and $\lambda_{U} = \lim_{z \rightarrow 1^{-}} R(z)$. If either limit is nonzero, we say there is asymptotic dependence in that tail. Furthermore, noting that $L(0.5) = R(0.5)$, it is common to plot $T(z) = \min(L(z),R(z))$ as the aggregate tail concentration function.

The second measure is Kendall's distribution function of a copula, which is defined as
\begin{equation} \label{equation:kendall}
K(z) = \mathbb{P}(C(U,V) \leq z).
\end{equation}
Here, $C(U,V)$ is a univariate random variable, and $(U,V)$ are uniform with bivariate distribution function $C(u,v)$. This can be interpreted as the multivariate version of the probability integral transform \citep[see, e.g.,][]{genest-rivest:93a}.\footnote{The connection between this probability measure and Kendall's tau is $\rho_{ \text{K}} = 3-4 \int_{0}^{1}K(z) \mathrm{d}z$ \citep[see, e.g.,][]{schweizer-wolff:81a}.}

\begin{figure}[ht!]
\begin{center}
\caption{Graphical representation of the realized copula of volatility.}
\label{figure:cme-copula}
\begin{tabular}{cc}
\small{Panel A: Tail concentration.} & \small{Panel B: Kendall's distribution function.} \\
\includegraphics[height=8cm,width=0.47\textwidth]{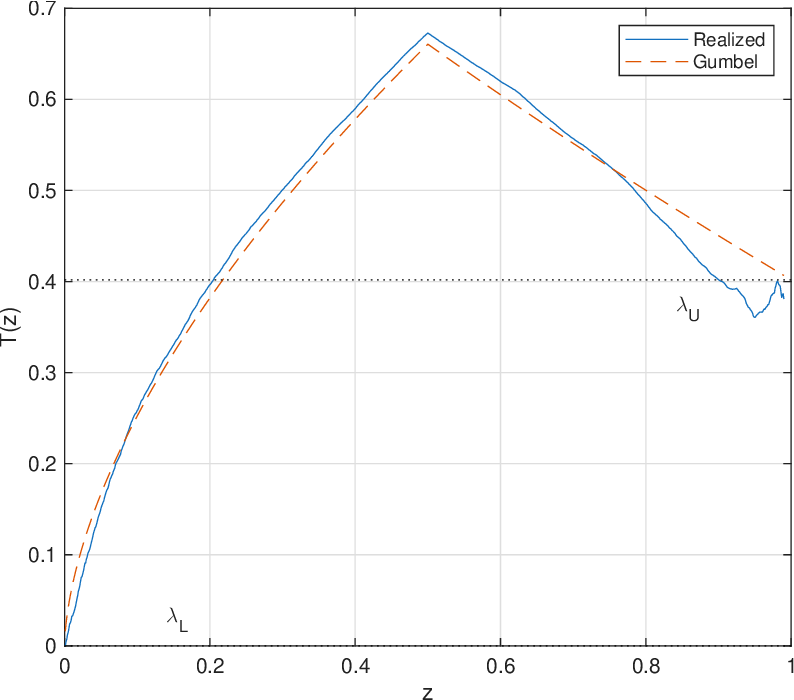} &
\includegraphics[height=8cm,width=0.47\textwidth]{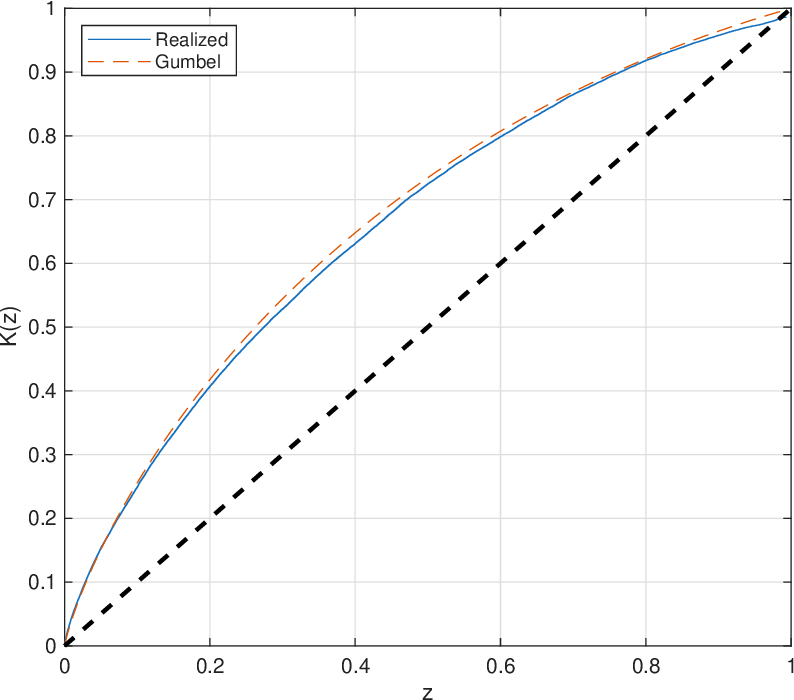}
\end{tabular}
\begin{footnotesize}
\parbox{\textwidth}{\emph{Note.} In Panel A, we show the tail concentration function of the nonparametric realized copula of volatility, which we compare to the one of the parametric Gumbel copula, where the coefficient was estimated by maximum likelihood at $\hat{ \theta}_{ \text{MLE}} = 1.4786$. Also reported (as dashed horizontal lines) are the lower and upper tail concentration parameter, which are $\lambda_{L} = 0$ and $\lambda_{U} = 2 - 2^{1/ \theta}$ for the Gumbel copula. In Panel B, we plot the Kendall's distribution function from \eqref{equation:kendall}. The 45$^{\circ}$-line is a lower bound on $K(z)$ induced by the Fr\'{e}chet-Hoeffding theorem.}
\end{footnotesize}
\end{center}
\end{figure}

We plot the estimated tail concentration function of the realized copula between the ES and TY (log-)variance processes in Panel A of Figure \ref{figure:cme-copula}. In the graph, we also report the tail dependence of the Gumbel copula, again with $\hat{ \theta}_{ \text{MLE}} = 1.4786$. We recall that for the Gumbel copula, $\lambda_{L} = 0$ and $\lambda_{U} = 2 - 2^{1/ \theta}$, which are superimposed as horizontal dotted lines. Overall, there is compelling evidence in favor of asymmetric tail dependence. While the lower tail concentration fizzles out and converges to zero, there is a nontrivial upper tail parameter, which suggests that elevated levels of volatility in the equity and bond markets are likely to occur in parallel. This feature is embedded in the Gumbel copula. Indeed, the tail concentration function of the realized copula of volatility and the associated one of the Gumbel copula are extremely close with a minimal distance, except for a small deviation in the far upper-right tail. We observe that the Gumbel copula has a Kendall’s rank correlation of $\rho_{ \text{K}} = 1 - 1 / \theta$. It thus describes variables that exhibit positive comovement, which is stronger for extremely large values of $\theta$. We estimate an upper tail concentration of about $\hat{ \lambda}_{U} = 0.40$ and a rank correlation of $\hat{ \rho}_{ \text{K}}^{a} = 0.50$.

We corroborate this in Panel B of the figure by showing the estimated Kendall's distribution function. We overlay the population counterpart of the Gumbel copula. The latter is expressible in closed-form, i.e. $K(z) = z - \log(z^{1/ \theta})$. We observe a near-perfect evolution with this measure over the entire support.

To assess goodness-of-fit, we examine a composite null hypothesis with the Gumbel, Clayton, and Independence copulas as contenders for describing the data. The Clayton and Independence copulas are both heavily rejected with a $P\text{-value}$ close to zero, however. Conversely, the Gumbel copula is more appropriate with a $P\text{-value} = 0.1136$.

Taken together, the goodness-of-fit test coupled with the descriptive analysis from Figures \ref{figure:cme-variance} -- \ref{figure:cme-copula} provide overwhelming evidence that the Gumbel copula offers a very good approximation of the codependency between the variance processes of the ES and TY futures contracts.

\section{Conclusion} \label{section:conclusion}

We propose the realized copula of volatility as a nonparametric tool for studying the codependency in the stochastic volatility component of a multivariate continuous-time asset price process. In analogy with the classical copula framework, the aim is to decouple the marginal behavior of the univariate volatility processes from their dependence structure. The distinctive challenge in this setting is that volatility is latent and has to be recovered with a realized measure constructed from discretely observed high-frequency data of the price process. Thus, inference on the realized copula of volatility requires control of measurement error.

We start by proposing a multivariate extension of the realized distribution function of volatility by \citet{li-todorov-tauchen:13a}, from which the realized copula of volatility is derived. Under in-fill asymptotics with a fixed time span, we then show that it affords a consistent estimator of the empirical copula of the latent stochastic volatility (measuring the codependency of volatility on the part of the sample path observed so far). In a double-asymptotic framework with the long-span dimension also going to infinity, our estimator converges to the stationary marginal copula of volatility, provided it exists. We derive a functional central limit theorem for the realized-copula process, which supports uniform inference on the volatility copula. To render the limit theory operational, we design a feasible estimator of the asymptotic covariance function. Based on this, we develop pointwise confidence intervals, uniform confidence bands, and a goodness-of-fit test to evaluate a hypothesis about the marginal copula of volatility.

A simulation study sheds light on the finite sample properties of the realized copula of volatility and the associated inference procedures under realistic sampling schemes. We show that it accurately recovers the empirical copula of volatility for a fixed time span and converges at the expected rate toward the marginal copula of volatility as the long-span dimension increases. The goodness-of-fit test is found to exhibit good size control and demonstrate good power.

In our empirical application, we construct the realized copula of volatility from transaction price data on futures contracts tracking the aggregate U.S. equity index and treasury bond market. The evidence presented points toward a Gumbel copula with asymptotic upper-tail dependence as being highly representative of the behavior of the realized variance processes in these data, in line with the notion that volatility across major asset classes tends to spike in tandem during financial market stress.

The asymptotic theory developed here for the latent volatility of semimartingale processes and the associated inference based on local realized proxies of spot variance can be extended in several directions in future research. First, the baseline analysis can be pursued in the presence of market microstructure noise by building on the univariate treatment of the realized distribution function of volatility in \citet{christensen-thyrsgaard-veliyev:19a}. A second avenue is to allow the realized copula of volatility to vary with the state of the economy or with observable macro-financial indicators, so that dependence in volatility can differ across regimes or along the business cycle. A third extension is to embed the volatility copula in a higher-dimensional framework together with a model for time-varying correlations, thereby delivering a unified copula-based representation of the dynamics of the full covariance matrix. These developments require additional structure but can be built directly on the econometrics and inference procedures developed in this paper.

\pagebreak

\appendix

\section{Appendix} \label{app:proofs}

In this section, we prove the theoretical results presented in the main text. Throughout, we denote by $C$ a generic positive constant. In the fixed $T$ setting, without loss of generality from the standard localization procedure, as stated in Section 4.4.1 of \citet{jacod-protter:12a}, Assumptions \ref{assumption:jump-activity} and \ref{assumption:sigma} can be strengthened into the following stronger statements.

\begin{assuB}
The processes $b^{Z}$ and $\sigma^{Z}$ are bounded. Moreover, for some bounded nonnegative function $\Gamma$ on $\mathbb{R}$, $| \delta^{Z}( \omega, t, z)| \leq \Gamma(z)$ and $\int_{ \mathbb{R}} (1 \wedge \Gamma(z))^{r} \lambda( \mathrm{d}z) < \infty$.
\end{assuB}

\begin{assuBtilde}
The processes $\tilde{b}^{Z}$ and $\tilde{ \sigma}^{Z}$ are bounded. Moreover, for some bounded nonnegative function $\tilde{ \Gamma}$ on $\mathbb{R}$, $| \tilde{ \delta}^{Z}( \omega, t, z)| \leq \tilde{ \Gamma}(z)$ and $\int_{ \mathbb{R}} (1 \wedge \tilde{ \Gamma}(z))^{ \tilde{r}} \tilde{ \lambda}( \mathrm{d}z) < \infty$.
\end{assuBtilde}

The following lemma, which extends Lemma 1 of \citet{li-todorov-tauchen:13a} to the multivariate setting, gives the consistency for the estimation of a general functional of the variance processes.
\begin{lem} \label{lemma:precursor}
Let $g: \mathbb{R}_{+}^{2} \mapsto [0,1]$ be a measurable function and $D_{g}$ be the collection of discontinuity point pairs of $g$. Suppose that
\begin{enumerate}
\item[(i)] Assumption \ref{assumption:jump-activity} holds for $r = 2$.
\item[(ii)] For Lebesgue almost everywhere $s \in [0,T]$, $\mathbb{P} \big((V_{s}^{X}, V_{s}^{Y}) \in D_{g}  \big) = 0$.
\end{enumerate}
Then, as $\Delta_{n} \rightarrow 0$,
\begin{equation*}
\int_{0}^{T}g( \hat{V}_{s}^{X}, \hat{V}_{s}^{Y}) \mathrm{d}s \overset{ \mathbb{P}}{ \longrightarrow} \int_{0}^{T}g(V_{s}^{X}, V_{s}^{Y}) \mathrm{d}s.
\end{equation*}
\end{lem}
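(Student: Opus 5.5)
The plan is to reduce the statement to pointwise-in-time consistency of the spot estimators, $\hat{V}_{s}^{Z} \overset{\mathbb{P}}{\longrightarrow} V_{s}^{Z}$ for almost every $s$, and then pass to the integral by bounded convergence. By the localization procedure of \citet[][Section 4.4.1]{jacod-protter:12a}, we work under Assumption B with $r=2$. Since $g$ takes values in $[0,1]$, both integrals lie in $[0,T]$. It therefore suffices to show
\[
\mathbb{E}\Big| \int_{0}^{T} g(\hat{V}_{s}^{X},\hat{V}_{s}^{Y}) \mathrm{d}s - \int_{0}^{T} g(V_{s}^{X},V_{s}^{Y}) \mathrm{d}s \Big| \leq \int_{0}^{T} \mathbb{E}\big| g(\hat{V}_{s}^{X},\hat{V}_{s}^{Y}) - g(V_{s}^{X},V_{s}^{Y})\big| \mathrm{d}s \rightarrow 0 .
\]
By Fubini and dominated convergence in $s$, with integrand bounded by $1$, this reduces to showing that $\mathbb{E}|g(\hat{V}_{s}) - g(V_{s})| \rightarrow 0$ for Lebesgue almost every $s \in [0,T]$.

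\textbf{Step 1 (spot consistency).} Fix $s \in [0,T)$ and let $I_{n}(s)$ be the block of length $k_{n}\Delta_{n}$ containing $s$ that is used in \eqref{equation:realized-variance}. The truncated sum splits into a continuous-part piece and a remainder. For the continuous part $Z^{c}$, the truncation is asymptotically inactive, since $\varpi < 1/2$. The standard argument gives
\[
\frac{1}{k_{n}\Delta_{n}} \sum_{j} (\Delta_{j}^{n} Z^{c})^{2} - \frac{1}{k_{n}\Delta_{n}} \int_{I_{n}(s)} V_{u}^{Z} \mathrm{d}u \overset{\mathbb{P}}{\longrightarrow} 0 .
\]
This holds because the martingale term has variance $O(1/k_{n})$ and the drift contributes $O(\Delta_{n})$. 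Because $\sigma^{Z}$ is c\`adl\`ag and $I_{n}(s)$ shrinks to $s$ (as $k_{n}\Delta_{n} \rightarrow 0$), the block average of $V^{Z}$ converges to $V_{s}^{Z}$ or $V_{s-}^{Z}$. For fixed $s$, $\mathbb{P}(\Delta \sigma_{s}^{Z} \neq 0) = 0$ outside a countable set of times, so the limit is $V_{s}^{Z}$ almost surely for almost every $s$.

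\textbf{Step 2 (jumps, the main obstacle).} The jump-and-truncation remainder must vanish even when $r = 2$, where no rate is available. I would use the general Mancini-type bound from \citet[][Lemma 13.2.6 / Theorem 9.3.2]{jacod-protter:12a}:
\[
\mathbb{E}\Big[ \Big| (\Delta_{i}^{n}Z)^{2}\mathbbm{1}_{\{|\Delta_{i}^{n}Z| \leq \alpha\Delta_{n}^{\varpi}\}} - (\Delta_{i}^{n}Z^{c})^{2} \Big| \Big] \leq \Delta_{n}\, a_{n},
\]
where $a_{n} \rightarrow 0$ is a deterministic sequence under Assumption B with $r = 2$. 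Averaging over the $k_{n}$ increments and dividing by $k_{n}\Delta_{n}$ gives an $L^{1}$ error of at most $a_{n} \rightarrow 0$, uniformly in $s$. The delicate point is verifying that $a_{n} \rightarrow 0$ without a rate, which uses $\int (\Gamma \wedge 1)^{2}\mathrm{d}\lambda < \infty$ together with dominated convergence applied to the small-jump contribution. The final block, for $t \geq \lfloor T/(k_{n}\Delta_{n}) \rfloor k_{n}\Delta_{n}$, has vanishing length and does not matter.

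\textbf{Step 3 (continuous mapping).} By Steps 1 and 2, $(\hat{V}_{s}^{X},\hat{V}_{s}^{Y}) \overset{\mathbb{P}}{\longrightarrow} (V_{s}^{X},V_{s}^{Y})$ for almost every $s$. By hypothesis (ii), the limit lies outside $D_{g}$ almost surely for almost every $s$. The continuous mapping theorem then gives $g(\hat{V}_{s}) \overset{\mathbb{P}}{\longrightarrow} g(V_{s})$. Since $g$ is bounded, this convergence also holds in $L^{1}$, which closes the argument in the first paragraph.
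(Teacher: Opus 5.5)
Your proposal is correct and follows the same route as the paper. Both arguments use pointwise-in-$s$ consistency of the spot estimator, then the continuous mapping theorem under condition (ii) with bounded convergence to get $\mathbb{E}|g(\hat{V}_{s})-g(V_{s})|\rightarrow 0$ for almost every $s$, and finally a second bounded-convergence argument in $s$.

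The differences are minor:
\begin{itemize}
\item The paper cites Lemma 1 of \citet{li-todorov-tauchen:13a} for spot consistency. You sketch that step yourself, correctly, via a rate-free Mancini/Jacod--Protter truncation bound for $r=2$.
\item The paper works with the forward-shifted estimator $\hat{V}^{Z}_{\lceil s/u_{n}\rceil u_{n}}$. Right-continuity then gives convergence at every interior $s$, at the cost of boundary pieces of length $O(u_{n})$. You instead use the block containing $s$ and discard the countably many fixed times with $\mathbb{P}(\Delta\sigma_{s}^{Z}\neq 0)>0$.
\end{itemize}
One small imprecision in your Step 1: the block average over the block containing $s$ tends along subsequences to a convex combination of $V_{s-}^{Z}$ and $V_{s}^{Z}$, not to one or the other. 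This is harmless, since at continuity points the two coincide.
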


\begin{proof}[Proof of Lemma \ref{lemma:precursor}:]
We denote $k_{n} \Delta_{n} = u_{n}$ and set $\hat{V}_{s}^{Z+} = \hat{V}_{ \lceil{ \frac{s}{u_n}} \rceil u_{n}}^{Z+}$, for $Z=X,Y$. By Lemma 1 of \citet{li-todorov-tauchen:13a}, it follows that $\hat{V}_{s}^{Z+} \overset{ \mathbb{P}}{ \longrightarrow} V_{s}$ for all $s \in \big(0, (\lfloor \frac{T}{u_{n}} \rfloor-1)u_{n} \big)$, delivering that $( \hat{V}_{s}^{X+}, \hat{V}_s^{Y+}) \overset{ \mathbb{P}}{ \longrightarrow} (V_{s}^{X}, V_{s}^{Y})$. Thus, by condition (ii) and the Bounded Convergence Theorem (BCT), for Lebesgue almost everywhere $s \in [0,T]$, we get
\begin{equation*}
\mathbb{E} \left[ |g( \hat{V}_{s}^{X+}, \hat{V}_{s}^{Y+}) - g(V_{s}^X, V_{s}^{Y})| \right] = \mathbb{E} \left[|(g( \hat{V}_{s}^{X+}, \hat{V}_{s}^{Y+}) - g(V_{s}^{X}, V_{s}^{Y})) \mathbbm{1}_{ \left \{(V_{s}^{X}, V_{s}^{Y}) \notin D_{g} \right \}}| \right] \rightarrow 0.
\end{equation*}
Furthermore,
\begin{equation*}
\int_{0}^{T}g( \hat{V}_{s}^{X}, \hat{V}_{s}^{Y}) \mathrm{d}s = \int_{0}^{( \lfloor \frac{T}{u_{n}} \rfloor -1)u_{n}}g( \hat{V}_{s}^{X+}, \hat{V}_{s}^{Y+}) \mathrm{d}s + \int_{0}^{u_{n}}g( \hat{V}_{s}^{X}, \hat{V}_{s}^{Y}) \mathrm{d}s + \int_{ \lfloor \frac{T}{u_{n}} \rfloor u_{n}}^{T}g( \hat{V}_{s}^{X}, \hat{V}_{s}^{Y}) \mathrm{d}s.
\end{equation*}
 Since the function $g$ is bounded on $[0,1]$, then
\begin{equation*}
\mathbb{E} \left( \Big| \int_{0}^{T} \big(g( \hat{V}_{s}^{X}, \hat{V}_{s}^{Y}) - g(V_{s}^{X}, V_{s}^{Y}) \big) \mathrm{d}s \Big| \right)
\leq Ku_{n} + \int_{0}^{( \lfloor \frac{T}{u_{n}} \rfloor-1)u_{n}} \mathbb{E} \left( \Big|g( \hat{V}_{s}^{X+}, \hat{V}_{s}^{Y+}) -g(V_{s}^{X}, V_{s}^{Y}) \Big| \right) \mathrm{d}s.
\end{equation*}
To complete the proof, we follow up with a second application of the BCT.
\end{proof}
We now turn our attention to the proof of Theorem \ref{theorem:consistency}, which is established with Lemma \ref{lemma:precursor} as a precursor for the concrete test function $g(\cdot, \cdot \cdot) = \mathbbm{1}_{\{ \cdot \leq x, \cdot \cdot \leq y \}}$.

\begin{proof}[Proof of Theorem \ref{theorem:consistency}:]

This is a bivariate version of Theorem 1 of \citet{li-todorov-tauchen:13a} and Theorem 3.1 of \citet{christensen-thyrsgaard-veliyev:19a}. \\

\noindent Part (1): We fix a point $(x,y) \in \mathbb R_{+}^{2}$. By Assumption \ref{asu4}, $H_{T}$ is almost surely continuous, so
\begin{equation*}
H_{T}(x,y) = H_{T}(x-,y) = H_{T}(x,y-) = H_{T}(x-,y-),
\end{equation*}
almost surely. Hence, we deduce that
\begin{equation*}
\int_{0}^{T} \mathbb{P}(V_{t}^{X} = x, V_{t}^{Y} = y) \mathrm{d}t = T \cdot \mathbb{E} \left[H_{T}(x,y) - H_{T}(x,y-) - H_{T}(x-,y) + H_{T}(x-,y-) \right]
=0.
\end{equation*}
Therefore, we get $\mathbb{P}(V_{t}^{X} = x, V_{t}^{Y} = y) = 0$ for Lebesgue almost everywhere for $t \in [0, T]$, which establishes condition (ii) in Lemma \ref{lemma:precursor}. The pointwise consistency of $\widehat{H}_{n,T}(x,y)$, namely $\widehat{H}_{n, T}(x,y) \overset{ \mathbb{P}}{ \longrightarrow} H_{T}(x,y)$, follows from Theorem 1 of \citet{li-todorov-tauchen:13a} and Lemma \ref{lemma:precursor}. Since $\widehat{H}_{n,T}$ and $H_{T}$ are bounded and non-decreasing, pointwise convergence also implies locally uniform convergence.

Since the paths of the volatility processes $\sigma_{t}^{X}$ and $\sigma_{t}^{Y}$ are c\`{a}dl\`{a}g, for any $\eta>0$, we can find $0< M< \infty$, such that
\begin{equation} \label{conb}
\mathbb{P} \left( \sup_{t \in [0,T]} \{V_{t}^{X} \vee V_{t}^{Y} \} > M \right) < \eta.
\end{equation}
This implies that $\mathbb{P}(H_T(M,M) \neq 1) < \eta$, hence, $\mathbb{P}(F_{T}(M) \neq 1) < \eta$ and $\mathbb{P}(G_{T}(M) \neq 1) < \eta$. Then, for any $\epsilon > 0$, we get the decomposition
\begin{align*}
\mathbb{P} \left( \sup_{(x,y) \in \mathbb{R}_{+}^{2}}| \widehat{H}_{n,T}(x,y) - H_{T}(x,y)| > \epsilon \right) &\leq \mathbb{P} \left( \sup_{(x,y) \in [0, M]^2}| \widehat{H}_{n,T}(x,y) - H_{T}(x,y)| > \epsilon \right) \\
&+ \mathbb{P} \left( \sup_{(x,y) \in (M, \infty) \times [0, M]}| \widehat{H}_{n,T}(x,y) - H_{T}(x,y)| > \epsilon \right) \\
&+ \mathbb{P} \left( \sup_{(x,y) \in [0, M] \times (M, \infty)}| \widehat{H}_{n,T}(x,y) - H_{T}(x,y)| > \epsilon \right) \\
&+ \mathbb{P} \left( \sup_{(x,y) \in (M, \infty)^{2}}| \widehat{H}_{n,T}(x,y) - H_{T}(x,y)| > \epsilon \right) \\
&\equiv \mathrm{I}_{n,T} + \mathrm{II}_{n,T} + \mathrm{III}_{n,T} + \mathrm{IV}_{n,T}.
\end{align*}
The locally uniform convergence of $H_{n,T}(x,y)$ implies that $\limsup_{n \rightarrow \infty} \mathrm{I}_{n,T} = 0$, which is therefore asymptotically negligible. Second,
\begin{align*}
\mathbb{P} \left( \sup_{x > M, y \leq M}| \widehat{H}_{n,T}(x,y) - H_{T}(x,y)| > \epsilon \right) &\leq \mathbb{P} \left( \sup_{x > M, y \leq M}| \widehat{H}_{n,T}(x,y ) - \widehat{H}_{n,T}(M,y)| > \epsilon/3 \right) \\
&+ \mathbb{P} \left( \sup_{y \leq M}| \widehat{H}_{n,T}(M,y) - H_{T}(M,y)| > \epsilon/3 \right) \\
&+ \mathbb{P} \left( \sup_{x> M, y \leq M}|H_{T}(M,y) - H_{T}(x,y)| > \epsilon/3 \right)\\
&\leq \mathbb{P} \left( \sup_{y \leq M}| \widehat{H}_{n,T}( \infty, y) - \widehat{H}_{n,T}(M,y)| > \epsilon/3 \right) \\
&+ \mathbb{P} \left( \sup_{y \leq M}| \widehat{H}_{n,T}(M,y) - H_{T}(M,y)| > \epsilon/3 \right) \\
&+\mathbb{P} \left( \sup_{y \leq M}|H_{T}(M,y) - H_{T}( \infty, y)| > \epsilon/3 \right).
\end{align*}
Now, by \eqref{conb} and locally uniform convergence, we obtain
\begin{align*}
\limsup_{n \rightarrow \infty} \mathbb{P} \left( \sup_{y \in [0,M]}| \widehat{H}_{n,T}( \infty,y) - \widehat{H}_{n,T}(M,y)| > \epsilon/3 \right) &\leq \limsup_{n \rightarrow \infty} \mathbb{P} \left(|1- \widehat{F}_{n,T}(M)| > \epsilon/3 \right) \\
&\leq \limsup_{n \rightarrow \infty} \mathbb{P} \left(|1-F_{T}(M)| > \epsilon/6 \right) \\
&+ \limsup_{n \rightarrow \infty} \mathbb{P} \left(| \widehat{F}_{n,T} - F_{T}(M)| > \epsilon/6 \right) \\
&< \eta.
\end{align*}
We also get that
\begin{equation*}
\limsup_{n \rightarrow \infty} \mathbb{P} \left( \sup_{y \leq M}| \widehat{H}_{n,T}(M,y) - H_{T}(M,y)| > \epsilon/3 \right) = 0,
\end{equation*}
and
\begin{equation*}
\mathbb{P} \left( \sup_{y \leq M}|H_{T}(M,y) - H_{T}( \infty, y)| > \epsilon/3) \leq \mathbb{P}(|1-F_{T}(M)| > \epsilon/3 \right) < \eta.
\end{equation*}
Taking $\eta$ arbitrarily small, we conclude that $\limsup_{n \rightarrow \infty} \mathrm {II}_{n,T} = 0$ and $\limsup_{n \rightarrow \infty} \mathrm{III}_{n,T} = 0$.

Finally, following the above line of reasoning,
\begin{align*}
\limsup_{n \rightarrow \infty}  \mathbb{P} \left( \sup_{(x,y) \in [M, \infty)^{2}}| \widehat{H}_{n,T}(x,y) - H_{T}(x,y)| > \epsilon \right) &\leq \limsup_{n \rightarrow \infty} \mathbb{ P} \left(|1- \widehat{H}_{n,T}(M, M)| > \epsilon/3 \right) \\
&+ \limsup_{n \rightarrow \infty} \mathbb{P} \left(| \widehat{H}_{n,T}(M,M) - H_{T}(M,M)| > \epsilon/3 \right) \\
&+ \limsup_{n \rightarrow \infty} \mathbb{P} \left(|1-H_{T}(M,M)| > \epsilon/3 \right) \\
&\leq 2 \eta.
\end{align*}
Again, we can choose $\eta$ as arbitrarily small as needed, so it follows that $\limsup_{n \rightarrow \infty} \mathrm{IV}_{n,T} = 0$. Hence, the proof of part (1) is done.

Part (2): By Corollary 1 in \citet{li-todorov-tauchen:13a}, which builds on Lemma 21.2 in \citet{vaart:98a}, we deduce that
\begin{equation*}
\big( \widehat{Q}_{n,T}^{X}(u), \widehat{Q}_{n,T}^{Y}(v) \big) \overset{ \mathbb{P}}{ \longrightarrow} \big(Q_{T}^{X}(u), Q_{T}^{Y}(v) \big).
\end{equation*}
Note that
\begin{align*}
| \widehat{C}_{n,T}(u,v) - C_{T}(u,v)|&\leq | \widehat{H}_{n,T}(Q_{n,T}^{X}(u), Q_{n,T}^{Y}(v)) - H_{T}(Q_{n,T}^{X}(u), Q_{n,T}^{Y}(v))| \\
&+ |H_{T}(Q_{n,T}^{X}(u), Q_{n,T}^{Y}(v)) - H_{T}(Q_{T}^{X}(u), Q_{T}^{Y}(v))|,
\end{align*}
so pointwise convergence of $C_{n,T}(u,v)$ follows from the convergence of $\widehat{H}_{n,T}(x,y)$ and pointwise continuity of $Q_{T}^{X}(u)$ and $Q_{T}^{Y}(v)$. The uniform convergence follows from part (1) and the uniform continuity of $H_{T}(x,y)$.
\end{proof}

The proof of Theorem \ref{thm2} relies on the upcoming lemma, which shows that the estimation error of the realized distribution function of volatility and the realized copula of volatility, centered around their empirical counterpart and scaled with $\sqrt{T}$, converges to zero in probability as $\Delta_{n} \rightarrow 0$ and $T \rightarrow \infty$. It extends Lemma 2 and Theorem 2 of \citet{li-todorov-tauchen:13a} for $T$ fixed.

\begin{lem} \label{lema2}
We suppose that Assumptions \ref{assumption:jump-activity} - \ref{asu4} hold and that $k_{n} = \Delta_{n}^{- \gamma}$. We choose $\frac{r-1}{r} < \varpi < \frac{1}{2}$ and $r(1- \varpi) < \gamma < 1$ for $r>1$ or $0 < \varpi < \frac{1}{2}$ and $1-(2-r) \varpi < \gamma < 1$ for $r\leq 1$. Then, under either of the following conditions:
\begin{enumerate}
\item $V_{t}^{X}$ and $V_{t}^{Y}$ are continuous on $\mathbb{R}_{+}$, and:
\begin{itemize}
\item[1).] $r\leq 1$: $\sqrt{T} \left( \Delta_{n}^{ \gamma-1+(2-r) \varpi} + \Delta_{n}^{ \gamma/2 - \iota} + \Delta_{n}^{(1- \gamma)/2 - \iota} \right) \rightarrow 0$,
\item[2).] $r > 1$: $\sqrt{T} \left( \Delta_{n}^{ \gamma/r-(1- \varpi) - \iota} + \Delta_{n}^{ \gamma/2 - \iota} + \Delta_{n}^{(1- \gamma)/2- \iota} \right) \rightarrow 0$,
\end{itemize}
\item $V_{t}^{X}$ or $V_{t}^{Y}$ are discontinuous on $\mathbb{R}_{+}$, and:
\begin{itemize}
\item $r \leq 1$: $\sqrt{T} \left( \Delta_{n}^{ \gamma-1+(2-r) \varpi} + \Delta_{n}^{ \gamma/2-\iota} + \Delta_{n}^{(1- \gamma)/2- \iota} + \Delta_{n}^{ \frac{1- \gamma}{1+ \tilde{r}}- \iota} \right) \rightarrow 0$,
\item $r > 1$: $\sqrt{T} \left( \Delta_{n}^{ \gamma/r-(1- \varpi)- \iota} + \Delta_{n}^{ \gamma/2-\iota} + \Delta_{n}^{(1- \gamma)/2- \iota} + \Delta_{n}^{ \frac{1-\gamma}{1+ \tilde{r}}- \iota} \right) \rightarrow 0$.
\end{itemize}
\end{enumerate}
Then, it follows as $\Delta_{n} \rightarrow 0$ and $T \rightarrow \infty$ that
\begin{equation*}
\sqrt{T} \left( \widehat{H}_{n,T}(x,y) - H_{T}(x,y) \right) \overset{ \mathbb{P}}{ \longrightarrow} 0 \qquad \text{and} \qquad \sqrt{T} \left( \widehat{C}_{n,T}(u,v) - C_T(u,v) \right) \overset{ \mathbb{P}}{ \longrightarrow} 0.
\end{equation*}
\end{lem}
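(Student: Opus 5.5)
We would follow the route of Lemma 2 and Theorem 2 of \citet{li-todorov-tauchen:13a}, adapting it to the bivariate case and tracking the dependence on $T$ throughout. By the long-span moment bounds of Assumption \ref{assumption:moment}, every local estimate for increments of $Z$ and $\sigma^{Z}$ holds with constants uniform in $t\in[0,\infty)$. This replaces the fixed-$T$ localization behind Assumption B and $\tilde{\mbox{B}}$.

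The first step is to decompose $\hat V^{Z}_t-V^{Z}_t$ on each block of length $u_n=k_n\Delta_n$ into four pieces:
\begin{enumerate}
\item a truncation/jump error, whose $L^1$ size is $\Delta_n^{\gamma-1+(2-r)\varpi}$ for $r\le1$ and $\Delta_n^{\gamma/r-(1-\varpi)-\iota}$ for $r>1$;
\item a martingale sampling error of order $k_n^{-1/2}=\Delta_n^{\gamma/2}$;
\item a discretization bias $\frac{1}{u_n}\int_{\text{block}}(V_s-V_t)\,\mathrm ds$, of order $u_n^{1/2}=\Delta_n^{(1-\gamma)/2}$ when $V$ is continuous;
\item when $V$ jumps, an error concentrated on the blocks containing volatility jumps larger than a cutoff, whose total time fraction is $\Delta_n^{\frac{1-\gamma}{1+\tilde r}-\iota}$ after balancing the cutoff against the small-jump remainder.
\end{enumerate}

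The second step converts these into an error on the indicator. We write
\[
\mathbbm 1_{\{\hat V^X_t\le x,\hat V^Y_t\le y\}}-\mathbbm 1_{\{V^X_t\le x,V^Y_t\le y\}},
\]
which is bounded by the sum of $\mathbbm 1_{\{|V^X_t-x|\le \varepsilon_n\}}+\mathbbm 1_{\{|V^Y_t-y|\le\varepsilon_n\}}$ and $\mathbbm 1_{\{|\hat V^X_t-V^X_t|>\varepsilon_n\}}+\mathbbm 1_{\{|\hat V^Y_t-V^Y_t|>\varepsilon_n\}}$, with $\varepsilon_n$ chosen as $\Delta_n^{-\iota}$ times the rate of the volatility error. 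Integrating over $[0,T]$ and dividing by $T$ gives two contributions:
\begin{enumerate}
\item The first pair of indicators contributes $F_T(x+\varepsilon_n)-F_T(x-\varepsilon_n)$ plus the analogous term for $G_T$. By Assumption \ref{asu4}, with $\mathbb E[\partial_\bullet H_T]\le C$ uniformly, its expectation is $O(\varepsilon_n)$.
\item The second pair is handled by Markov's inequality with high moments, available from Assumption \ref{assumption:moment}. The extra $\Delta_n^{-\iota}$ in $\varepsilon_n$ makes this term smaller order.
\end{enumerate}
Thus $\mathbb E|\widehat H_{n,T}(x,y)-H_T(x,y)|\le C\,d_n$ (or $C\,d_n'$), with $C$ independent of $T$. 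The stated rate conditions then give $\sqrt T(\widehat H_{n,T}-H_T)\to0$ in $L^1$, hence in probability. In the continuous-volatility case the jump term is simply absent.

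The third step is the copula statement. We write $\widehat C_{n,T}(u,v)-C_T(u,v)$ as
\[
\big[\widehat H_{n,T}-H_T\big](\widehat Q^X_{n,T}(u),\widehat Q^Y_{n,T}(v))
+H_T(\widehat Q^X_{n,T}(u),\widehat Q^Y_{n,T}(v))-H_T(Q^X_T(u),Q^Y_T(v)).
\]
For the first term we need the $\sqrt T$-negligibility uniformly over a compact region, not just pointwise. We get it by a bracketing/grid argument: $\widehat H_{n,T}$ and $H_T$ are monotone, and the grid has mesh $o(T^{-1/2})$ in $H_T$-increment, using the derivative bound. This costs only a power $\Delta_n^{-\iota}$ in the count of grid points, and that loss is absorbed by $\iota$. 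The second term is handled by a mean-value expansion with $\partial_xH_T,\partial_yH_T$. It reduces to $\sqrt T(\widehat Q^X_{n,T}(u)-Q^X_T(u))$, and via the inverse-map (Vervaat/Bahadur-type) argument to $\sqrt T(\widehat F_{n,T}-F_T)$ near $Q^X_T(u)$ divided by $\partial_x F_T$. Here the condition $\partial_\bullet H_T\neq0$ is needed, and this term is already shown negligible.

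We expect the main obstacle to be the fourth error piece, when volatility jumps. Bounding the time spent in blocks containing volatility jumps, uniformly in $T$ and at rate $\Delta_n^{\frac{1-\gamma}{1+\tilde r}}$, requires the jump-activity balancing argument from \citet{li-todorov-tauchen:13a} redone with the long-span moment bounds. A second source of difficulty is controlling the random derivative $\partial_x F_T$ from below in the quantile inversion. We would first handle it on an event where this derivative exceeds a small constant, and then let that constant shrink.
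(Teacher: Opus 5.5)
\textbf{Where your route departs from the paper.} Your treatment of $\widehat H_{n,T}$ is fine, and more explicit than the paper's. The paper simply imports $\sup_t|\hat V_t^X-V_t^X|=O_P(d_n)$ and $\beta_n^X=\sup_x|\widehat F_{n,T}(x)-F_T(x)|=O_P(d_n)$ from \citet{li-todorov-tauchen:13a}. Your band-plus-tail split with a deterministic $\varepsilon_n$ uses Assumption \ref{asu4} exactly in the pointwise-in-expectation form in which it is stated.

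For the copula, the paper never inverts anything. It sets $\alpha_n^X=\sup_t|\widehat F_{n,T}(\hat V_t^X)-F_T(V_t^X)|$ and bounds the indicator difference by $\mathbbm{1}_{\{u-\alpha_n^X\le F_T(V_t^X)\le u+\alpha_n^X\}}$. Since $F_T$ is continuous, the occupation measure of $t\mapsto F_T(V_t^X)$ on $[0,T]$ is exactly uniform on $[0,1]$, so this integrates to at most $2\alpha_n^X$. Hence $|\widehat C_{n,T}(u,v)-C_T(u,v)|\le 2(\alpha_n^X+\alpha_n^Y)$ pathwise. The bivariate problem thus reduces to univariate sup-errors, with no quantile expansion and no use of $\partial_\bullet H_T\neq0$.

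\textbf{The gap: the uniformity step in Step 3.} You need $\sqrt T[\widehat H_{n,T}-H_T]$ to vanish at the random point $(\widehat Q^X_{n,T}(u),\widehat Q^Y_{n,T}(v))$, and similarly $\sqrt T(\widehat F_{n,T}-F_T)$ ``near'' $Q_T^X(u)$. Your grid argument does not deliver this, for two reasons.
\begin{itemize}
\item A grid whose cells have $H_T$-increment $o(T^{-1/2})$ needs at least of order $\sqrt T$ points per coordinate. That is a fixed power of $\Delta_n^{-1}$, since the rate conditions only give $T=o(d_n^{-2})$; it is not $\Delta_n^{-\iota}$. A union bound built on your first-moment rate $d_n$ would therefore require at least $Td_n\to0$, which is not assumed.
\item Assumption \ref{asu4} bounds $\partial_\bullet H_T$ only in expectation at each point. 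On a deterministic grid it controls each cell increment of $F_T$ in mean, but not their maximum. A random grid (quantiles of $F_T$) fixes the increments, but then your pointwise-in-expectation bounds no longer apply.
\end{itemize}
The missing ingredient is a uniform local-mass bound $\sup_x\sqrt T\,[F_T(x+\varepsilon_n)-F_T(x-\varepsilon_n)]\overset{\mathbb P}{\longrightarrow}0$, and likewise for $G_T$. It follows from three facts: asymptotic equicontinuity of $\sqrt T(F_T-F)$, which is the tightness part of the proof of Theorem \ref{thm2} and does not use this lemma; the bound $f=\mathbb E[f_T]\le C$; and $\sqrt T\varepsilon_n\to0$. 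With it, your own pathwise inequality, whose tail term does not depend on $(x,y)$, is uniform in $(x,y)$ at once, and no grid is needed. The paper's bound \eqref{equation:OrderOfFV} on $\sup_t|F_T(\hat V_t^X)-F_T(V_t^X)|$ tacitly needs the same ingredient.

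\textbf{The density lower bound is unnecessary.} Your plan to let that constant shrink would require $1/f_T(Q_T^X(u))$ to be tight uniformly in $T$, which no assumption provides. Instead, use $|H_T(x',y')-H_T(x,y)|\le|F_T(x')-F_T(x)|+|G_T(y')-G_T(y)|$ together with $|F_T(\widehat Q^X_{n,T}(u))-u|\le 3\sup_x|\widehat F_{n,T}(x)-F_T(x)|$. The latter follows from $\widehat F_{n,T}(\widehat Q^X_{n,T}(u)-)\le u\le \widehat F_{n,T}(\widehat Q^X_{n,T}(u))$ and the continuity of $F_T$. The second term of your decomposition is then controlled by the same univariate sup-error as in the paper, with no Bahadur-type inversion.
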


\begin{proof}[Proof of Lemma \ref{lema2}:] 1) If the volatility processes $V^{X}$ and $V^{Y}$ are continuous, we define
\begin{align*}
&\eta_{n}^{X} = \sup_{t \in[0, \infty)}| \hat{V}_{t}^{X} - V_{t}^{X}|, \quad   \eta_{n}^{Y} = \sup_{t \in[0, \infty)}| \hat{V}_{t}^{Y} - V_{t}^{Y}|, \\
&\beta_{n}^{X} = \sup_{x \in[0, \infty)}| \widehat{F}_{n,T}(x) - F_{T}(x)|, \quad \beta_{n}^{Y} = \sup_{y \in[0, \infty)}| \widehat{G}_{n,T}(y) - G_T(y)|.
\end{align*}
Following Lemma 2 and Theorem 2 of \citet{li-todorov-tauchen:13a}, we know that $\eta_{n}^{X}$ and  $\beta_{n}^{X}$ are both of order $O_{P}(d_{n})$, with
 \begin{align*}
 d_{n} = \left\{
\begin{array}{ll}
\Delta_{n}^{ \gamma-1 + (2-r) \varpi} + \Delta_{n}^{ \gamma/2- \iota} + \Delta_{n}^{(1-\gamma)/2- \iota}, &r \leq 1, \\[0.25cm]
\Delta_{n}^{ \gamma/r-(1- \varpi)- \iota} + \Delta_{n}^{ \gamma/2- \iota} + \Delta_{n}^{(1- \gamma)/2- \iota}, & r>1.
\end{array}
\right.
\end{align*}
Now, let $\alpha_{n}^{X} \equiv \sup_{t \in[0, \infty)}| \widehat{F}_{n,T}( \hat{V}_{t}^{X}) - F_{T}(V_{t}^{X})|$ and $\alpha_{n}^{Y} \equiv \sup_{t \in[0, \infty)}| \widehat{G}_{n,T}( \hat{V}_{t}^{Y}) - G_{T}(V_{t}^{Y})|$. Then, \begin{align*}
\left| \widehat{C}_{n,T}(u,v) - C_{T}(u,v) \right| &= \frac{1}{T} \int_{0}^{T} \left| \mathbbm{1}_{ \left\{ \widehat{F}_{n,T}( \hat{V}_{t}^{X}) \leq u, \widehat{G}_{n,T}( \hat{V}_{t}^{Y}) \leq v \right\}} - \mathbbm{1}_{ \left\{ F_{T}({V}_{t}^{X}) \leq u, G_{T}({V}_{t}^{Y}) \leq v \right\}} \right| \mathrm{d}t \\
&\leq \frac{1}{T} \int_{0}^{T} \left| \mathbbm{1}_{ \{ \widehat{F}_{n,T}( \hat{V}_{t}^{X}) \leq u \}} - \mathbbm{1}_{ \{F_{T}({V}_{t}^{X}) \leq u \}} \right| \mathbbm{1}_{ \{ \widehat{G}_{n,T}( \hat{V}_{t}^{Y}) \leq v\}} \mathrm{d}t \\
&+ \frac{1}{T} \int_{0}^{T} \left| \mathbbm{1}_{ \{ \widehat{G}_{n,T}( \hat{V}_{t}^{Y}) \leq v \}} - \mathbbm{1}_{ \{G_{T}({V}_{t}^{Y}) \leq v \}} \right| \mathbbm{1}_{ \{ \widehat{F}_{n,T}({V}_{t}^{X}) \leq u \}} \mathrm{d}t \\
&\leq \frac{1}{T} \int_{0}^{T} \mathbbm{1}_{ \{u - \alpha_{n}^{X} \leq F_{T}({V}_{t}^{X}) \leq u + \alpha^{X}_{n} \}} \mathrm{d}t + \frac{1}{T} \int_{0}^{T} \mathbbm{1}_{ \{v - \alpha_{n}^{Y} \leq G_{T}({V}_{t}^{Y}) \leq v + \alpha_{n}^{Y} \}} \mathrm{d}t.
\end{align*}
Notice that
\begin{equation*}
\frac{1}{T} \int_{0}^{T} \mathbbm{1}_{ \{F_{T}({V}_{t}^{X}) \leq u \}} \mathrm{d}t = \frac{1}{T} \int_{0}^{T} \mathbbm{1}_{ \{V_{t}^{X} \leq Q_{T}^{X}(u) \}} \mathrm{d}t = F_{T}(Q_{T}^{X}(u)) = u.
\end{equation*}
Hence,
\begin{equation*}
\sqrt{T} \left| \widehat{C}_{n,T}(u,v) - C_{T}(u,v) \right| \leq K \sqrt{T}( \alpha_{n}^{X} + \alpha_{n}^{Y}).
\end{equation*}
To calculate the order of $\alpha_{n}^{X}$, we observe that
\begin{align*}
\alpha_{n}^{X} &\leq \sup_{t \in[0, \infty)}| \widehat{F}_{n,T}( \hat{V}_{t}^{X}) - F_{T}( \hat{V}_{t}^{X})| +
\sup_{t \in[0, \infty)}|F_{T}( \hat{V}_{t}^{X}) - F_{T}(V_{t}^{X})| \\
&\leq \beta_{n}^{X} + \sup_{t \in[0, \infty)}|F_{T}( \hat{V}_{t}^{X}) - F_{T}(V_{t}^{X})|.
\end{align*}
To handle the second term in the inequality, namely $\sup_{t \in[0, \infty)}|F_{T}( \hat{V}_{t}^{X}) - F_{T}(V_{t}^{X})|$, we follow the proof of Lemma 2(b) in \citet{li-todorov-tauchen:13a} by observing that
\begin{equation} \label{equation:OrderOfFV}
\sup_{t \in[0, \infty)}|F_{T}( \hat{V}_{t}^{X}) - F_{T}(V_{t}^{X})| \leq \int_{ \min \{V_{t}^{X}, \hat{V}_{t}^{X} \}}^{ \max \{V_{t}^{X}, \hat{V}_{t}^{X} \}} f_{T}(z) \mathrm{d}z = O_{P}(d_{n}).
\end{equation}
Similarly, we can get that
\begin{eqnarray*}
\alpha_{n}^{Y} = O_P(d_{n}).
\end{eqnarray*}
Therefore, if the conditions in 1) hold, we obtain
\begin{equation*}
\sqrt{T}( \alpha_{n}^{X} + \alpha_{n}^{Y}) = o_{P}(1).
\end{equation*}
2) To deal with the situation, where the volatility processes can jump, we introduce the set
\begin{equation*}
\mathcal{I}_{n}( \epsilon) = \left\{0 \leq i \leq \lfloor T/u_{n} \rfloor-1 : \mu \left(I_{n,i} \times \left \{z: \tilde{ \Gamma}(z) > \epsilon \right \} \right) = 0 \right \}.
\end{equation*}
where $I_{n,i} = [iu_{n}, (i+1)u_{n})$. We also denote $B_{n}( \epsilon) = \cup_{i \in \mathcal{I}_{n}( \epsilon)} I_{n,i}$ and $B_{n}^{c}( \epsilon) = [0,T] \setminus B_{n}( \epsilon)$. Here, $\mathcal{I}_{n}( \epsilon)$ is a set of time indices that exclude the locations of the big jumps in volatility of size larger than $\epsilon$. In this case,
\begin{equation*}
\left| \widehat{C}_{n,T}(u,v) - C_{T}(u,v) \right| \leq \frac{1}{T} \sum_{k=1}^{ \lfloor T \rfloor} \int_{t_{k-1}}^{t_k} \left| \mathbbm{1}_{ \{ \widehat{F}_{n,T}( \hat{V}_{t}^{X}) \leq u \}} - \mathbbm{1}_{ \{F_{T}({V}_{t}^{X}) \leq u \}} \right| \mathrm{d}t + \frac{1}{T}.
\end{equation*}
where $t_{k} = { \lfloor k/u_{n} \rfloor}u_{n}$. For every $k$, it holds that
\begin{align*}
\int_{t_{k-1}}^{t_k} \left| \mathbbm{1}_{ \{ \widehat{F}_{n,T}( \hat{V}_{t}^{X}) \leq u \}} - \mathbbm{1}_{ \{F_{T}({V}_{t}^{X}) \leq u \}} \right| \mathrm{d}t &= \int_{[t_{k-1}, t_{k}] \cap B_{n}( \epsilon)} \left| \mathbbm{1}_{ \{ \widehat{F}_{n,T}( \hat{V}_{t}^{X}) \leq u \}} - \mathbbm{1}_{ \{F_{T}({V}_{t}^{X}) \leq u \}} \right| \mathrm{d}t \\
&+ \int_{[t_{k-1}, t_{k}] \cap B_{n}^{c}( \epsilon)} \left| \mathbbm{1}_{ \{ \widehat{F}_{n,T}( \hat{V}_{t}^{X}) \leq u \}} - \mathbbm{1}_{ \{F_{T}({V}_{t}^{X}) \leq u \}} \right| \mathrm{d}t.
\end{align*}
Reusing \citet[][Lemma 5]{li-todorov-tauchen:13a}, we deduce that
\begin{equation*}
\int_{[t_{k-1}, t_{k}] \cap B_{n}( \epsilon)} \left| \mathbbm{1}_{ \{ \widehat{F}_{n,T}( \hat{V}_{t}^{X}) \leq u \}} - \mathbbm{1}_{ \{F_{T}({V}_{t}^{X}) \leq u \}} \right| \mathrm{d}t = O_{P}(d_{n}),
\end{equation*}
and
\begin{equation*}
\int_{[t_{k-1}, t_{k}] \cap B_{n}^{c}( \epsilon)} \left| \mathbbm{1}_{ \{ \widehat{F}_{n,T}( \hat{V}_{t}^{X}) \leq u \}} - \mathbbm{1}_{ \{F_{T}({V}_{t}^{X}) \leq u \}} \right| \mathrm{d}t = O_{P}(d_{n}').
\end{equation*}
Here, $d_{n}' = \Delta_{n}^{(1- \gamma)/(1+ \tilde{r}) - \theta}$ and both equalities hold uniformly in $k$.

Therefore, if the conditions in 2) hold,
\begin{equation*}
\sqrt{T} \left( \widehat{C}_{n,T}(u,v) - C_{T}(u,v) \right) \overset{ \mathbb{P}}{ \longrightarrow} 0.
\end{equation*}
The proof that $\sqrt{T}( \widehat{H}_{n,T}(x,y) - H_T(x,y)) \overset{ \mathbb{P}}{ \longrightarrow} 0$ follows this route too.
\end{proof}

Now, we deliver the proof of Theorem \ref{thm2}. This is based on an application of the central limit theorem for mixing stationary sequences of bounded random variables (see Theorem 18.5.4 in \cite{ibragimov:75a}, Theorem 5.2 in \citet{dehay:05a}, and Lemma \ref{lema2}).

\begin{proof}[Proof of Theorem \ref{thm2}:] We define the empirical process:
\begin{equation*}
 G_{T} \equiv \left \{ \sqrt{T}(H_{T}(x,y) - H(x,y)), (x,y) \in \mathbb{R}_{+}^{2} \right\}.
\end{equation*}
To prove the weak convergence $G_{T}(x,y) \Rightarrow \mathcal{G}(x,y)$, we need to verify (1) convergence of finite-dimensional distributions and (2) tightness. To this end, we write
\begin{equation*}
H_{T}(x,y) - H(x,y) = \frac{1}{T} \sum_{k=1}^{ \lfloor T \rfloor} Z_{k}(x,y) + \frac{1}{T} \int_{ \lfloor T \rfloor}^{T} \mathbbm{1}_{ \left \{V_{t}^{X} \leq x, V_{t}^{Y} \leq y \right\}} \mathrm{d}t,
\end{equation*}
where $Z_{k}(x,y) = \int_{k-1}^{k} \mathbbm{1}_{ \left \{V_{t}^{X} \leq x, V_{t}^{Y} \leq y \right \}} \mathrm{d}t - H(x,y)$. The second term on the right side of the equality is a remainder term, for which we immediately conclude that
\begin{equation*}
\sqrt{T} \frac{1}{T} \int_{ \lfloor T \rfloor}^{T} \mathbbm{1}_{ \left \{V_{t}^{X} \leq x, V_{t}^{Y} \leq y \right\}} \mathrm{d}t \leq \frac{1}{ \sqrt{T}} \rightarrow 0,
\end{equation*}
almost surely. Thus, it suffices to show that
\begin{equation*}
\zeta_{m}(x,y) \equiv \frac{1}{ \sqrt{m}} \sum_{k=1}^{m} Z_{k}(x,y) \Rightarrow  \mathcal{G},
\end{equation*}
where $m = \lfloor T \rfloor$.

(1) \textit{Finite-dimensional convergence in law}: First, for a fixed coordinate $(x,y) \in \mathbb{R}_{+}^{2}$, we note that $Z_{k}(x,y)$ is a sequence of bounded random variables. So, letting $\alpha(n)$ be the strong mixing coefficient of the sequence $(Z_{k})_{k=1}^{ \infty}$, we deduce that $\alpha(n) \leq \alpha_{n-1}$. Hence, by Assumption \ref{asu5}
\begin{equation*}
\sum_{n=1}^{ \infty} \alpha(n) \leq \sum_{t=0}^{ \infty} \alpha_{t} \leq C \sum_{t=0}^{ \infty} t^{-(1+ \gamma)} < \infty.
\end{equation*}
Then, we can deduce by Theorem 18.5.4 of \cite{ibragimov:75a} that
\begin{eqnarray*}
\zeta_{m}(x,y) \overset{d}{ \longrightarrow} N \left(0, \sigma^{2}(x,y) \right).
\end{eqnarray*}
Here, the long-run variance is given by
\begin{eqnarray*}
\sigma^{2}(x,y) = 2 \int_{0}^{ \infty}(H_{t}(x,y) - H(x,y)^{2}) \mathrm{d}t,
\end{eqnarray*}
with $H_{t}(x,y) = \mathbb{P}(V_{0}^{X} \leq x, V_{0}^{Y} \leq y, V_{t}^{X} \leq x, V_{t}^{Y} \leq y)$.

More generally, for any finite collection $\{(x_{i}, y_{i}) \}_{i=1}^{r}$ with $(x_{i},y_{i}) \in \mathbb{R}_{+}^{2}$ and any $v = (v_{1}, \dots, v_{r})^{ \top} \in \mathbb{R}^{r}$, we aim to explain the behavior of the linear combination
\begin{equation*}
L_{m} = \sum_{i=1}^{r} v_{i} \zeta_{m}(x_{i},y_{i}) = \frac{1}{ \sqrt{m}} \sum_{k=1}^{m} D_{k}
\end{equation*}
where
\begin{equation*}
D_{k} = \sum_{i=1}^{r} v_{i}Z_{k}(x_{i},y_{i}).
\end{equation*}
Next, $\{D_{k} \}$ is again a bounded, strong mixing sequence with mixing coefficient
\begin{equation*}
\sum_{h=1}^{ \infty} \alpha_{D}(h) \leq r \sum_{h=1}^{ \infty} \alpha(h) < \infty,
\end{equation*}
so by Theorem 18.5.4 in \cite{ibragimov:75a}, we once again conclude that
\begin{equation*}
L_{m} \overset{d}{ \longrightarrow} N(0, \sigma_{L}^{2}),
\end{equation*}
with
\begin{equation*}
\sigma_{L}^{2} = \mathrm{var}(D_{0}) + 2 \sum_{h=1}^{ \infty} \mathrm{cov}(D_{0}, D_{h}) = v^{ \top} \Sigma_{r} v,
\end{equation*}
in which $\Sigma_{r} = ( \Sigma_{ij})_{1 \leq i,j \leq r}$, and
\begin{equation*}
\Sigma_{ij} = \lim_{m \rightarrow \infty} \mathrm{cov} \left( \zeta_{m}(x_{i},y_{i}), \zeta_{m}(x_{j},y_{j}) \right) = 2 \int_{0}^{ \infty} \big( H_{t}(x_{i},y_{i},x_{j},y_{j}) - H(x_{i},y_{i})H(x_{j},y_{j}) \big) \mathrm{d}t,
\end{equation*}
since the covariance between $\zeta_{m}(x_{i},y_{i})$ and $\zeta_{m}(x_{j},y_{j})$ is given by
\begin{align*}
\mathrm{cov} \left( \zeta_{m}(x_{i},y_{i}), \zeta_{m}(x_{j},y_{j}) \right) &= \frac{1}{m} \int_{0}^{m} \int_{0}^{m} \Big( H_{s,t}(x_{i},y_{i},x_{j},y_{j}) - H(x_{i},y_{j}) H(x_{j},y_{j}) \Big) \mathrm{d}t \mathrm{d}s \\
&= \frac{1}{m} \int_{0}^{m} \int_{0}^{m} \big(H_{|t-s|}(x_{i},y_{i},x_{j},y_{j}) - H(x_{i},y_{i}) H(x_{j},y_{j}) \big) \mathrm{d}t \mathrm{d}s \\
&= \frac{2}{m} \int_{0}^{m} \int_{s}^{m} \big(H_{t-s}(x_{i},y_{i},x_{j},y_{j}) - H(x_{i},y_{i}) H(x_{j},y_{j}) \big) \mathrm{d}t \mathrm{d}s \\
&= 2 \int_{0}^{m} \bigg(1- \frac{t}{m} \bigg) \big(H_{t}(x_{i},y_{i},x_{j},y_{j}) - H(x_{i},y_{i}) H(x_{j},y_{j})\big) \mathrm{d}t.
\end{align*}
Here, $H_{s,t}(x_{i},y_{i},x_{j},y_{j}) = \mathbb{P} \big(V_{s}^{X} \leq x_{i}, V_{s}^{Y} \leq y_{i}, V_{t}^{X} \leq x_{j}, V_{t}^{Y} \leq y_{j} \big)$, and the stationary condition in Assumption \ref{asu5} is applied for the second equality in the display.

As $v$ was arbitrary, the Cram\'{e}r-Wold device implies that
\begin{eqnarray*}
\bigl( \zeta_{m}(x_{1},y_{1}), \dots, \zeta_{m}(x_{r},y_{r}) \bigr) \overset{d}{ \longrightarrow}  N_{r} \bigl( \mathbf{0}, \Sigma_{r} \bigr).
\end{eqnarray*}

(2) \textit{Tightness}: To verify tightness of the sequence of processes $\{ \zeta_{m}(x,y) \}$, we extend the arguments presented in the proof of Theorem 5.2 from \citet{dehay:05a}. In particular, we deduce that
\begin{equation*}
\limsup_{m \rightarrow \infty} \mathbb{P}^{*} \Bigg( \sup_{ \rho \left((x_{i},y_{i}), (x_{j},y_{j}) \right) < \delta}| \zeta_{m}(x_{i},y_{i}) - \zeta_{m}(x_{j},y_{j})| > \eta \Bigg) < \epsilon,
\end{equation*}
for all $\eta > 0$ and $\epsilon > 0$. Here, $\rho \left((x_{i},y_{i}), (x_{j},y_{j}) \right) = |H(x_{i},y_{i}) - H(x_{j},y_{j})|$ and $\mathbb{P}^{*}$ is the outer probability measure associated with $\mathbb{P}$. Hence, $G_{T}(x,y) \Rightarrow \mathcal{G}(x,y)$.

Next, we show that
\begin{equation*}
\sqrt{T} \left(C_{n,T}(u,v) - C(u,v) \right) \Rightarrow \mathcal{C}(u,v).
\end{equation*}
Consider the mapping $\phi(H)(u,v) = H(Q^{X}(u), Q^{Y}(v))$. By \citet[][Lemma 2]{fermanian-radulovic-wegkamp:04a}, the mapping $\phi$ is Hadamard differentiable at the point $H^{ \ast} (\cdot, \cdot \cdot) = H(Q^{X}( \cdot), Q^{Y}( \cdot \cdot))$. Moreover, the Hadamard derivative of $\phi$ at this point is given by
\begin{align*}
\phi_{H^{ \ast}}^{ \prime}(h)(u,v) = h \big(Q^{X}(u), Q^{Y}(v) \big) &- \frac{ \partial H}{ \partial x}(Q^{X}(u), Q^{Y}(v)) \cdot \frac{h(Q^{X}(u), \infty)}{f(Q^{X}(u))} \\
& - \frac{ \partial H}{ \partial y}(Q^{X}(u), Q^{Y}(v)) \cdot \frac{h( \infty, Q^Y(v))}{g(Q^Y(v))}.
\end{align*}
By the previous result, namely $G_{T} \Rightarrow \mathcal{G}$, and the functional delta method, we obtain
\begin{equation*}
\sqrt{T} \left(C_{T}(x,y) - C(x,y) \right) \Rightarrow \phi^{ \prime}_{H^{ \ast}}( \mathcal{G}),
\end{equation*}
where the limit process can be written as
\begin{equation*}
\phi^{ \prime}_{H^{ \ast}}( \mathcal{G}) = \mathcal{G}(Q^{X}(u), Q^{Y}(v)) - \frac{ \partial C}{ \partial u}(u,v) \mathcal{G}(Q^{X}(u), \infty) -\frac{ \partial C}{ \partial v}(u,v) \mathcal{G}( \infty, Q^{Y}(v)),
\end{equation*}
with
\begin{align*}
\frac{ \partial C}{ \partial u}(u,v) &= \frac{ \partial H(Q^{X}(u), Q^{Y}(v))}{ \partial x} \cdot \frac{1}{f(Q^{X}(u))}, \\
\frac{ \partial C}{ \partial v}(u,v) &= \frac{ \partial H(Q^{X}(u), Q^{Y}(v))}{ \partial y} \cdot \frac{1}{g(Q^{Y}(v))}.
\end{align*}
In view of Lemma \ref{lema2}, the proof is complete.
\end{proof}

\begin{proof}[Proof of Corollary \ref{cor1}:] The proof builds on Theorem 4.2 in \citet{dehay:05a} and Lemma \ref{lema2} above. It is enough to prove that
\begin{align*}
&\int_{0}^{T^{ \xi}} \left\{\widehat{H}_{t,n,T}(x,y,x',y') - H_{t,T}(x,y,x',y') \right\} \mathrm{d}t \overset{ \mathbb{P}}{ \longrightarrow} 0, \\
&\int_{0}^{T^{ \xi}} \left\{ \widehat{H}_{n,T}(x,y) \widehat{H}_{n,T}(x',y') - H_{T}(x,y)H_{T}(x',y') \right\} \mathrm{d}t \overset{ \mathbb{P}}{ \longrightarrow} 0, \\
&2\int_{0}^{T^{ \xi}} \left\{H_{t,T}(x,y,x',y') - H_{T}(x,y)H_{T}(x',y') \right\} \mathrm{d}t - C_{ \mathcal{G}}(x,y,x',y') \overset{ \mathbb{P}}{ \longrightarrow} 0.
\end{align*}
Here, $H_{t,T}(x,y,x',y') = \frac{1}{T-T^{ \xi}} \int_{0}^{T-T^{ \xi}} 1_{ \left\{{V}_{s}^{X} \leq x, {V}_{s}^{Y} \leq y,{V}_{s+t}^{X} \leq x', {V}_{s+t}^{Y} \leq y' \right\}} \mathrm{d}s$.

The convergence of the third term follows directly from the proof of Proposition 4.2 in \citet{dehay:05a}.

To establish convergence of the first term, we observe that
\begin{align*}
\left| \widehat{H}_{t,n,T}(x,y,x',y') - H_{t,T}(x,y,x',y') \right| &\leq \frac{1}{T-T^{ \xi}} \int_{0}^{T-T^{ \xi}} \big|\mathbbm{1}_{ \left\{ \hat{V}_{s}^{X} \leq x, \hat{V}_{s}^{Y} \leq y \right\}} - \mathbbm{1}_{ \left\{{V}_{s}^{X} \leq x, {V}_{s}^{Y} \leq y \right\}} \big| \mathrm{d}s \\
&\quad+ \frac{1}{T-T^{ \xi}} \int_{t}^{T-T^{ \xi}+t} \big|\mathbbm{1}_{ \left\{ \hat{V}_{s}^{X} \leq x', \hat{V}_{s}^{Y} \leq y' \right\}} - \mathbbm{1}_{ \left\{{V}_{s}^{X} \leq x', {V}_{s}^{Y} \leq y' \right\}} \big| \mathrm{d}s.
\end{align*}
Here, we apply the inequality $|ab - cd| \leq |a-c| + |b-d|$, for $a,b,c,d \in [0,1]$, and the change of variable $r = s+t$.

Exploiting the arguments presented in the proof for $\widehat H_{n,T}(x,y)-H_T(x,y)$ in Lemma \ref{lema2}, we can show that
\begin{equation*}
\sup_{t \in[0,T^{ \xi}]} \left|\widehat{H}_{t,n,T}(x,y,x',y') - H_{t,T}(x,y,x',y') \right| = o_{P} \left( \frac{1}{ \sqrt{T}} \right).
\end{equation*}
Hence,
\begin{equation*}
\left| \int_{0}^{T^{ \xi}} \left\{\widehat{H}_{t,n,T}(x,y,x',y') - H_{t,T}(x,y,x',y') \right\} \mathrm{d}t \right| = o_{P} \left(T^{ \xi - \frac{1}{2}} \right).
\end{equation*}
Finally, we deal with the asymptotic negligibility of the second term. By Lemma \ref{lema2}, we get
\begin{align*}
&\left| \int_{0}^{T^{ \xi}} \left\{ \widehat{H}_{n,T}(x,y) \widehat{H}_{n,T}(x',y') - H_{T}(x,y)H_{T}(x',y') \right\} \mathrm{d}t \right| \\
&\leq T^{ \xi} \left( \left| \widehat{H}_{n,T}(x,y) - H_{T}(x,y) \right| + \left| \widehat{H}_{n,T}(x',y') - H_{T}(x',y') \right| \right) = o_{P} \left(T^{ \xi - \frac{1}{2}} \right).
\end{align*}
Since $\xi \in(0,1/3)$, the proof is complete.
\end{proof}

\begin{proof}[Proof of Corollary \ref{cor2}:] By an algebraic manipulation, we note that
\begin{align*}
\left|c_{n,T}(u, v) - c(u,v) \right| &\leq \frac{1}{Th^{2}} \int_{0}^{T} \left|K \left( \frac{ \widehat{F}_{n,T}( \hat{V}_{t}^{X})-u}{h}, \frac{ \widehat{G}_{n,T}( \hat{V}_{t}^{Y})-v}{h} \right) - K \left( \frac{F_{T}({V}_{t}^{X})-u}{h}, \frac{G_{T}({V}_{t}^{Y})-v}{h} \right) \right| \mathrm{d}t \\
&+\frac{1}{Th^{2}} \int_{0}^{T} \left|K \left( \frac{F_{T}({V}_{t}^{X})-u}{h}, \frac{G_{T}({V}_{t}^{Y})-v}{h} \right) - K \left( \frac{F({V}_{t}^{X})-u}{h}, \frac{G({V}_{t}^{Y})-v}{h} \right) \right| \mathrm{d}t \\
&+ \left| \frac{1}{Th^{2}} \int_{0}^{T}K \left( \frac{F({V}_{t}^{X})-u}{h}, \frac{G({V}_{t}^{Y})-v}{h} \right) \mathrm{d}t - c(u,v) \right|.
\end{align*}
First, we get by \eqref{equation:OrderOfFV} that
\begin{equation} \label{parta}
\begin{aligned}
&\sup_{t \in [0, \infty)} \left|K \left( \frac{ \widehat{F}_{n,T}( \hat{V}_{t}^{X})-u}{h}, \frac{ \widehat{G}_{n,T}( \hat{V}_{t}^{Y})-v}{h} \right) - K \left( \frac{F_{T}({V}_{t}^{X})-u}{h}, \frac{G_{T}({V}_{t}^{Y})-v}{h} \right) \right| \\
&\leq \frac{C}{h} \left( \sup_{t \in [0, \infty)} \left| \widehat{F}_{n,T}( \hat{V}_{t}^{X}) - F_{T}({V}_{t}^{X}) \right| + \sup_{t \in [0, \infty)} \left| \widehat{G}_{n,T}( \hat{V}_{t}^{Y}) - G_{T}({V}_{t}^{Y}) \right| \right) \\
&= O_{P} \left( \frac{d_{n}}{h} \right).
\end{aligned}
\end{equation}
Since $\sqrt{T}(F_T(x)-F(x))$ converges to a Brownian bridge by Theorem  \ref{thm2}, we can readily deduce that $\sup_{x \in[0, \infty)} \sqrt{T} \big(F_{T}(x) - F(x) \big) = O_{P}(1)$. This yields
\begin{equation} \label{partb}
\begin{aligned}
&\sup_{t \in [0, \infty)} \left|K \left( \frac{F_{T}({V}_{t}^{X})-u}{h}, \frac{G_{T}({V}_{t}^{Y})-v}{h} \right) - K \left( \frac{F({V}_{t}^{X})-u}{h}, \frac{G({V}_{t}^{Y})-v}{h} \right) \right| \\
&\leq \frac{C}{h} \left( \sup_{x \in [0, \infty)} \left|F_{T}(x) - F(x) \right| + \sup_{y \in [0, \infty)} \left|G_{T}(y) - G(y) \right| \right) \\
&= O_{P} \left( \frac{1}{ \sqrt{T}h} \right).
\end{aligned}
\end{equation}
Moreover, as $\big(F(V_{t}^{X}), G(V_{t}^{Y}) \big) \in [0,1]^{2}$ has a density $c(u,v)$:
\begin{align*}
\mathbb{E} \left[ \frac{1}{h^{2}}K \left( \frac{F({V}_{t}^{X})-u}{h}, \frac{G({V}_{t}^{Y})-v}{h} \right) \right] &= \frac{1}{h^{2}} \int_{0}^{1} \int_{0}^{1} K \left( \frac{w-u}{h}, \frac{z-v}{h} \right)c(w,z) \mathrm{d}w \mathrm{d}z \\
&= \int_{0}^{1} \int_{0}^{1} K \left(p, q \right)c(u+ph, v+qh) \mathrm{d}p \mathrm{d}q \\
&= c(u,v) + O_{P}(h).
\end{align*}
Hence, in the above meaning $\frac{1}{Th^{2}} \int_{0}^{T} K \left( \frac{F({V}_{t}^{X})-u}{h}, \frac{G({V}_{t}^{Y})-v}{h} \right) \mathrm{d}t$ is an asymptotically unbiased estimator of $c(u,v)$. Following this line of reasoning, we also conclude that
\begin{equation*}
\mathbb{E} \left[ \frac{1}{h^{2}}K \left( \frac{F({V}_{t}^{X})-u}{h}, \frac{G({V}_{t}^{Y})-v}{h} \right) \cdot \frac{1}{h^{2}}K \left( \frac{F({V}_{s}^{X})-u}{h}, \frac{G({V}_{s}^{Y})-v}{h} \right) \right] = c_{|t-s|}(u,v,u,v) + O_{P}(h).
\end{equation*}
Here, $c_{t}( \cdot, \cdot, \cdot, \cdot)$ is the joint density function of $(F({V}_{0}^{X}), G({V}_{0}^{Y}), F({V}_{t}^{X}), G({V}_{t}^{X}))$. Therefore, we obtain
\begin{equation} \label{partc}
\begin{aligned}
\mathrm{var} \left[ \frac{1}{Th^{2}} \int_{0}^{T} K \left( \frac{F({V}_{t}^{X})-u}{h}, \frac{G({V}_{t}^{Y})-v}{h} \right) \mathrm{d}t \right] &= \frac{1}{T} \int_{0}^{T} \left(c_{t}(u,v,u,v) - c(u,v)^{2} \right) \mathrm{d}t + O_{P}(h) \\
&= O_{P} \left( \frac{1}{T}+h \right).
\end{aligned}
\end{equation}
The latter is obtained by Assumption \ref{asu5}, which implies a Castellana-Leadbetter-type condition \citep[see, for instance, Equation (2.1) of][]{dehay:05a}. In turn, this establishes that
\begin{equation*}
\int_{0}^{T} \left(c_{t}(u,v,u,v) - c(u,v)^{2} \right) \mathrm{d}t = O_{P}(1).
\end{equation*}
Combining \eqref{parta} -- \eqref{partc}, we get
\begin{eqnarray*}
\left|c_{n,T}(u,v) - c(u,v) \right| = O_{P} \left( \frac{d_{n}}{h^{3}} \vee \frac{1}{ \sqrt{T}h^{3}} \vee \left( \frac{1}{ \sqrt{T}} + \sqrt{h} \right) \right).
\end{eqnarray*}
The proof is complete by the chosen order of $h$ and $T$ (relative to $\Delta_{n}$).
\end{proof}

\begin{proof}[Proof of Corollary \ref{corband}:]
First, Theorem \ref{thm2} shows the the functional convergence
\begin{equation*}
\sqrt{T} \big( \widehat{C}_{n,T}(u,v) - C(u,v) \big) \Rightarrow \mathcal{C} \quad \text{in} \quad \big( \mathbb{D}([0,1]^{2}), \norm{ \cdot }_{ \infty} \big),
\end{equation*}
where $\norm{f}_{ \infty} = \sup_{(u,v) \in [0,1]^{2}} |f(u,v)|$ and $\mathcal{C}$ is the centered Gaussian process defined in \eqref{definition:C} with almost surely continuous sample paths on $[0,1]^2$.

Now, consider the mapping $\Psi \colon \mathbb{D}([0,1]^{2}) \longrightarrow \mathbb{R}$:
\begin{equation*}
\Psi(f) = \sup_{(u,v) \in[0,1]^{2}} |f(u,v)|.
\end{equation*}
For any $f,g \in \mathbb{D}([0,1]^2)$,
\begin{equation*}
\bigl| \Psi(f)- \Psi(g) \bigr| = \Bigl| \sup_{(u,v)}|f(u,v)| - \sup_{(u,v)}|g(u,v)|  \Bigr| \leq \sup_{(u,v)}|f(u,v) - g(u,v)| = \|f-g \|_{ \infty}.
\end{equation*}
Hence, $\Psi$ is Lipschitz continuous in this metrized space.

Since $\sqrt{T}( \widehat{C}_{n,T}(u,v) - C(u,v)) \Rightarrow \mathcal{C}$ in $\big( \mathbb{D}([0,1]^{2}), \norm{ \cdot}_{ \infty} \big)$ and $\Psi$ is continuous on this space, the continuous mapping theorem yields
\begin{equation*}
\Psi \big( \sqrt{T}( \widehat{C}_{n,T}(u,v) - C(u,v)) \big) \Rightarrow \Psi( \mathcal{C}).
\end{equation*}
That is,
\begin{equation*}
\sup_{(u,v) \in[0,1]^{2}} \sqrt{T} \big( \widehat{C}_{n,T}(u,v) - C(u,v) \big) \overset{d}{ \longrightarrow} \sup_{(u,v) \in[0,1]^{2}} \mathcal{C}(u,v),
\end{equation*}
as $\Delta_{n} \rightarrow 0$ and $T \rightarrow \infty$. This establishes \eqref{equation:sup}.
\end{proof}

\begin{proof}[Proof of Proposition \ref{confidenceband}:]

We divide the proof in three parts. \\

(1) \textit{Finite-dimensional Gaussian approximation}: We start by fixing a finite grid $\{(u_{i}, v_{j}) : 1 \leq i,j \leq m \} \subset[0,1]^{2}$ and define
\begin{equation*}
Z_{n,T}(u_{i},v_{j}) = \sqrt{T} \big( \widehat{C}_{n,T}(u_{i},v_{j}) - C(u_{i},v_{j}) \big).
\end{equation*}
together with the corresponding $m^{2}$-dimensional random vector
\begin{equation*}
Z_{n,T}^{(m)} = \big(Z_{n,T}(u_{i},v_{j}) \big)_{1 \leq i,j \leq m}.
\end{equation*}
By Theorem \ref{thm2} and the continuous mapping theorem, as $\Delta_{n} \rightarrow 0$ and $T \rightarrow \infty$,
\begin{equation*}
Z_{n,T}^{(m)} \Rightarrow Z^{(m)},
\end{equation*}
where $Z^{(m)} = \big( \mathcal{C}(u_{i},v_{j}) \big)_{1 \leq i,j \leq m}$ is a centered Gaussian vector with covariance matrix
\begin{equation*}
\Sigma_{m} = \big\{ \mathrm{avar}_{ \mathcal{C}}(u_{i}, v_{j}, u_{k}, v_{l}) \big\}_{1 \leq i,j,k,l \leq m}.
\end{equation*}
By Corollary \ref{cor1},
\begin{equation*}
\widehat{ \Sigma}_{m} = \Big\{ \widehat{ \mathrm{avar}}_{ \mathcal{C}}(u_{i},v_{j},u_{k},v_{l}) \Big\}_{1 \leq i,j,k,l \leq m} \overset{ \mathbb{P}}{ \longrightarrow} \Sigma_{m}.
\end{equation*}
Conditional on $\widehat{C}_{ \mathcal{C}}$, we generate
\begin{equation*}
\widehat{Z}_{n,T}^{(m)} = \big( \widehat{ \mathcal{C}}_{n,T}(u_{i},v_{j}) \big)_{1 \leq i,j \leq m} \sim N \big(0, \widehat{ \Sigma}_{m} \big).
\end{equation*}
Since $\widehat{ \Sigma}_{m} \overset{ \mathbb{P}}{ \longrightarrow} \Sigma_{m}$, the characteristic functions of $N(0, \widehat{ \Sigma}_{m})$ converge pointwise to those of
$N(0, \Sigma_{m})$. Hence, by L\'{e}vy's continuity theorem,
\begin{equation*}
\mathcal{L} \big( \widehat{Z}_{n,T}^{(m)} \mid \widehat{C}_{ \mathcal{C}} \big) = \mathcal{L} \big(N(0, \widehat{ \Sigma}_{m}) \big) \overset{ \mathbb{P}}{ \longrightarrow} \mathcal{L} \big(N(0, \Sigma_{m}) \big) = \mathcal{L} \big(Z^{(m)} \big),
\end{equation*}
In words, the conditional law of $\widehat{Z}_{n,T}^{(m)}$ converges in
probability to that of $Z^{(m)}$.

Let $\phi_{m} : \mathbb{R}^{m^{2}} \rightarrow \mathbb{R}$ be the functional
$\phi_{m}(z) = \max_{1 \leq i,j \leq m}|z_{ij}|$. By the continuous mapping theorem, for any continuity point $c>0$ of the distribution function of $\phi_{m}(Z^{(m)})$,
\begin{equation*}
\mathbb{P} \Big( \phi_{m}( \widehat{Z}_{n,T}^{(m)}) \leq c \mid \widehat{C}_{ \mathcal{C}} \Big) \overset{ \mathbb{P}}{ \longrightarrow} \mathbb{P} \big( \phi_{m}(Z^{(m)}) \leq c \big).
\end{equation*}
Or, equivalently,
\begin{equation*}
F_{n,T}^{(m)}(c \mid \widehat{C}_{ \mathcal{C}}) \overset{ \mathbb{P}}{ \longrightarrow} F_{m}(c),
\end{equation*}
where
\begin{equation*}
F_{n,T}^{(m)}(c \mid \widehat{C}_{ \mathcal{C}}) = \mathbb{P} \left( \sup_{1 \leq i,j \leq m} \big| \widehat{ \mathcal{C}}_{n,T}(u_{i},v_{j}) \big| \leq c \mid \widehat{C}_{ \mathcal{C}} \right) \quad \mathrm{and} \quad F_{m}(c) = \mathbb{P} \left( \sup_{1 \leq i,j \leq m}| \mathcal{C}(u_{i},v_{j})| \leq c \right).
\end{equation*}

(2) \textit{Quantile consistency}:
We now let $m \rightarrow \infty$ such that the grid becomes dense. Then, since $\mathcal{C}$ has continuous sample paths on $[0,1]^{2}$, almost surely, it follows that
\begin{equation*}
\sup_{1 \leq i,j \leq m}| \mathcal{C}(u_{i},v_{j})| \rightarrow
\sup_{(u,v) \in[0,1]^{2}}| \mathcal{C}(u,v)|,
\end{equation*}
almost surely, so the sequence of distribution functions $(F_{m})_{m=1}^{ \infty}$ converge pointwise to
\begin{equation*}
\lim_{m \rightarrow \infty} F_{m}(c) = F(c) = \mathbb{P} \left( \sup_{(u,v) \in[0,1]^{2}}| \mathcal{C}(u,v)| \leq c \right),
\end{equation*}
at every continuity point $c$ of $F$. Hence, for any fixed compact interval $[0,M]$ and $\varepsilon > 0$, we can first choose $m$ large enough so that
\begin{equation*}
\sup_{c \in[0,M]}|F_{m}(c) - F(c)| < \varepsilon,
\end{equation*}
and then choose $(n,T)$ sufficiently large so that for every $\varepsilon, \eta > 0$,
\begin{equation*}
\mathbb{P} \left( \sup_{c \in [0,M]} \Big|F_{n,T}^{(m)}(c \mid \widehat{C}_{ \mathcal{C}}) - F_{m}(c) \Big| > \varepsilon \right) < \eta.
\end{equation*}
Combining the two inequalities yields
\begin{equation*}
\sup_{c \in[0,M]} \Big|F_{n,T}^{(m)}(c \mid \widehat{C}_{ \mathcal{C}}) - F(c) \Big| \overset{ \mathbb{P}}{ \longrightarrow} 0.
\end{equation*}
Hence, $F_{n,T}^{(m)}( \cdot \mid \widehat{C}_{ \mathcal{C}}) \overset{ \mathbb{P}}{ \longrightarrow} F(\cdot)$ uniformly on compact subsets of $(0, \infty)$.

To complete part (b), let
\begin{equation*}
q_{1- \alpha}^{CB} = \inf \{c>0:F(c) \geq 1- \alpha \} \quad \mathrm{and} \quad
\widehat{q}_{1- \alpha}^{CB} = \inf \{c>0:F_{n,T}^{(m)}(c \mid \widehat{C}_{ \mathcal{C}}) \geq 1- \alpha \}
\end{equation*}
denote the unique $(1- \alpha)$-quantile of $F( \cdot)$ and $F_{n,T}^{(m)}(\cdot \mid \widehat{C}_{ \mathcal{C}})$, respectively. The uniform convergence of $F_{n,T}$ to $F$ and continuity of $F$ at $q_{1- \alpha}^{CB}$ imply that
\begin{equation*}
\widehat{q}_{1- \alpha}^{CB} \overset{ \mathbb{P}}{ \longrightarrow} q_{1- \alpha}^{CB},
\end{equation*}
see \citet[][Lemma 21.2]{vaart:98a}.

(3) \textit{Asymptotic coverage of the confidence band}: In the final part, we set
\begin{equation*}
\mathcal{X}_{n,T} = \sup_{(u,v) \in [0,1]^{2}} \sqrt{T} \big( \widehat{C}_{n,T}(u,v) -C(u,v) \big) \quad \mathrm{and} \quad \mathcal{X} = \sup_{(u,v) \in [0,1]^{2}} \mathcal{C}(u,v).
\end{equation*}
By Corollary \ref{corband}, $\mathcal{X}_{n,T} \Rightarrow \mathcal{X}$ as $\Delta_{n} \rightarrow 0$ and $T \rightarrow \infty$, and furthermore $\widehat{q}_{1- \alpha}^{CB} \overset{ \mathbb{P}}{ \longrightarrow} q_{1- \alpha}^{CB}$ with
$\mathbb{P}( \mathcal{X}X \leq q_{1- \alpha}^{CB}) = 1- \alpha$. An application of Slutsky's lemma then delivers
\begin{equation*}
\mathbb{P} \left( \mathcal{X}_{n,T} \leq \widehat{q}_{1- \alpha}^{CB} \right) \rightarrow \mathbb{P} \left( \mathcal{X} \leq q_{1- \alpha}^{CB} \right) = 1- \alpha.
\end{equation*}
Equivalently,
\begin{equation*}
\mathbb{P} \left( \sup_{(u,v) \in[0,1]^{2}} \bigl| \widehat{C}_{n,T}(u,v) - C(u,v) \bigr| \leq \widehat{q}_{1- \alpha}^{CB} / \sqrt{T} \right) \rightarrow 1 -\alpha,
\end{equation*}
which shows that the uniform confidence band
\begin{equation*}
\left\{C: \sup_{(u,v) \in[0,1]^{2}} \bigl| \widehat{C}_{n,T}(u,v) - C(u,v) \bigr|
\leq \widehat{q}_{1- \alpha}^{CB} / \sqrt{T} \right\}
\end{equation*}
has asymptotic coverage $1- \alpha$.
\end{proof}

\begin{proof}[Proof of Corollary \ref{cortest}:]

Let
\begin{equation*}
G_{n,T}(u,v) = \sqrt{T} \big( \widehat{C}_{n,T}(u,v) - C_{0}(u,v) \big)
\qquad (u,v) \in [0,1]^{2},
\end{equation*}
be the empirical process associated with $\mathcal{H}_{0}: C = C_{0}$. By Theorem \ref{thm2},
\begin{equation*}
G_{n,T} \Rightarrow \mathcal{C} \quad \text{in} \quad \big( \mathbb{D}([0,1]^{2}), \norm{ \cdot}_{ \infty} \big),
\end{equation*}
under $\mathcal{H}_{0}$, where $\mathcal{C}$ is the centered Gaussian process from Corollary \ref{corband}, which has almost surely continuous, and bounded, sample paths on $[0,1]^{2}$.

Now, consider the functional $\Phi: \mathbb{D}([0,1]^{2}) \rightarrow \mathbb{R}_{+}$ defined as
\begin{equation*}
\Phi(f) = \int_{0}^{1} \int_{0}^{1} f^{2}(u,v) \mathrm{d}u \mathrm{d}v.
\end{equation*}
For any sequence $(f_{n})_{n=1}^{ \infty}$ such that $f_{n} \rightarrow  f$ in $\big( \mathbb{D}([0,1]^2), \norm{ \cdot}_{ \infty} \big)$:
\begin{equation*}
\sup_{(u,v) \in [0,1]^{2}} \big|f_{n}(u,v)^{2} - f(u,v)^{2} \big| \leq 2 \bigg( \sup_{(u,v)}|f_{n}(u,v)| + \sup_{(u,v)}|f(u,v)| \bigg) \, \sup_{(u,v)}|f_{n}(u,v) - f(u,v)|.
\end{equation*}
Since $\norm{f_{n}}_{ \infty} \rightarrow \norm{f}_{ \infty} < \infty$, the right-hand side converges to zero, and so $f_{n}^{2} \rightarrow f^{2}$ uniformly on $[0,1]^2$. It thus follows that
\begin{equation*}
\lim_{n \rightarrow \infty} \Phi(f_{n}) = \lim_{n \rightarrow \infty} \int_{0}^{1} \int_{0}^{1} f_{n}^{2}(u,v) \mathrm{d}u \mathrm{d}v = \int_{0}^{1} \int_{0}^{1} f^{2}(u,v) \mathrm{d}u \mathrm{d}v =\Phi(f),
\end{equation*}
showing that $\Phi$ is continuous on $\big( \mathbb{D}([0,1]^2), \norm{ \cdot}_{ \infty} \big)$.

So, by the continuous mapping theorem
\begin{equation*}
\Phi \big(G_{n,T} \big) = \int_{0}^{1} \int_{0}^{1}G_{n,T}^{2}(u,v) \mathrm{d}u \mathrm{d}v \overset{d}{ \longrightarrow} \int_{0}^{1} \int_{0}^{1} \mathcal{C}^{2}(u,v) \mathrm{d}u \mathrm{d}v,
\end{equation*}
as $\Delta_{n} \rightarrow 0$ and $T \rightarrow \infty$. This establishes \eqref{eq:limitsup}.
\end{proof}
\pagebreak

% LIST OF REFERENCES

\renewcommand{\baselinestretch}{1.0}
\normalsize
\bibliographystyle{rfs}
\bibliography{userref}

\end{document}